\documentclass[11pt]{article}
\usepackage[utf8]{inputenc}

\usepackage{setspace}
\usepackage{mathtools}
\usepackage[hidelinks]{hyperref}
\usepackage{color}
\usepackage{algorithm}
\usepackage{algpseudocode}
\usepackage{epsfig}
\usepackage{epstopdf}
\usepackage[section]{placeins}
\usepackage{graphicx}
\usepackage{amsmath, amssymb}
\usepackage{amsfonts}
\usepackage{url}
\usepackage{multirow}
\usepackage{array}
\usepackage[footnotesize]{subfigure}
\usepackage{latexsym}
\usepackage{natbib}
\usepackage{booktabs}
\usepackage{multicol}
\setcitestyle{authoryear,open={(},close={)}}
\usepackage{pgfplots}
\usepackage{pgfplotstable}
\usepackage{stackengine}
\usetikzlibrary{calc,patterns}
\usepackage{enumitem}

\usepackage{bbm}

\allowdisplaybreaks

\usepackage{amsthm}
\usepackage{xpatch}
\makeatletter
\AtBeginDocument{\xpatchcmd{\@thm}{\thm@headpunct{.}}{\thm@headpunct{}}{}{}}
\makeatother
\newtheorem{theorem}{Theorem}[]
\newtheorem*{theorem*}{Theorem}

\newtheorem{proposition}{Proposition}
\newtheorem{corollary}{Corollary}
\theoremstyle{definition}
\newtheorem{definition}{Definition}
\newtheorem{assumption}{Assumption}

\newcommand{\vs}{\vspace*{3mm}}
\newcommand{\lt}{\left}
\newcommand{\rt}{\right}
\newcommand{\argmin}{\mathop{\rm argmin}}

\def\be{{\bf e}}

\def\bI{{\bf I}}

\def\bp{{\bf p}}

\def\bR{{\bf R}}

\def\bS{{\bf S}}
\def\bs{{\bf s}}

\def\bX{{\bf X}}
\def\bx{{\bf x}}

\def\bZ{{\bf Z}}
\def\bz{{\bf z}}

\def\bLambda{{\bf \Lambda}}
\def\blambda{\boldsymbol\lambda}
\def\bmu{{\boldsymbol\mu}}

\def\bPi{{\bf \Pi}}

\def\bSigma{{\bf \Sigma}}

\def\cL{{\cal L}}

\def\cN{{\cal N}}

\def\sE{{\sf E}}

\def\sP{{\sf P}}

\def\bbR{\mathbb{R}}

\def\rd{{\rm d}}

\def\sone{\mathbbm{1}}

\def\rd{{\rm d}}

\def\blot{\quad {$\vcenter{\vbox{\hrule height.4pt
             \hbox{\vrule width.4pt height.9ex \kern.9ex \vrule
width.4pt}
             \hrule height.4pt}}$}}

\mathtoolsset{showonlyrefs}
\usepackage{lipsum}
\usepackage{rotating}

\usepackage[colorinlistoftodos,textsize=tiny
			]{todonotes}

\usepackage{setspace}

\begin{document}

\title{Efficient Monte Carlo Valuation of Corporate Bonds in Financial Networks}
\author{Dohyun Ahn\thanks{Department of Systems Engineering and Engineering Management, E-mail: \href{mailto:dohyun.ahn@cuhk.edu.hk}{\tt dohyun.ahn@cuhk.edu.hk}}\\
 {\small \it The Chinese University of Hong Kong}
 \vspace{10pt}\\
 Agostino Capponi\thanks{Department of Industrial Engineering and Operations Research and Columbia Business School, E-mail: \href{mailto:ac3827@columbia.edu}{\tt ac3827@columbia.edu}
 }\\
 {\small \it Columbia University}
 }

\date{\today}
\maketitle
\begin{abstract}
\noindent 

Valuing corporate bonds in systemic economies is challenging due to intricate webs of inter-institutional exposures. When a bank defaults, cascading losses propagate through the network, with payments determined by a system of fixed-point equations lacking closed-form solutions. Standard Monte Carlo methods cannot capture rare yet critical default events, while existing rare-event simulation techniques fail to account for higher-order network effects and scale poorly with network size.  To overcome these challenges, we propose a novel approach---\emph{Bi-Level Importance Sampling with Splitting}---and characterize individual bank defaults by decoupling them from the network's complex fixed-point dynamics. This separation enables a two-stage estimation process that directly generates samples from the banks' default events. We demonstrate theoretically that the method is both scalable and asymptotically optimal, and validate its effectiveness through numerical studies on empirically observed networks.

\vs
\noindent
{\sc Keywords: Financial Networks, Importance Sampling, Monte Carlo Method, Eisenberg-Noe Framework, Bond Valuation} 
\end{abstract}

\section{Introduction}\label{sec:intro}
In modern financial systems, institutions are intricately linked through common balance sheet holdings \citep{Capponi2015,GreenwoodEtAl2015,DuarteEisenbach2021FireSales} and interbank liabilities. This interconnectedness can be destabilizing and lead to the propagation of contagion and defaults during distressed periods. As demonstrated during financial crises, such as the 2008 global financial crisis, a financial institution can withstand a moderate shock, yet be forced into default by contagion due to the propagation of losses originating from other distressed firms within the network  \citep{GaiHaldaneKapadia2011}. 

The risk management literature has extensively analyzed financial contagion mechanisms and their impact on institutional and systemic stability, accounting for network exposures and default-triggered frictions. However, a critical gap remains: the pricing of corporate bonds subject to network-induced credit risk has been largely overlooked, despite its practical importance. This stands in contrast to the vast literature on corporate bond pricing, a central topic in financial engineering for decades. Our paper addresses this gap by proposing an efficient valuation framework for bonds of firms exposed to systemic credit risk.

The fundamental challenge in this valuation stems from the network effects themselves. Their complexity precludes any closed-form expression for the bond price, which is a key distinction from classical non-networked models, where analytical characterizations are typical. Arguably, the most tractable computational remedy would be the use of standard Monte Carlo simulation, where random shocks to each firm's assets are generated from the underlying distributions to simulate the resulting contagion, and the bond price is computed at an equilibrium where the cascade of losses and defaults stops. However, this method is notoriously inefficient in capturing rare but critical default events that are the key drivers of credit risk.

A judiciously designed Monte Carlo method can dramatically improve efficiency through variance reduction. The theoretically optimal (zero-variance) estimator, however, requires prior knowledge of the very quantity being estimated and is thus impractical to implement. Consequently, this paper focuses on designing a tractable Monte Carlo method that achieves near-optimal variance reduction by efficiently sampling extreme events that cause defaults. While Monte Carlo methods for rare-event simulation have been studied extensively, the literature on their application to financial networks remains sparse, and the few existing methods face three significant challenges:
\begin{enumerate}[label=(\roman*)]
    \item\label{item:network effects} \emph{Higher-order network effects.} Classical rare-event simulation techniques often rely on an explicit characterization of the event of interest. As previously noted, this is not possible in financial networks where contagion extends beyond direct, first-order exposures. The failure of one institution can trigger losses at its counterparties, which in turn propagate distress to their own counterparties, initiating indirect, higher-order cascades. These dynamics can cause shocks to circulate and amplify until the system settles into a new equilibrium, often described by a fixed-point equation. Consequently, the critical events we focus on are defined only implicitly, making traditional rare-event simulation methods inapplicable.

    \item\label{item:curse of dimensionality} \emph{The curse of dimensionality.} Despite the implicit nature of these events, a few Monte Carlo methods have been proposed for estimating default probabilities, particularly in the classical model of \cite{Eisenberg:01}. 
    While these methods lead to accurate estimates for small networks (e.g., ten or fewer institutions), their computational time increases exponentially with network size.
    Given that real-world financial systems involve hundreds or thousands of interconnected institutions, a scalable method whose performance is robust to high dimensionality is imperative.

    \item\label{item:sensitivity} \emph{Network model sensitivity.} The existing methods mentioned above suffer from a further limitation: they are tailored to a specific model. They are built upon a linear programming reformulation of the equilibrium in the model of \cite{Eisenberg:01}, which only accounts for contagion through interbank liabilities. This rigid structure prevents the incorporation of other critical, non-linear contagion channels, such as fire-sale externalities and bankruptcy costs. As these non-linear effects have received significant recent attention for their realism, this motivates the design of a flexible Monte Carlo method applicable to a wide range of modern financial network models.
\end{enumerate}
To the best of our knowledge, our paper is the first to introduce an efficient Monte Carlo method for bond valuation that simultaneously addresses all three of these challenges. Our model incorporates common asset holdings and interbank liabilities, which are widely recognized as the two most prominent channels of financial contagion~\citep{ClercEtAl2016,Glasserman:16}. Specifically, distressed institutions may be forced to sell off some of their illiquid asset holdings to meet their debt obligations. This fire sale depresses asset prices and subsequently creates a negative externality, reducing the value of the same assets held on the balance sheets of all other banks.
Furthermore, if a distressed bank defaults on its obligations even after the asset liquidation, it distributes all its residual assets to pay its creditors. Due to the shortfall in payments, the creditors recover only a fraction of their claims, and these credit losses can then trigger further defaults, creating loss amplification among insolvent institutions.\footnote{Although we exclude bankruptcy costs and other inefficiencies that might arise at default from our model, our methodology is robust enough to accommodate more general frameworks that incorporate them alongside the two contagion channels.}

In this model, we first explicitly characterize an institution's default condition conditional on the shock realizations of all other institutions. As discussed in ~\ref{item:network effects}, default events cannot be defined explicitly in the full-dimensional space of external shocks due to loss amplification and default cascade dynamics. This complexity makes it difficult to develop efficient simulation methods that sample directly from the default region. We overcome this by identifying a critical threshold for a target institution's external assets, below which it defaults given the shocks to the rest of the system. This is achieved by strategically decoupling the target institution from the network and analyzing the impact of amplified losses from the separated system back onto the target. This impact is what defines the critical threshold. Crucially, the calculation of this threshold is computationally efficient, with a complexity that is linear in the network size. Furthermore, this decoupling strategy is not limited to our specific financial model but is broadly applicable. Accordingly, our characterization provides a unified solution to the challenges outlined in \ref{item:network effects} to \ref{item:sensitivity}.

We then leverage this default set characterization to design a novel and efficient Monte Carlo method for pricing institutional bonds: Bi-Level Importance Sampling with Splitting (BLISS). The proposed method employs a splitting technique that decomposes the default event into a sequence of two nested, more tractable sub-events. This bi-level framework aligns perfectly with the decoupling approach mentioned above: the outer-level event corresponds to shock realizations in non-target institutions, while the inner-level event represents the conditional default of the target institution itself. To achieve significant variance reduction, BLISS integrates this splitting framework with two distinct importance sampling schemes, one for each level. Specifically, we apply exponential tilting to the outer-level sampling and optimal importance sampling—a provably ideal technique for this purpose \citep{Asmussen:07}—to the inner-level sampling. 

Our key technical contribution lies in establishing theoretical performance guarantees for the BLISS estimator by proving its asymptotic optimality across two distinct asymptotic regimes that capture rare and non-rare default events: large-asset and low-volatility. The first regime is characterized by increasing external assets, while the second regime focuses on situations where individual asset volatility decreases. In these two regimes, default becomes increasingly rare. For each regime, we propose an effective choice of the exponential tilting parameter for
the outer-level importance sampling in the BLISS method, leading to substantial variance reduction. Although the optimal parameter---defined as the minimizer of the estimator's second moment---would be ideal, this stochastic optimization is analytically intractable, a well-established challenge in the field \citep{Glasserman2003-MCFE}. To circumvent this issue, we construct a deterministic surrogate for the second moment in each regime and identify its minimizer instead, which serves as a proxy for the optimal parameter. 

The remainder of the paper is organized as follows. Section~\ref{sec:literature} provides a review of the related
literature.  Section~\ref{sec:problem} introduces the modeling framework and formulates the bond pricing problem. In Section~\ref{sec:BLISS}, we present the decoupling approach and the BLISS methodology noted above. Section~\ref{sec:analysis} analyzes theoretical performance guarantees for the BLISS algorithm, demonstrating in particular its asymptotic optimality. Section~\ref{sec:numerics} validates our theoretical results through numerical experiments based on a model calibrated to data from the European Banking Authority (EBA). Section~\ref{sec:conclusion} concludes. Proofs of all theoretical results are deferred to the appendix.

\section{Literature Review}\label{sec:literature}

The valuation of interbank claims in the presence of network contagion poses significant computational challenges. While theoretical frameworks for clearing payments and default cascades are well-developed, translating these models into tractable valuation tools requires addressing the curse of dimensionality that arises from network interactions and non-linear contagion channels. This section reviews three related strands of literature: network-based financial contagion models, existing approaches to debt valuation in networks that highlight computational limitations, and simulation methodologies for related problems that inform our proposed approach.

Network-based financial contagion has been extensively studied, beginning with the seminal framework of \cite{Eisenberg:01}, which characterizes equilibrium clearing payments through a fixed-point formulation that accounts for cascading defaults arising from failures in fulfilling interbank liabilities. Subsequent research has extended this framework along several dimensions. 
Contagion channels have been enriched to include asset fire sales \citep{Cifuentes:05}, bankruptcy costs \citep[][]{Rogers:13,Glasserman:15}, and cross-holdings \citep{WeberWeske2017}. Other studies have incorporated institutional features such as central clearing counterparties \citep{Amini:16,Ahn:19}, multiple maturities \citep{KusnetsovMariaVeraart2019},  collateral requirements with contract termination \citep{GhamamiEtAl2022}, and contagious bank runs \citep{veraart2025}. The model we employ builds upon the framework of \cite{Cifuentes:05} to capture both liability and fire-sale contagion channels, which we extend to include external liabilities and stochastic shocks to external assets.

Building upon these models, other studies have investigated computational and policy questions that motivate our methodological contribution. 
\cite{Elsinger:06} develop an empirical approach to assess systemic risk, \cite{Capponi:16} study the impact of the concentration of interbank liabilities on systemic losses, and \cite{Chen:16} analyze the interplay between the two contagion channels of asset liquidation and interbank liabilities. Other studies have characterized the sensitivity of clearing payments \citep{Liu:10,Feinstein:18} to non-interbank assets and network structure, intervention policies \citep{CapponiChen2015,AhnK:19,bernard2022}, systemic stress scenario selection \citep{AhnEtAl2023}, computation of systemic risk measures~\citep{ararat2023}, and worst-case network effects under partial information \citep{Ahn:23,Hu:24}. We refer to \cite{Capponi2016} and  \cite{Glasserman:16} for comprehensive surveys. 

While the literature on clearing payments focuses on the ex-post consequences of shocks, computing the ex-ante value of interbank claims requires integrating over all possible realizations of external shocks. This problem is fundamentally constrained by the curse of dimensionality. The methodology of  \cite{gourieroux2012} decomposes the space of external shocks into $2^n$ default scenarios for a network of $n$ institutions and then evaluates expectations conditional on each scenario. However, this process is computationally impractical for large networks, meaning explicit analytical solutions are often restricted to highly simplified cases, such as those with few firms or no interbank liabilities.    
\cite{Barucca:20} develop a general network valuation framework that synthesizes both clearing and ex-ante valuation models, while also accounting for uncertainty in external assets.
However, analytical solutions remain elusive for complex, realistic systems, specifically when we take into account non-linear contagion channels such as asset liquidation and bankruptcy costs. 

The study of \cite{CossinSchellhorn2007} provides an analytical valuation framework for corporate debt in a network economy. There are significant differences between their study and ours. First, their queueing-network-based model features strategic default, and banks default when equity value turns negative and shareholders no longer find it profitable to operate the firm. By contrast, we model liquidity-induced defaults that occur when a bank's asset value falls below its liabilities.  
Second, \cite{CossinSchellhorn2007} derive asymptotic formulas for equity and debt prices in the steady state under the assumption that firms hold 
sufficiently large cash reserves. We provide an exact simulation framework that does not rely on steady-state or large-cash assumptions, enabling precise valuation across a broader range of financial conditions.

Despite its practical significance, the development of efficient simulation methods for financial networks remains an underexplored area. The few existing studies have focused on estimating firm-specific default probabilities via conditional Monte Carlo \citep{Ahn:18}, mixed importance sampling \citep{AhnZ2021}, and conditional importance sampling \citep{AhnZheng:23}. A similar problem has been explored in the context of distribution networks, where \cite{Blanchet:19} introduce importance sampling and conditional Monte Carlo estimators to compute systemic failure probabilities.\footnote{We refer the reader to \cite{Gandy:17} and \cite{Glasserman2023} for different approaches to sampling constrained networks and their application in financial networks.}
However, as mentioned earlier, these approaches are not well-suited for large-scale systems. The first three methods rely on the exponential partitioning of the external asset space, in line with the approach of \cite{gourieroux2012}, whereas the latter requires solving a linear program in each iteration. Hence, these methods suffer from a significant computational burden that scales poorly with network size. Furthermore, these algorithms are designed for estimating expectations or probabilities defined over a set characterized by linear constraints, which restricts their use to models that exclude the non-linear effects of fire sales (e.g., the original Eisenberg-Noe framework). Lastly, the above schemes require that random shocks to external assets be elliptically distributed. This condition is not met under the common modeling practice of using normally distributed log-returns, and thus, these existing techniques are inapplicable in this standard scenario.
To address these complexities, we introduce a novel approach leveraging bi-level sampling coupled with conditioning on the external assets of non-target institutions.

Our bi-level sampling approach is conceptually analogous to rare-event simulation techniques developed in the portfolio credit risk literature. 
To be more specific, \cite{GlassermanL:05}, \cite{GlassermanKS:08}, \cite{kang2005}, and
\cite{Bassamboo:08} propose two-stage importance sampling methods that separate the sampling of common risk factors in a credit portfolio from that of idiosyncratic risk factors. This mirrors our approach in that we also split random variables into two groups and apply different importance sampling schemes to each. The key distinction, however, lies in how the target rare event and the corresponding sampling distributions are characterized. In our setting, these characterizations are driven by network contagion effects---a factor absent from the credit portfolio models discussed in the aforementioned papers. In those models, an obligor’s default does not directly affect another obligor’s financial status; instead, their dependence is captured through elliptical copulas. Hence, while the underlying two-stage concept is similar, the methods from the portfolio credit risk literature are not applicable to the problem of bond valuation in financial networks.

\section{Model and Problem Formulation}
\label{sec:problem}
\subsection{Basic Framework}\label{subsec:model}

We build on the framework of \cite{Cifuentes:05}, extending it to   incorporate liabilities outside the financial network and stochastic external liquid asset values. We consider a network of $n$ banks indexed by $1,\ldots,n$. Each bank $i$'s initial balance sheet consists of the following items:\vspace{-8pt}
\begin{itemize}\itemsep=0pt
    \item $\bar p_{ij}\geq0$: the nominal liabilities (or payment obligation) to bank $j$, with $\bar p_{ii} = 0$;
    \item $\bar p_{i0}>0$: the nominal liabilities (or payment obligation) to external entities;
  \item $S_i^0>0$: the initial value of external liquid assets;
\item $e_i > 0$: the units of illiquid assets with a nominal unit price of $\bar q$.
\end{itemize}
Table~\ref{tab:init-bs} provides a decomposition of the balance sheet into assets and liabilities. Given this setup, bank $i$'s total liabilities are denoted by $\bar p_i = \bar p_{i0}+\sum_{j=1}^n \bar p_{ij}$, and its initial net worth is given by $w_i = S_i^0 + \sum_{j=1}^n\bar p_{ji} + e_i \bar q - \bar p_i$. We denote the total illiquid asset holdings across all banks in the system by $\bar e \coloneqq \sum_{i=1}^n e_i$.    In line with previous studies \citep[e.g.,][]{Cifuentes:05,Chen:16,bernard2022}, all banks are assumed to hold the same type of illiquid assets. This single representative asset can be viewed as an approximation of multiple correlated assets \citep{WeberWeske2017}. 

\begin{table}[t]
\centering
\begin{tabular}{ccc}
    \toprule
    \textbf{Assets} & & \textbf{Liabilities} \\
    \midrule
    External Liquid Assets $S_i^0$ & & External Liabilities $\bar p_{i0}$ \\
    Interbank Assets $\bar p_{ji},\, j \neq i$ & & Interbank Liabilities $\bar p_{ij},\, j \neq i$ \\
    Illiquid Assets $e_i$ with price $\bar q$ & & \\
    \bottomrule
\end{tabular}
\caption{Bank $i$'s initial balance sheet}
\label{tab:init-bs}
\end{table}

Assuming that all liabilities share the same maturity, we model the value of bank $i$'s external liquid assets at maturity as
\begin{equation}\label{eq:external assets}
     S_i =  S_i^0\exp\lt(-\frac{\sigma_i^2}{2}+\sum_{k=1}^i\Lambda_{ik}Z_k\rt),
    \end{equation}
    where $(Z_1,\ldots,Z_n)\sim \cN({\bf 0}_n,\bI_n)$, $\bLambda=(\Lambda_{ij})$ is a lower triangular matrix,  $\bLambda\bLambda^\top=\bSigma$ is a covariance matrix, $\sigma_i^2\coloneqq\sum_{k=1}^i\Lambda_{ik}^2$, and ${\bf 0}_n$ and $\bI_n$ are the $n$-dimensional vector of zeros and the $n$-by-$n$ identity matrix, respectively. In other words, $S_1,\ldots,S_n$ follow correlated lognormal distributions with $S_i^0=\sE[ S_i]$. The risk-free rate is assumed to be zero without loss of generality.
    We next assume that all liabilities are of equal seniority; that is, when a bank's asset value at maturity falls short of the total liabilities, its debts are repaid to the creditors proportionally to the payment obligations. To illustrate this, we introduce the relative liability matrix $\bPi=(\pi_{ij})\in\bbR^{n\times n}_+$, where $\pi_{ij} \coloneqq (\bar p_{ij}/\bar p_i) {\bf 1}_{\{\bar p_i > 0\}}$ represents the proportion of bank~$i$'s liabilities to bank~$j$. Then, if $p_j$ represents bank $j$'s total debt payment for each $j=1,\ldots,n$, bank $i$'s interbank assets, i.e., the total payment received by bank $i$, become $\sum_{j=1}^n\pi_{ji}p_j$. 
    
    For each realization $\bs$ of the external liquid asset vector $\bS$, banks with insufficient liquid assets need to sell off illiquid assets to fulfill their payment obligations. Specifically, banks first use their external liquid assets $s_i$ and interbank assets $(\sum_j \pi_{ji} p_j)$ to meet the payment obligation. When these assets are not sufficient (i.e., $s_i + \sum_j \pi_{ji} p_j<\bar p_i$) and the market price of illiquid assets at the maturity is given by $q$, bank $i$ either sells $(\bar p_i -  s_i - \sum_j \pi_{ji} p_j)/q$ units of illiquid assets if the residual asset value is enough to fulfill the obligation (i.e., $s_i +qe_i+  \sum_j \pi_{ji} p_j\geq\bar p_i$), or liquidates all its illiquid assets to distribute the residual assets to its creditors via the pro rata rule otherwise. 

The price $q$ of illiquid assets is determined by the \textit{inverse demand function} $Q:[0, \bar e] \rightarrow \bbR_+$, which maps the total amount of illiquid assets in the market to the price. 
Consequently, the equilibrium price $q(\bs)$ and the clearing payments $\{p_i(\bs)\}_{i=1}^n$ are characterized by a solution to the following fixed-point equations:
\begin{equation}\label{eq:market-clearing}
            \left\{~\begin{aligned}
                &q = Q\left(\sum_{i=1}^{n}\lt\{\frac{(\bar p_i -  s_i- \sum_j \pi_{ji} p_j)^+}{q} \wedge e_i \rt\}\right),		\\
                &p_i = \bar p_i \wedge \left(s_i+ q e_i +  \sum_{j=1}^n \pi_{ji} p_j\right),~i=1,\ldots,n,
            \end{aligned}\right.
        \end{equation}
where $x^+\coloneqq\max\{x,0\}$ and $x\wedge y\coloneqq \min\{x,y\}$ for any $x,y\in\bbR$. It follows from Theorem~2 of \cite{amini2016uniqueness} that the solution to the above equations is uniquely determined under some mild assumptions on the inverse demand function $Q$ listed below.
\begin{assumption}\label{ass:inverse_demand} We make the following assumptions:
\begin{enumerate}[label=(\alph*)]
    \item $\bar q=Q(0) > Q(\bar e) > 0$;
    \item $Q(x)$ is continuous and strictly decreasing in $x \in [0, \bar e]$;\label{as:Q-decreasing}
    \item $x Q(x)$ is strictly increasing in $x \in [0, \bar e]$.
\end{enumerate}
\end{assumption}
Given the fixed-point characterization in \eqref{eq:market-clearing}, we say that bank $i$ is solvent if $p_i(\bs) = \bar p_i$ and that it defaults otherwise. To rule out trivial cases where some banks are initially insolvent, we additionally assume that $p_i(\bS^0)=\bar p_i$ for all $i=1,\ldots,n$.
\subsection{Problem Specification}
Under the modeling framework in Section~\ref{subsec:model}, 
we aim to calculate the price of bank $n$'s bond with a face value of one, i.e., 
\begin{align}
\hat{p}_n\coloneqq \sE\bigg[\frac{p_n(\bS)}{\bar p_n}\bigg],
\end{align}
and this quantity can be decomposed as follows:
\begin{align}
\hat p_n= 1 -
\sP(p_n(\bS) <\bar p_n)+\sE\bigg[\frac{p_n(\bS)}{\bar p_n}\sone_{\{p_n(\bS) <\bar p_n\}}\bigg],\label{eq:decomp}
\end{align}
where $\{p_n(\bS)<\bar p_n\}$ denotes the event of bank $n$'s default and $\sone_A$ equals one if $A$ occurs and zero otherwise. 
Based on the above decomposition of the bond price, we develop an efficient Monte Carlo method to estimate the expectation
\begin{equation}\label{eq:main}
\gamma\coloneqq\sE\big[h(\bS)\sone_{\{p_n(\bS) <\bar p_n\}}\big]
\end{equation}
for any function $h(\cdot)$ uniformly positive and bounded on $\bbR_+^n$. This is because the second and last terms in \eqref{eq:decomp} are equal to the expectation~\eqref{eq:main} when $h(\cdot)\equiv 1$ and $h(\cdot) = p_n(\cdot)/\bar p_n$, respectively. Furthermore, as the random variables associated with these two terms are strongly correlated, an efficient Monte Carlo method for \eqref{eq:main} can be paired with common random numbers \citep{Glasserman2003-MCFE}. This approach effectively estimates both terms while reducing the overall variance in the calculation of the bond price $\hat p_n$.

    Estimating expectations of the form $\sE\big[g(\bX)\sone_{\{\bX\in A\}}\big]$, where $g$ is a function and $\bX$ is a random vector, has long been a central problem in the Monte Carlo simulation literature. Our objective \eqref{eq:main} falls within this class of problems. Designing an efficient Monte Carlo method for such an objective typically requires a careful geometric understanding of the target set $A$, which helps us generate more samples from this set to reduce variance.
In our setting, however, this poses a major challenge: the explicit representation of the target set is notoriously difficult due to the implicit fixed-point characterization of the equilibrium price and clearing payments in \eqref{eq:market-clearing}. To address this difficulty, prior research has focused on models without illiquid assets and developed Monte Carlo methods that exploit a linear programming reformulation of \eqref{eq:market-clearing} \citep{Ahn:18,AhnZ2021,AhnZheng:23}. As noted earlier, however, this approach not only incurs a substantial computational burden as the network size increases but cannot be extended to our general framework with illiquid assets.

\section{Network Decoupling and Bi-Level Importance Sampling}\label{sec:BLISS}
We address the above-mentioned issue in our general setup by identifying a novel representation of the default event $\{p_n(\bs)<\bar p_n\}$. This finding enables us to develop our proposed algorithm for estimating $\gamma$ in~\eqref{eq:main}, referred to as Bi-Level Importance Sampling with Splitting (BLISS).

\subsection{Reformulation of Default Events}\label{sec:prelim}

In this subsection, we focus on analyzing bank~$n$'s default event $\{p_n(\bs)<\bar p_n\}$ for any fixed $\bs\in\bbR_+^n$. We first note that a bank's solvency is fundamentally dependent on its external liquid assets. Therefore, to exclude the trivial possibility that a bank could remain solvent even with a complete loss of these assets, we impose the following condition, which is typically satisfied in practice. As will be shown in Section~\ref{subsec:numeric_2}, it also holds in the real-world example based on actual data.
\begin{assumption}\label{ass:balance}
For each bank $i=1,\ldots,n$, the value of its total nominal liabilities exceeds the combined nominal value of its illiquid and interbank assets by at least one monetary unit, i.e.,
$\bar p_i\geq1+\bar q e_i+\sum_{j\neq i} \pi_{ji} \bar p_j$.
\end{assumption}

The main challenge in simulating default events in networks is that a 
bank's solvency depends implicitly on the entire system's fixed-point equations in \eqref{eq:market-clearing}. {This implicit characterization prevents direct 
application of standard rare-event simulation techniques. We overcome this by decoupling the target bank from the network through a fictitious 
system that allows us to explicitly characterize the threshold external 
asset value below which the bank defaults.} 
Using the above assumption, we establish a tractable representation 
by considering a \emph{fictitious} system in which bank $n$ sells all its illiquid assets and satisfies its payment obligation regardless of its financial status. 
In this system, for each realization $\bs$ of $\bS$, the equilibrium asset price $\tilde q(\bs_{-n})$ and the clearing payment of banks $1$ through $n-1$, denoted by $\tilde p_1(\bs_{-n}),\ldots,\tilde p_{n-1}(\bs_{-n})$, are the solution to the following fixed-point equations: 
\begin{equation}\label{eq:market-clearing2}
            \left\{~\begin{aligned}
                &q = Q\left(e_n+\sum_{i=1}^{n-1}\lt\{\frac{(\bar p_i -  s_i- \pi_{ni} \bar p_n-\sum_{j=1}^{n-1} \pi_{ji} p_j)^+}{q} \wedge e_i \rt\}\right),		\\
                &p_i = \bar p_i \wedge \left(s_i+ q e_i +\pi_{ni} \bar p_n+  \sum_{j=1}^{n-1} \pi_{ji} p_j\right),~i=1,\ldots,n-1.
            \end{aligned}\right.
        \end{equation}
It is easy to check that the solution to these equations is also unique based on Assumption~\ref{ass:inverse_demand}. 
Furthermore, we define
\begin{equation}\label{eq:vn}
v_n(\bs_{-n})\coloneqq \bar p_n-\tilde q(\bs_{-n})e_n- \sum_{j=1}^{n-1} \pi_{jn} \tilde p_j(\bs_{-n}),
\end{equation}
where the second and third terms represent the values of bank $n$'s illiquid assets and interbank assets, respectively, in this fictitious system. 
{The quantity $v_n(\bs_{-n})$ in equation~\eqref{eq:vn} represents the minimum external liquid asset value that bank $n$ requires to remain solvent, conditional on the external asset realizations $\bs_{-n}$ of all other banks. In the fictitious system, bank $n$'s net worth is therefore $s_n - v_n(\bs_{-n})$. The following proposition establishes that this threshold fully characterizes bank $n$'s default condition in the original system. Specifically, it shows that the $n$-dimensional implicit default event, defined through the fixed-point equations in~\eqref{eq:market-clearing}, can be equivalently represented as a simple one-dimensional threshold condition on bank $n$'s external assets.}

\begin{proposition}[Default Condition Reformulation]\label{prop:fictitious}
Suppose that Assumptions~\ref{ass:inverse_demand} and~\ref{ass:balance} hold. Then, bank $n$ defaults in the original system if and only if it has a negative net worth in the above-mentioned fictitious system, i.e., $\{\bs\in\bbR_+^n:p_n(\bs)<\bar p_n\}=\{\bs\in\bbR_+^n:s_n<v_n(\bs_{-n})\}$.
\end{proposition}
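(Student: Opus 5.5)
The plan is to fix $\bs_{-n}$ and study bank $n$'s solvency as a function of the single variable $s_n$. I will establish two facts. First, the original equilibrium is monotone in $s_n$. Second, at the particular value $s_n=v_n(\bs_{-n})$, the fictitious equilibrium, completed with $p_n=\bar p_n$, is exactly the original equilibrium. Together these pin down the threshold.

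\emph{Step 0 (positivity of the threshold).} Since $\tilde q\le \bar q$ and $\tilde p_j\le \bar p_j$, definition \eqref{eq:vn} and Assumption~\ref{ass:balance} give
\[
v_n(\bs_{-n})\ \ge\ \bar p_n-\bar q e_n-\sum_{j\ne n}\pi_{jn}\bar p_j\ \ge\ 1>0 .
\]
Hence the point $(\bs_{-n},v_n(\bs_{-n}))$ lies in $\bbR_+^n$ and can be used as a comparison point. This is the only place Assumption~\ref{ass:balance} enters.

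\emph{Step 1 (monotonicity).} I will show that the unique solution $(q(\bs),p_1(\bs),\dots,p_n(\bs))$ of \eqref{eq:market-clearing} is componentwise nondecreasing in $\bs$, and in particular in $s_n$. The route is to rewrite the price equation in terms of liquidation proceeds $y=qx$. Assumption~\ref{ass:inverse_demand}(c) makes $x\mapsto xQ(x)$ strictly increasing, so the sales amount can be recovered monotonically from the needed cash $\sum_i(\bar p_i-s_i-\sum_j\pi_{ji}p_j)^+$, capped at $qe_i$. In these variables the fixed-point map is monotone in $(q,\bp)$ and in $\bs$. A Tarski-type comparison of greatest (equivalently, by uniqueness from \citet{amini2016uniqueness}, the only) fixed points then gives the monotonicity. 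The same argument applies to the fictitious system \eqref{eq:market-clearing2}, and it also yields the comparison principle I need in Step 3. Namely, if the exogenous fire sale $e_n$ in \eqref{eq:market-clearing2} is replaced by any smaller amount, and bank $n$ still pays $\bar p_n$, the resulting equilibrium $(q,\bp_{-n})$ dominates $(\tilde q,\tilde\bp)$.

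\emph{Step 2 (exact match at the threshold).} Set $s_n=v_n(\bs_{-n})$ and take $(q,\bp_{-n},p_n)=(\tilde q,\tilde\bp,\bar p_n)$. Then bank $n$'s cash shortfall is
\[
\bar p_n-s_n-\sum_j\pi_{jn}\tilde p_j=\tilde q e_n,
\]
so its sales in \eqref{eq:market-clearing} equal $\tilde q e_n/\tilde q=e_n$. This coincides with the fictitious system. Moreover $s_n+\tilde q e_n+\sum_j\pi_{jn}\tilde p_j=\bar p_n$, so the equation $p_n=\bar p_n$ holds. Equations $1,\dots,n-1$ are literally those of \eqref{eq:market-clearing2}. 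By uniqueness, this tuple is the original equilibrium, and bank $n$ is (just) solvent.

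\emph{Step 3 (both inclusions).} If $s_n\ge v_n$, Step 1 gives $p_n(\bs)\ge p_n(\bs_{-n},v_n)=\bar p_n$, so bank $n$ is solvent.

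If $s_n<v_n$, suppose for contradiction that bank $n$ is solvent. Then the original system is the fictitious one with bank $n$ paying $\bar p_n$ but selling at most $e_n$. By the comparison principle in Step 1, $(q,\bp_{-n})\ge(\tilde q,\tilde\bp)$. Monotonicity in $s_n$ and Step 2 give the reverse inequality, so $(q,\bp_{-n})=(\tilde q,\tilde\bp)$. Bank $n$'s total resources are then
\[
s_n+\tilde q e_n+\sum_j\pi_{jn}\tilde p_j=\bar p_n-(v_n-s_n)<\bar p_n,
\]
which contradicts solvency. Hence bank $n$ defaults exactly when $s_n<v_n(\bs_{-n})$.

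I expect the main obstacle to be Step 1. Because the price $q$ appears in the denominator of the sales term, the map in \eqref{eq:market-clearing} is not obviously monotone. Careful bookkeeping with the reparametrization through $xQ(x)$ will be needed to get a lattice-monotone operator, and to justify comparison of fixed points under changes in $\bs$ and in bank $n$'s sales.
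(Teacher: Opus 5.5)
Your argument is correct and has the same skeleton as the paper's proof. Both establish monotonicity of the unique clearing equilibrium in $\bs$, identify the original equilibrium at the threshold with the fictitious one via uniqueness, and then compare monotonically on each side.

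The paper packages the last two steps through $\zeta(s_n)=s_n+q(\bs)e_n+\sum_{j<n}\pi_{jn}p_j(\bs)-\bar p_n$. This function is strictly increasing and satisfies $p_n(\bs)<\bar p_n\iff\zeta(s_n)<0$. The paper asserts that $\zeta$ has a root $s_n^*$: it is negative at $0$ by Assumption~\ref{ass:balance} and positive for large $s_n$. It then shows that the equilibrium at $s_n^*$ solves \eqref{eq:market-clearing2}, so $s_n^*=v_n(\bs_{-n})$.

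You run the identification the other way, verifying directly that $(\tilde q,\tilde\bp,\bar p_n)$ solves \eqref{eq:market-clearing} at $s_n=v_n(\bs_{-n})$. This is slightly cleaner, because it avoids the intermediate-value step. In the paper that step implicitly requires continuity of the equilibrium in $s_n$, which is never established.

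Two simplifications are available.
\begin{itemize}
\item Step~1 is not the obstacle you expect. The map in \eqref{eq:market-clearing} is already monotone as written: the sales term is nonincreasing in $q>0$, $\bp$ and $\bs$, and $Q$ is decreasing. The paper simply iterates the $\bs'$-map starting from $(q(\bs),\bp(\bs))$ and invokes uniqueness. No reparametrization through $xQ(x)$ is needed; Assumption~\ref{ass:inverse_demand}(c) enters only through uniqueness.
\item The comparison principle in Step~3 is superfluous. For $s_n<v_n(\bs_{-n})$, monotonicity alone gives $q(\bs)\le\tilde q$ and $p_j(\bs)\le\tilde p_j$. Hence $s_n+q(\bs)e_n+\sum_{j<n}\pi_{jn}p_j(\bs)\le\bar p_n-(v_n(\bs_{-n})-s_n)<\bar p_n$, which is exactly $\zeta(s_n)<\zeta(v_n(\bs_{-n}))=0$ in the paper's language.
\end{itemize}
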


The essence of Proposition~\ref{prop:fictitious} lies in characterizing the minimum external asset value $v_n(\bs_{-n})$ required for bank $n$ to remain solvent, given the realized external asset values $\bs_{-n}$ of other banks.
This result is a nontrivial extension of Lemma~1 in \cite{Ahn:23} under the Eisenberg-Noe framework without fire sales to our generalized modeling framework in Section~\ref{subsec:model}. While their result is derived by using a linear program reformulation of the clearing payment characterization and its duality, such a tool is not applicable in our setup due to the nonlinearity introduced by asset liquidation in the clearing payment characterization; see the first equation in~\eqref{eq:market-clearing}. Instead, we achieve this generalization by decoupling bank $n$'s external assets from the fixed-point equations in~\eqref{eq:market-clearing} through the aforementioned fictitious system approach. 
Without this approach, deriving the critical threshold $v_n(\bs_{-n})$ would be challenging because $s_n$ is inherently coupled with the fixed-point equations in~\eqref{eq:market-clearing}; consequently, one must resort to numerically solving the system of equations to determine the threshold. See Figure~\ref{fig:default set} for a description of a default event in a 2-bank system based on the result in Proposition~\ref{prop:fictitious}.

\begin{figure}[!t]
\centering
\begin{tikzpicture}[]
\begin{axis}[ticks=none,height=3in,width=5in,ymin=-1,ymax=5,xmax=5,xmin=-2/3,xticklabel=\empty,yticklabel=\empty,axis lines = middle,xlabel=$ s_1$,ylabel=$ s_2$]

\addplot[fill=black, fill opacity=0.1, draw=none, area legend] coordinates
{(5,0)  (5,4/3)  (4/3,4/3)  (0,4)   (0,0)} 
	\closedcycle;
 
\coordinate (A) at (axis cs:4/3,4/3);
\coordinate (A2) at (axis cs:3,-2);
\coordinate (B) at (axis cs:0,4);
\coordinate (B2) at (axis cs:-1,6);
\draw[-,thick] (B) to (A);
\draw[dashed] (A) to (A2);
\draw[dashed] (B) to (B2);
\coordinate (C) at (axis cs:5,4/3);
\coordinate (C2) at (axis cs:-1,6);
\coordinate (D) at (axis cs:0,4/3);
\coordinate (D2) at (axis cs:-1,4/3);
\draw[-,thick] (A) to (C);
\draw[dashed] (A) to (D2);

\coordinate (E) at (axis cs:4/3,1.6);
\node[anchor=west,align=center] at (E.east) {$(\bar p_1-\pi_{21}\bar p_2,\bar p_2-\pi_{12}\bar p_1)$};
\node[anchor=west,align=center] at (B.east) {$(0,\bar p_2-\pi_{12}\pi_{21}\bar p_2)$};
\addplot+[black, mark=*, mark size=1.5, every mark/.append style={solid, fill=black}] coordinates {(0,4)};
\addplot+[black, mark=*, mark size=1.5, every mark/.append style={solid, fill=black}] coordinates {(4/3,4/3)};

\node[anchor=south,align=center] at (axis cs:2.5,0.4) {$ s_2<v_2(s_1)=\bar p_2-\pi_{12}\min\{\bar p_1,s_1+\pi_{21}\bar p_2\}$};
\end{axis}
\end{tikzpicture}
\caption{A graphical illustration of the default event $\{\bs\in\bbR^2_+:p_2(\bs)<\bar p_2\}$ in a 2-bank system without fire sales. The shaded area represents the set of liquid asset values that cause bank 2 to default. The thick solid line depicts the value of $v_2(s_1)$ for different $s_1$.\label{fig:default set}}
\end{figure}
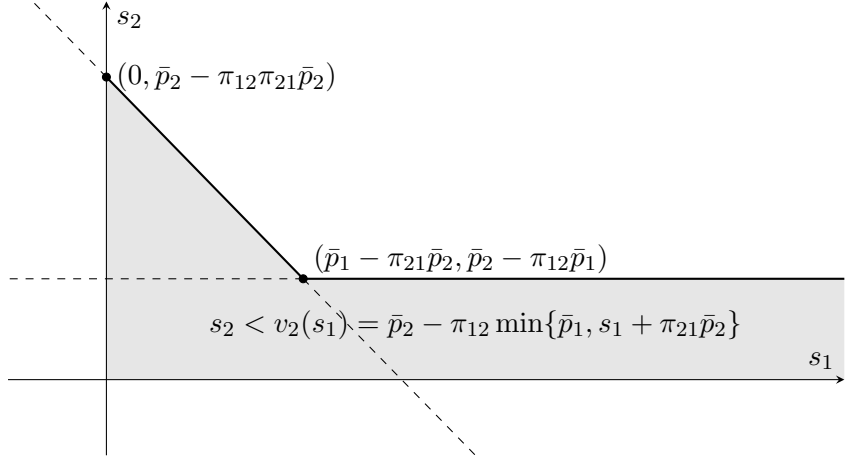

\subsection{BLISS: Introduction}
{Using the explicit default characterization from Proposition~\ref{prop:fictitious}, we now construct our simulation methodology. We leverage the fact that the  
threshold $v_n(\bS_{-n})$ suggests a natural splitting. First, at the outer level, we sample the 
external assets of banks 1 to $n-1$. Then, at the inner level, we conditionally 
sample bank $n$'s external assets.} 
In the special case of $h\equiv 1$, it can be easily checked that such a splitting (or conditional Monte Carlo) approach results in smaller estimation errors than the standard Monte Carlo method:
\begin{align}
    {\sf Var}\big(\sone_{\{p_n(\bS) <\bar p_n\}}\big)&=
    {\sf Var}\big(\sone_{\{S_n <v_n(\bS_{-n})\}}\big)\\
    &= {\sf Var}\Big(\sE\big[\sone_{\{S_n <v_n(\bS_{-n})\}}|\bS_{-n}\big]\Big)
    + \sE\Big[{\sf Var}\big(\sone_{\{S_n <v_n(\bS_{-n})\}}|\bS_{-n}\big)\Big]\\
    &> {\sf Var}\Big(\sE\big[\sone_{\{S_n <v_n(\bS_{-n})\}}|\bS_{-n}\big]\Big)\\
    &= {\sf Var}\big(\sP(S_n <v_n(\bS_{-n})|\bS_{-n})\big),\label{eq:var-cmc}
\end{align}
where the strict inequality holds because ${\sf Var}\big(\sone_{\{S_n <v_n(\bS_{-n})\}}|\bS_{-n}=\bx\big)>0$ for all $\bx\in\bbR_+^{n-1}$.

To generalize this splitting approach for more flexible choices of \(h\) and achieve greater variance reduction, we propose a new algorithm, which we call \emph{Bi-Level Importance Sampling with Splitting (BLISS)}, for estimating an equivalent (split) version of \eqref{eq:main}:
\begin{equation}\label{eq:main-recast}
\gamma=\sE_{\bS_{-n}}\Big[\sE_{S_n}\big[h(\bS)\sone_{\{S_n<v_n(\bS_{-n})\}}\big|\bS_{-n}\big]\Big].
\end{equation}
This algorithm applies two types of importance sampling methods to the outer-layer sampling of $\bS_{-n}$ and the inner-layer sampling of $S_n$ in the splitting framework, from which its name is derived. 

To be more specific, based on the definition of $\{S_i\}_{i=1}^n$ in \eqref{eq:external assets}, we first generate a sample of $\bZ_{-n}=(Z_1,\ldots,Z_{n-1})^\top$ by \emph{exponential tilting} to form a sample of $\bS_{-n}$. As is well documented in the literature, exponential tilting---also known as exponential twisting or exponential change of measure---is a standard importance sampling (or measure change) technique widely used for rare-event simulation problems involving light-tailed random variables \citep[see, e.g.,][]{Glasserman2003-MCFE}. A well-known property of this method is that when applied to a normal distribution, the tilted distribution remains normal with the same variance, but its mean is shifted to increase the probability of the rare event of interest \citep{Bucklew2004}. Accordingly, we take a sample of $\bZ_{-n}$ not from its original distribution $\cN({\bf 0}_{n-1},\bI_{n-1})$ but from its mean-shifted version $\cN(\bmu_{-n},\bI_{n-1})$, where the choice of the mean vector $\bmu_{-n}=(\mu_1,\ldots,\mu_{n-1})^\top$ will be discussed in Section~\ref{sec:analysis}. Then, the corresponding likelihood ratio for debiasing is given by 
\begin{equation}\label{eq:Lout}
\cL_{\tt out}(\bZ_{-n}) = \exp\lt(\frac{1}{2}\|\bmu_{-n}\|^2-\bmu_{-n}^\top\bZ_{-n}\rt),
\end{equation}
and a sample of $\bS_{-n}$ can be obtained by setting $i=n$ and replacing $Z_1,\ldots,Z_{n-1}$ in \eqref{eq:external assets} with their samples.

Next, conditional on the samples $\bz_{-n}$ and $\bs_{-n}$ of $\bZ_{-n}$ and $\bS_{-n}$, respectively, we sample $Z_n$ using \emph{near-optimal importance sampling} to construct a sample of $S_n$. To illustrate this, we define
\begin{equation}\label{eq:lnzn}
\ell_n(\bz_{-n})\coloneqq\frac{1}{\Lambda_{nn}}\lt(\log{S_n^0}-\log{v_n(\bs_{-n})}-\frac{\sigma_n^2}{2}+\sum_{k=1}^{n-1}\Lambda_{nk}z_k\rt).
\end{equation}
Then, by \eqref{eq:external assets}, it is straightforward to see that the default event $\{S_n<v_n(\bs_{-n})\}$ can be recast as $\{Z_n<-\ell_n(\bz_{-n})\}$.
Furthermore, for any random variable $X$ and any given set $A$, the optimal importance sampling distribution for estimating $\sP(X\in A)$ is known as the conditional distribution of $X$ given $A$ \citep{Asmussen:07}. Hence, especially for the case of $h\equiv 1$ and when $\bZ_{-n}=\bz_{-n}$ and $\bS_{-n}=\bs_{-n}$, the optimal importance sampling distribution of $Z_n$ for estimating the inner conditional expectation in \eqref{eq:main-recast} is characterized by the truncated standard normal density
\begin{equation}\label{eq:psi}
\psi(x|\bz_{-n})\coloneqq\frac{\phi(x)\sone_{\{x<-\ell_n(\bz_{-n})\}}}{\Phi(-\ell_n(\bz_{-n}))}\quad\forall x\in\bbR,
\end{equation}
where $\phi(\cdot)$ and $\Phi(\cdot)$ are the density function and cumulative distribution function, respectively, of the univariate standard normal distribution. We use this sampling distribution for general cases of $h$, leading to the term \emph{near-optimal} importance sampling. The associated likelihood ratio is given by
\begin{equation}\label{eq:Lin}
\cL_{\tt in}(\bz_{-n})  = \frac{\phi(\cdot)}{\psi(\cdot|\bz_{-n})}= \Phi(-\ell_n(\bz_{-n})).
\end{equation}
Note that this likelihood ratio does not depend on the samples of $Z_n$ because the sampling distribution of $Z_n$ is defined on the interval $(-\infty,-\ell_n(\bz_{-n}))$, i.e., $\sone_{\{Z_n<-\ell_n(\bz_{-n})\}}=1$.

Consequently, our proposed BLISS estimator is finally characterized as
\begin{equation}\label{eq:estimator}
h\big(\widetilde S_1,\ldots,\widetilde S_n\big)\cL_{\tt in}(\widetilde\bZ_{-n})\cL_{\tt out}(\widetilde \bZ_{-n}),
\end{equation}
where $\widetilde Z_k\sim\cN(\mu_k,1)$ for $k<n$, $\widetilde Z_n=(Z_n|Z_n<-\ell_n(\widetilde\bZ_{-n}))$, and $\widetilde S_i = S_i^0\exp(-{\sigma_i^2}/{2}+\sum_{k=1}^i\Lambda_{ik}\widetilde Z_k)$ for $i=1,\ldots,n$. Note that this estimator is unbiased. The complete description of our BLISS procedure is provided in Algorithm~\ref{alg:BLISS}.

\begin{algorithm}[t]
\caption{{\tt BLISS($\mu_1,\ldots,\mu_{n-1}$)}}\label{alg:BLISS}
\begin{algorithmic}[1]
\Procedure{outer-layer sampling}{$\mu_1,\ldots,\mu_{n-1}$}
\State Sample $Z_k=z_k$ from the normal distribution $\cN(\mu_k,1)$ for $k=1,\ldots,n-1$
\State Set $s_i = S_i^0\exp(-{\sigma_i^2}/{2}+\sum_{k=1}^i\Lambda_{ik}z_k)$ for $i=1,\ldots,n-1$
\State Compute $v_n(\bs_{-n})$ and $\cL_{\tt out}(\bz_{-n})$ using~\eqref{eq:vn} and~\eqref{eq:Lout}, respectively
\EndProcedure
\Procedure{inner-layer sampling}{$z_1,\ldots,z_{n-1}$}
\State Calculate $\ell_n(\bz_{-n})$ in \eqref{eq:lnzn}
\State Sample $Z_n=z_n$ from the trunctated normal distribution with density $\psi(\cdot|\bz_{-n})$ in~\eqref{eq:psi}
\State Set $s_n = S_n^0\exp\lt(-{\sigma_n^2}/{2}+\sum_{k=1}^n\Lambda_{nk}z_k\rt)$
\State Compute $\cL_{\tt in}(\bz_{-n})$ using \eqref{eq:Lin}
\EndProcedure
\State Set $T = h(s_1,\ldots,s_{n})\cL_{\tt in}(\bz_{-n})\cL_{\tt out}(\bz_{-n})$
\State \textbf{return} $T$
\end{algorithmic}
\end{algorithm}

It is worth highlighting that this algorithm's computational complexity is at most linear in the number of banks ($n$) because the sampling cost increases linearly in $n$ and the calculation of $v_n(\bs_{-n})$ requires at most $n$ iterations of the so-called fictitious default algorithm~\citep{WeberWeske2017}. This is a significant improvement over current state-of-the-art algorithms for rare-event simulation in financial networks, which exhibit exponential computational complexity relative to the number of banks (see Section~\ref{sec:intro}). 

\section{Analysis of the BLISS Estimator}\label{sec:analysis}
{
We provide theoretical performance guarantees for the BLISS algorithm introduced in the previous section. We mainly focus on situations where the default event is rare. 
In this case, 
the naive Monte Carlo estimator is significantly inefficient. For example, when $h\equiv 1$, the standard error of the naive Monte Carlo estimator of $\gamma$ in \eqref{eq:main}, quantified by its $L^2$ norm, is $\sqrt\gamma$. This implies that the error is not only greater than the mean $\gamma$ for sufficiently small $\gamma$ but also decays much slower than $\gamma$ as $\gamma\to0$.
To mathematically evaluate the efficiency of the proposed method, we analyze its performance in two asymptotic regimes governed by a parameter $m$, in which the default probability $\gamma = \gamma_m$ decreases to $0$ as $m$ tends to $\infty$. These regimes, detailed in Sections~\ref{subsec:large asset} and~\ref{subsec:small volatility}, are:
\begin{itemize}
    \item {Large asset regime}: The initial values of the external liquid assets, $S_1^0, \ldots, S_n^0$, scale with the parameter $m$, while their return volatility, $\bLambda$, remains unchanged;
    \item {Small volatility regime}: The volatility of the asset return, $\bLambda$, decreases with the parameter $m$, while the initial asset values, $S_1^0, \ldots, S_n^0$, are fixed.
\end{itemize}
}

In the following subsections, we show that the BLISS estimator is \emph{asymptotically optimal} in these two regimes.\footnote{In Appendix~\ref{subsec:high vol}, we assess the performance of the BLISS estimator in an asymptotic regime where increasing volatility drives the default probability toward one. The analysis, however, is restricted to the case of $h\equiv 1$ in \eqref{eq:main}.} This property---a standard notion of efficiency in rare-event simulation---is formally defined as follows:
\begin{definition}
Let $\Gamma_m$ denote an unbiased estimator of $\gamma_m$. Then, we say that the estimator $\Gamma_m$ is asymptotically optimal if 
$$
\lim_{m\to\infty}\frac{\log\sE[\Gamma_m^2]}{2\log\sE[\Gamma_m]}=1.
$$
\end{definition}
The asymptotic optimality of an estimator indicates that its $L^2$ norm and first moment share the same asymptotic rate of change, thereby achieving significant variance reduction for large $m$ compared to the naive Monte Carlo method. See \cite{Asmussen:07} for more details on this efficiency notion.

\subsection{Large Asset Regime}\label{subsec:large asset}
In practice, the value of external assets is significantly larger than the values of interbank assets and equities. Motivated by this, we consider an asymptotic regime where  $S_1^0,\ldots,S_n^0$ in \eqref{eq:external assets} are replaced by $S_{1,m}^0,\ldots, S_{n,m}^0$, which grow large as $m$ increases. 
Based on this regime, the function $\ell_n(\cdot)$ in \eqref{eq:lnzn} is converted into 
$$
\ell_{n,m}^{\tt A}(\bx)\coloneqq \frac{1}{\Lambda_{nn}}\lt(\log S_{n,m}^0-\frac{\sigma_n^2}{2}-\log\lt(v_n(s_m^{\tt A}(\bx))\rt)+\sum_{k=1}^{n-1}\Lambda_{nk}x_k \rt)~\text{for}~\bx\in\bbR^{n-1},
$$
where 
$s_m^{\tt A}(\bx)$ is an $(n-1)$-dimensional vector whose $i$-th element is $S_{i,m}^0\exp(-\sigma_i^2/2+\sum_{k=1}^i\Lambda_{ik}x_k)$.
Then, our BLISS estimator in the large asset regime becomes
\begin{equation}\label{eq:estimator_A}
h(S_{1,m}^{\tt A},\ldots,S_{n,m}^{\tt A})\cL_{\tt in}(\bZ_{-n,m}^{\tt A})\cL_{\tt out}(\bZ_{-n,m}^{\tt A}),
\end{equation}
where $\bZ_{-n,m}^{\tt A}\coloneqq(Z_{1,m}^{\tt A},\ldots,Z_{n-1,m}^{\tt A})\sim\cN(\bmu_{-n,m}^{\tt A},\bI_{n-1})$, $Z_{n,m}^{\tt A}=(Z_n|Z_n<-\ell_{n,m}^{\tt A}(\bZ_{-n,m}^{\tt A}))$, and $S_{i,m}^{\tt A} = S_{i,m}^0\exp(-{\sigma_i^2}/{2}+\sum_{k=1}^i\Lambda_{ik} Z_{k,m}^{\tt A})$ for $i=1,\ldots,n$. 

The efficiency of our estimator \eqref{eq:estimator_A} depends on the choice of the shifted mean vector $\bmu_{-n,m}^{\tt A}$ in the outer-layer importance sampling. We specifically choose this value to reduce the variance of the estimator. Let  $\bar\Phi(\cdot)\coloneqq1-\Phi(\cdot)$, $\blambda_n= (\Lambda_{n1},\ldots,\Lambda_{n(n-1)})^\top$, and $\kappa_n\coloneqq\sigma_n^2/2+\log(v_n({\bf 0}))$. Then, as $m$ tends to $\infty$, an asymptotic upper bound of the second moment is given by
\begin{align}
&\widetilde\sE\Big[h(S_{1,m},\ldots,S_{n,m})^2\cL_{\tt in}(\bZ_{-n,m}^{\tt A})^2\cL_{\tt out}(\bZ_{-n,m}^{\tt A})^2\Big]\\ 
&\leq \text{constant}\times\sE\bigg[\bar\Phi\big(\ell_{n,m}^{\tt A}(\bZ_{-n})\big)^2 \exp\bigg(\frac12\big\|\bmu_{-n,m}^{\tt A}\big\|^2-(\bmu_{-n,m}^{\tt A})^\top\bZ_{-n}\bigg)\bigg]\\
&\approx \text{constant}\times\exp\bigg(\max_{\bx}\lt\{-\ell_{n,m}^{\tt A}(\bx)^2+\frac12\big\|\bmu_{-n,m}^{\tt A}\big\|^2-(\bmu_{-n,m}^{\tt A})^\top\bx-\frac12\|\bx\|^2\rt\}\bigg)\\
&\leq \text{constant}\times\exp\lt(\max_{\bx}\lt\{-\frac{(\log S_{n,m}^0-\kappa_n+\blambda_n^\top\bx)^2}{\Lambda_{nn}^2} +\frac12\big\|\bmu_{-n,m}^{\tt A}\big\|^2-(\bmu_{-n,m}^{\tt A})^\top\bx-\frac12\|\bx\|^2\rt\}\rt),
\end{align}
where $\widetilde\sE$ and $\sE$ are expectations under the new measure and the original measure, respectively, the first inequality follows from the boundedness of the function $h(\cdot)$, the approximation stems from the Laplace method \citep{Wong:01} and the fact that $\log\bar\Phi(x)^2\sim -x^2$ as $x\to\infty$, and the last inequality holds since $\ell_{n,m}^{\tt A}(\bx)\geq(\log S_{n,m}^0-\kappa_n+\blambda_n^\top\bx)/\Lambda_{nn}$ for all $\bx$.

We find $\bmu_{-n,m}^{\tt A}$ that minimizes the above deterministic upper bound of the second moment, meaning that we solve
$$
\min_{\bmu_{-n,m}}\lt\{\max_{\bx}\lt\{-\frac{(\log S_{n,m}^0-\kappa_n+\blambda_n^\top\bx)^2}{\Lambda_{nn}^2} +\frac12\big\|\bmu_{-n,m}^{\tt A}\big\|^2-(\bmu_{-n,m}^{\tt A})^\top\bx-\frac12\|\bx\|^2\rt\}\rt\}.
$$
It is easy to see that the objective function is convex in $\bmu_{-n,m}$ and concave in $\bx$, and thus, the minimax theorem \citep{Sion:58} holds. Accordingly, it is equivalent to solving
$$
\max_{\bx}\lt\{\min_{\bmu_{-n,m}}\lt\{-\frac{(\log S_{n,m}^0-\kappa_n+\blambda_n^\top\bx)^2}{\Lambda_{nn}^2} +\frac12\big\|\bmu_{-n,m}^{\tt A}\big\|^2-(\bmu_{-n,m}^{\tt A})^\top\bx-\frac12\|\bx\|^2\rt\}\rt\},
$$
where the inner optimum is achieved at $\bmu_{-n,m}^{\tt A}=\bx$. Then, the outer optimization can be recast as
$
\max_{\bx}\lt\{-{(\log S_{n,m}^0-\kappa_n+\blambda_n^\top\bx)^2}/{\Lambda_{nn}^2} -\|\bx\|^2\rt\},
$
and one can easily check that its maximum is achieved at $\bx=-(\log S_{n,m}^0-\kappa_n) (\blambda_n\blambda_n^\top+\Lambda_{nn}^2\bI)^{-1}\blambda_n$. Consequently, we use 
\begin{equation}\label{eq:mu_A}
\bmu_{-n,m}^{\tt A}=-(\log S_{n,m}^0-\kappa_n) (\blambda_n\blambda_n^\top+\Lambda_{nn}^2\bI)^{-1}\blambda_n
\end{equation}
as the shifted mean vector for the outer-layer importance sampling, and we denote the corresponding BLISS estimator~\eqref{eq:estimator_A} by $\Gamma_m^{\tt A}$.

Given this setup, the asymptotic optimality of the BLISS estimator is demonstrated in the following theorem.

\begin{theorem}[Asymptotic Optimality of $\Gamma_m^{\tt A}$]\label{thm:asympOpt_A}
Suppose that Assumptions~\ref{ass:inverse_demand} and~\ref{ass:balance} hold. Then, the BLISS estimator $\Gamma_m^{\tt A}$ is asymptotically optimal in the large asset regime.
\end{theorem}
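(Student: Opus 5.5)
The plan is to bound both moments of $\Gamma_m^{\tt A}$ by explicit Gaussian quantities, using the fact that the network enters only through $v_n$, and $v_n$ is bounded above and below uniformly in $m$. Write $L_m\coloneqq\log S_{n,m}^0$; the large asset regime means $L_m\to\infty$. Also write $0<h_{\min}\le h\le h_{\max}$ for the uniform bounds on $h$.

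\emph{Step 1 (uniform bounds on the threshold).} Since $\tilde q\le\bar q$ and $\tilde p_j\le\bar p_j$, \eqref{eq:vn} together with Assumption~\ref{ass:balance} gives $v_n(\bs_{-n})\ge \bar p_n-\bar q e_n-\sum_{j}\pi_{jn}\bar p_j\ge 1$. Trivially $v_n(\bs_{-n})\le \bar p_n$. Hence, for every $\bx$ and every $m$,
$$
\frac{a_m-\log\bar p_n+\blambda_n^\top\bx}{\Lambda_{nn}}\;\le\;\ell_{n,m}^{\tt A}(\bx)\;\le\;\frac{a_m+\blambda_n^\top\bx}{\Lambda_{nn}},\qquad a_m\coloneqq L_m-\frac{\sigma_n^2}{2}.
$$
After this step the fixed-point structure plays no further role. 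This uses Proposition~\ref{prop:fictitious}, which gives the representation \eqref{eq:main-recast}.

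\emph{Step 2 (lower bound on the first moment).} Using $h\ge h_{\min}$, Proposition~\ref{prop:fictitious} and $v_n\ge1$, we get $\gamma_m\ge h_{\min}\sP(S_n<1)$. Now $\log S_n-L_m+\sigma_n^2/2\sim\cN(0,\sigma_n^2)$, so $\gamma_m\ge h_{\min}\bar\Phi(a_m/\sigma_n)$. Therefore $\log\gamma_m\ge -a_m^2/(2\sigma_n^2)-O(\log a_m)=-L_m^2/(2\sigma_n^2)\,(1+o(1))$.

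By Jensen's inequality, $\sE[\Gamma^2]\ge\gamma_m^2$, so $\liminf$ of the ratio in the definition is at least $1$ (note that both logarithms are negative). It therefore suffices to show $\log\widetilde\sE[(\Gamma_m^{\tt A})^2]\le -L_m^2/\sigma_n^2\,(1+o(1))$.

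\emph{Step 3 (upper bound on the second moment).} Changing measure back to the original one gives
$$
\widetilde\sE[(\Gamma_m^{\tt A})^2]\le h_{\max}^2\,\sE\big[\bar\Phi(\ell^{\tt A}_{n,m}(\bZ_{-n}))^2\,e^{\frac12\|\bmu\|^2-\bmu^\top\bZ_{-n}}\big].
$$
First simplify the tilt. Since $(\blambda_n\blambda_n^\top+\Lambda_{nn}^2\bI)\blambda_n=\sigma_n^2\blambda_n$, the choice \eqref{eq:mu_A} is $\bmu=-b_m\blambda_n/\sigma_n^2$, where $b_m\coloneqq L_m-\kappa_n$. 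Next apply Step 1. With $c_m\coloneqq a_m-\log\bar p_n$ and $Y\coloneqq\blambda_n^\top\bZ_{-n}\sim\cN(0,\|\blambda_n\|^2)$, monotonicity of $\bar\Phi$ gives $\bar\Phi(\ell)^2\le \bar\Phi((c_m+Y)/\Lambda_{nn})^2$. Split the expectation on $\{c_m+Y\ge0\}$ and its complement:
\begin{itemize}
\item On $\{c_m+Y\ge0\}$, use $\bar\Phi(y)^2\le e^{-y^2}$. This leaves a one-dimensional Gaussian integral of $\exp(-(c_m+Y)^2/\Lambda_{nn}^2+b_mY/\sigma_n^2)$. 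Completing the square, and using $\|\blambda_n\|^2+\Lambda_{nn}^2=\sigma_n^2$, yields $\exp(-c_m^2/\sigma_n^2+O(L_m))$. This matches the minimax value $-a^2/\sigma_n^2$ computed before the statement of the theorem.
\item On $\{Y<-c_m\}$, bound $\bar\Phi\le1$. Then $\frac12\|\bmu\|^2-\bmu^\top\bZ_{-n}=\frac{b_m^2\|\blambda_n\|^2}{2\sigma_n^4}+\frac{b_m}{\sigma_n^2}Y$. Because $b_mY\le -b_mc_m$ on this event, this contribution is at most
$$
\exp\Big(\frac{b_m^2\|\blambda_n\|^2}{2\sigma_n^4}-\frac{b_mc_m}{\sigma_n^2}\Big)\sP(Y<-c_m)\le\exp\Big(-\frac{L_m^2}{\sigma_n^2}\Big(\frac{\|\blambda_n\|^2}{2\sigma_n^2}\cdot 0+1-\frac{\|\blambda_n\|^2}{2\sigma_n^2}+\frac{\sigma_n^2}{2\|\blambda_n\|^2}\Big)(1+o(1))\Big).
\end{itemize}
In the last display I use that $b_m,c_m=L_m+O(1)$. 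Writing $r=\|\blambda_n\|^2/\sigma_n^2\in[0,1)$, the exponent coefficient is $1-r/2+1/(2r)$, and $1/(2r)-r/2\ge0$ because $r\le1$. So this term is also $\le\exp(-L_m^2/\sigma_n^2\,(1+o(1)))$. If $\blambda_n={\bf 0}$, then $\bmu={\bf 0}$ and this piece is a trivial exact computation.

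Combining the two pieces gives $\log\widetilde\sE[(\Gamma_m^{\tt A})^2]\le -L_m^2/\sigma_n^2\,(1+o(1))$. Together with Step 2, the ratio in the definition tends to $1$.

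The main obstacle is Step 3. Specifically, one must verify that the exact Gaussian integrals, with $O(1)$ mismatches among $a_m$, $b_m$ and $c_m$ (from $\kappa_n$ versus the bounds $1\le v_n\le \bar p_n$), produce only $O(L_m)$ errors in the exponent. One must also check that the truncated region, where the Chernoff bound on $\bar\Phi$ is unavailable, does not dominate. I would carry out both completions of squares explicitly in the scalar variable $Y$.
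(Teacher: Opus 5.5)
Your proof is correct and follows the same skeleton as the paper's. Both arguments rest on: bounds on $v_n$ uniform in $\bs_{-n}$ and $m$; reduction of both moments to one-dimensional Gaussian integrals in $\blambda_n^\top\bZ_{-n}$; the closed form $\bmu_{-n,m}^{\tt A}=-(\log S_{n,m}^0-\kappa_n)\blambda_n/\sigma_n^2$; and a comparison of exponents, $-(\log S_{n,m}^0)^2/\sigma_n^2$ for the second moment against $-(\log S_{n,m}^0)^2/(2\sigma_n^2)$ for the first, closed off by Jensen. On the first points you differ only cosmetically. You use $1\le v_n\le\bar p_n$ where the paper uses $e^{\tau_n-\sigma_n^2/2}\le v_n\le v_n(\mathbf{0})$. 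You get $\bmu_{-n,m}^{\tt A}$ from $(\blambda_n\blambda_n^\top+\Lambda_{nn}^2\bI)\blambda_n=\sigma_n^2\blambda_n$ rather than from Sherman--Morrison.

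The genuine difference is how you estimate the integral of $\bar\Phi^2$. The paper's Step~1 derives the sharp equivalence $I(Y)\sim(1+2A^2)^{3/2}(2\pi Y^2)^{-1}e^{-Y^2/(1+2A^2)}$ from Mills' ratio and the Laplace method. This gives prefactor-level asymptotics, which it reuses in the proof of Proposition~\ref{prop:superiority}. You instead use three elementary pieces: $\bar\Phi(y)^2\le e^{-y^2}$ on $\{c_m+Y\ge0\}$, an explicit completion of the square, and a crude bound on $\{c_m+Y<0\}$.

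Your route gives only logarithmic precision, but that is all asymptotic optimality needs, and it is rigorous on the region where the Mills-ratio expansion is invalid. The paper substitutes that expansion without treating $\{Y+Ax\le0\}$ separately. Its approximating integrand $\imath$ is not even integrable near $Y+Ax=0$. So its Laplace step tacitly needs a tail estimate of exactly the kind you carry out, namely your coefficient $1-r/2+1/(2r)\ge1$. Similarly, your first-moment bound $\gamma_m\ge h_{\min}\bar\Phi(a_m/\sigma_n)$ is the same normal tail the paper reaches, with $\sigma_n^2/2$ in place of $\tau_n$. You obtain it directly rather than through a Laplace-type asymptotic.

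One slip: dividing $\log\widetilde\sE[(\Gamma_m^{\tt A})^2]\ge 2\log\gamma_m$ by the negative number $2\log\gamma_m$ shows the ratio is at most $1$. So Jensen gives $\limsup\le1$, not $\liminf\ge1$. The bound $\liminf\ge1$ is what your Steps~2 and~3 supply together. Also, the stray term $\frac{\|\blambda_n\|^2}{2\sigma_n^2}\cdot 0$ in your last display should be deleted.
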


Moreover, the following proposition quantifies the variance reduction achieved by using our proposed mean in \eqref{eq:mu_A} for outer-layer importance sampling, relative to the approach in which the outer-layer importance sampling is not applied (i.e., the standard case of setting $\bmu_{-n,m}^{\tt A}={\bf 0}_{n-1}$). 

\begin{proposition}[Effectiveness of Optimizing $\bmu_{-n,m}^{\tt A}$]\label{prop:superiority}
   Suppose that Assumptions~\ref{ass:inverse_demand} and~\ref{ass:balance} hold. Let $\Gamma_m^{\tt 0}$ denote the BLISS estimator \eqref{eq:estimator_A} with $\bmu_{-n,m}^{\tt A}={\bf 0}_{n-1}$. Then, in the large asset regime, we have
    $$    \lim_{m\to\infty}\frac{\log\widetilde\sE[(\Gamma_m^{\tt A})^2]}{\log\sE[(\Gamma_m^{\tt 0})^2]}= 2-\frac{\Lambda_{nn}^2}{\sigma_n^2}.
    $$
\end{proposition}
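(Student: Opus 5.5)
The plan is to derive exact logarithmic asymptotics for both second moments as functions of $t_m\coloneqq\log S_{n,m}^0\to\infty$, and then take their ratio. Write $a\coloneqq\|\blambda_n\|^2=\sigma_n^2-\Lambda_{nn}^2$. The target is
\begin{equation}
\log\widetilde\sE[(\Gamma_m^{\tt A})^2]\sim-\frac{t_m^2}{\sigma_n^2},\qquad \log\sE[(\Gamma_m^{\tt 0})^2]\sim-\frac{t_m^2}{2a+\Lambda_{nn}^2}=-\frac{t_m^2}{2\sigma_n^2-\Lambda_{nn}^2}.
\end{equation}
The ratio of these two rates is $(2\sigma_n^2-\Lambda_{nn}^2)/\sigma_n^2=2-\Lambda_{nn}^2/\sigma_n^2$, which is the claim.

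\textbf{Reduction to a deterministic Gaussian integral.} First I sandwich $v_n$. By the definition \eqref{eq:vn} and Assumption~\ref{ass:balance}, using $\tilde q\le\bar q$ and $\tilde p_j\le\bar p_j$, we get $1\le v_n(\bs_{-n})\le\bar p_n$ for every $\bs_{-n}$. Hence $|\log v_n|\le C$ uniformly, and
\begin{equation}
\ell_{n,m}^{\tt A}(\bx)=\bigl(t_m+\blambda_n^\top\bx+r_m(\bx)\bigr)/\Lambda_{nn},\qquad |r_m(\bx)|\le C',
\end{equation}
with $C'$ independent of $m$ and $\bx$. This uniform bound also removes any dependence on how $S_{i,m}^0$, $i<n$, scale. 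Since $h$ is bounded above and uniformly positive, we can factor out $h^2$ up to constants. Changing back to the original measure gives, up to multiplicative constants,
\begin{align}
\widetilde\sE[(\Gamma_m^{\tt A})^2]&\asymp\sE\Bigl[\bar\Phi\bigl(\ell_{n,m}^{\tt A}(\bZ_{-n})\bigr)^2e^{\frac12\|\bmu\|^2-\bmu^\top\bZ_{-n}}\Bigr],\\
\sE[(\Gamma_m^{\tt 0})^2]&\asymp\sE\Bigl[\bar\Phi\bigl(\ell_{n,m}^{\tt A}(\bZ_{-n})\bigr)^2\Bigr],
\end{align}
where $\bmu=\bmu_{-n,m}^{\tt A}$ from \eqref{eq:mu_A}, and $\bZ_{-n}\sim\cN({\bf 0},\bI_{n-1})$.

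\textbf{Upper bounds.} I use $\bar\Phi(u)^2\le e^{-(u^+)^2}$. Replacing $r_m$ by its worst case shifts $t_m$ only by $O(1)$, which does not affect the $t_m^2$ rate. I then split $\bbR^{n-1}$ into $\{t_m+\blambda_n^\top\bx\ge0\}$ and its complement. On the first region the integrand is at most the exponential of a concave quadratic in $\bx$, whose Gaussian integral is explicit. Its exponent equals the maximizer value:
\begin{itemize}
\item For $\Gamma^{\tt 0}$, this is $\max_{\bx}\{-(t_m+\blambda_n^\top\bx)^2/\Lambda_{nn}^2-\|\bx\|^2/2\}$. Taking $\bx=-c\blambda_n$ reduces it to $\min_u\{u^2/(2a)+(t_m-u)^2/\Lambda_{nn}^2\}=t_m^2/(2a+\Lambda_{nn}^2)$.
\item For $\Gamma^{\tt A}$, the exponent is the minimax value computed just before \eqref{eq:mu_A}, namely $-t_m^2/(a+\Lambda_{nn}^2)=-t_m^2/\sigma_n^2$.
\end{itemize}
On the complement I bound $\bar\Phi\le1$. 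The remaining contribution is a Gaussian (respectively tilted Gaussian) probability of a half-space at distance about $t_m/\sqrt a$. This is at most $e^{-t_m^2/(2a)}$ for $\Gamma^{\tt 0}$. For $\Gamma^{\tt A}$, a direct computation with the explicit $\bmu$ gives a decay at least as fast as the main term, so the complement is negligible on the log scale in both cases.

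\textbf{Lower bounds.} I use $\bar\Phi(u)\ge\frac{u}{1+u^2}\phi(u)$ for $u>0$. I restrict the expectation to a unit ball around the optimizer $\bx^*_m=t_m\by^*$. On this ball $\ell_{n,m}^{\tt A}$ is of order $t_m$, so the polynomial prefactors contribute only $O(\log t_m)$. The exponent there equals the optimal quadratic value plus $O(t_m)$, and the ball has Gaussian mass of order $e^{-\|\bx_m^*\|^2/2+O(t_m)}$. This gives matching lower bounds of the form $-(\text{rate})\,t_m^2+O(t_m)$ for both second moments. Dividing the two asymptotics then yields the stated limit.

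\textbf{Main obstacle.} I expect the delicate step to be the upper bound for $\Gamma^{\tt A}$ on the region where $\ell<0$, where the tilt factor $e^{-\bmu^\top\bx}$ can be large. I will handle it by completing the square: under the tilt the integrand becomes a Gaussian with mean $-\bmu$ evaluated on the half-space $\{t_m+\blambda_n^\top\bx<0\}$. I then verify explicitly that $\|\bmu\|^2$ plus the resulting half-space exponent is at most $-t_m^2/\sigma_n^2$.
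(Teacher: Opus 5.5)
Your proposal is correct. It reaches the same two rates as the paper, namely $-t_m^2/\sigma_n^2$ and $-t_m^2/(2\sigma_n^2-\Lambda_{nn}^2)$. In the paper's notation these are $-Y_m^2/(1+A^2)$ and $-Y_m^2/(1+2A^2)$, with $Y_m=(\log S_{n,m}^0-\kappa_n)/\Lambda_{nn}$ and $A=\|\blambda_n\|/\Lambda_{nn}$. You get there by a different route.

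\textbf{The paper's route.} It sandwiches $\ell_{n,m}^{\tt A}$ between affine functions using $\tau_n\le\sigma_n^2/2+\log v_n\le\kappa_n$. It then uses the Sherman--Morrison identities for $\blambda_n^\top\bmu_{-n,m}^{\tt A}$ and $\|\bmu_{-n,m}^{\tt A}\|^2$ to collapse each $(n-1)$-dimensional bound to the one-dimensional integral $I(Y)=\int\bar\Phi(Y+Ax)^2\phi(x)\,\rd x$. Finally it invokes the Mills-ratio/Laplace asymptotic $I(Y)\sim(1+2A^2)^{3/2}(2\pi Y^2)^{-1}e^{-Y^2/(1+2A^2)}$. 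This gives exact asymptotic equivalences including polynomial prefactors, and the log-ratio follows from $Y_m\sim\widetilde Y_m$.

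\textbf{Your route.} You stay in $\bbR^{n-1}$ and work only on the logarithmic scale with $O(t_m)$ errors. For the upper bound you use the crude inequality $\bar\Phi(u)^2\le e^{-(u^+)^2}$ with an exact Gaussian integral where the shifted affine argument is nonnegative. On the complementary half-space you use a tail estimate. Your check there does go through: in the tilted case the complement exponent is at most $-t_m^2/\sigma_n^2$ exactly because $\|\blambda_n\|^2\le\sigma_n^2$. For the lower bound you localize to a unit ball around the saddle point.

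\textbf{What each buys.} Your argument is more elementary and self-contained in two ways.
\begin{itemize}
\item It handles the region $\{\ell_{n,m}^{\tt A}\le 0\}$ explicitly. The paper's Step~1 instead replaces $\bar\Phi^2$ by its Mills-ratio leading term on the whole line, where that term is singular at $Y+Ax=0$, and leaves the needed localization implicit in the Laplace method.
\item Your bounded remainder $r_m$, which follows from $1\le v_n\le\bar p_n$, absorbs two things with no extra bookkeeping: any dependence on how $S_{i,m}^0$, $i<n$, scale, and the $O(1)$ mismatch between the $\kappa_n$ used to build $\bmu_{-n,m}^{\tt A}$ and the true threshold.
\end{itemize}
The paper's route gives sharper information (exact prefactors) and reuses the computation from the proof of Theorem~\ref{thm:asympOpt_A}, but that precision is not needed for this limit.

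\textbf{One small addition.} Treat $\blambda_n={\bf 0}$ separately, since your substitution $u^2/(2a)$ and the half-space distance $t_m/\sqrt a$ are undefined when $a=0$. In that case $\bmu_{-n,m}^{\tt A}={\bf 0}_{n-1}$, the two estimators coincide, and the limit is trivially $1=2-\Lambda_{nn}^2/\sigma_n^2$.
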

Proposition~\ref{prop:superiority} indicates that correlation among external assets magnifies the performance advantage of our proposed BLISS estimator, $\Gamma_m^{\tt A}$, over its counterpart without outer-layer importance sampling. Specifically, when assets are uncorrelated, the equality ${\Lambda_{nn}^2}={\sigma_n^2}$ holds, and the second moments of both estimators decay at the same rate. In contrast, higher correlation decreases the ratio ${\Lambda_{nn}^2}/{\sigma_n^2}$, which causes the second moment of $\Gamma_m^{\tt A}$ to decay exponentially faster.

\subsection{Small Volatility Regime}\label{subsec:small volatility}
We now consider another regime where the volatility of external asset returns vanishes. In particular, we replace $\bLambda$ by $\bLambda/m$, which diminishes to zero as $m$ increases. 
Based on this regime, we replace the function $\ell_n(\cdot)$ in \eqref{eq:lnzn} with
$$
\ell_{n,m}^{\tt B}(\bx)\coloneqq \frac{m}{\Lambda_{nn}}\lt(\log S_{n}^0-\frac{\sigma_n^2}{2m^2}-\log\Big(v_n\big(s_m^{\tt B}(\bx)\big)\Big)+\frac1m\sum_{k=1}^{n-1}\Lambda_{nk}x_k \rt)~\text{for}~\bx\in\bbR^{n-1},
$$
where 
 $s_m^{\tt B}(\bx)\in\bbR^{n-1}$ is a vector whose $i$-th element is $S_i^0\exp(-\sigma_i^2/(2m^2)+\sum_{k=1}^i\Lambda_{ik}x_k/m)$.
Then, our BLISS estimator in the small volatility regime becomes
\begin{equation}\label{eq:estimator_B}
h(S_{1,m}^{\tt B},\ldots,S_{n,m}^{\tt B})\cL_{\tt in}(\bZ_{-n,m}^{\tt B})\cL_{\tt out}(\bZ_{-n,m}^{\tt B}),
\end{equation}
where $\bZ_{-n,m}^{\tt B}\coloneqq(Z_{1,m}^{\tt B},\ldots,Z_{n-1,m}^{\tt B})\sim\cN(\bmu_{-n,m}^{\tt B},\bI_{n-1})$, $Z_{n,m}^{\tt B}=(Z_n|Z_n<-\ell_{n,m}^{\tt B}(\bZ_{-n,m}^{\tt B}))$, and $S_{i,m}^{\tt B} = S_i^0\exp(-\sigma_i^2/(2m^2)+\sum_{k=1}^i\Lambda_{ik}Z_k/m)$ for $i=1,\ldots,n$.

Analogously to Section~\ref{subsec:large asset}, we choose $\bmu_{-n,m}^{\tt B}$ to minimize an asymptotic upper bound of the second moment of the BLISS estimator. Using a similar argument, it is easy to check that the bound is minimized at
\begin{equation}\label{eq:mu_B}
\bmu_{-n,m}^{\tt B}=m\times\argmin_{\bx\in\bbR^{n-1}}\lt\{{\ell_{n}^{\tt B}(\bx)}^2+\|\bx\|^2\rt\},
\end{equation}
where $\ell_{n}^{\tt B}(\bx)\coloneqq\lim_{m\to\infty}\ell_{n,m}^{\tt B}(m\bx)/m$ for all $\bx$.
Accordingly, we denote by $\Gamma_m^{\tt B}$ the BLISS estimator~\eqref{eq:estimator_B} that uses \eqref{eq:mu_B} as the shifted mean for the outer-layer importance sampling. In the following two results, we show the asymptotic optimality of our proposed estimator $\Gamma_m^B$ and its superiority over the case without outer-layer importance sampling.

\begin{theorem}[Asymptotic Optimality of $\Gamma_m^{\tt B}$]\label{thm:asympOpt_B}
Suppose that Assumptions~\ref{ass:inverse_demand} and~\ref{ass:balance} hold. Then, the BLISS estimator $\Gamma_m^{\tt B}$ is asymptotically optimal in the small volatility regime.
\end{theorem}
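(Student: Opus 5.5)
Since $\Gamma_m^{\tt B}$ is unbiased, Jensen gives $\widetilde\sE[(\Gamma_m^{\tt B})^2]\ge\gamma_m^2$, so $\liminf_m \log\widetilde\sE[(\Gamma_m^{\tt B})^2]/(2\log\gamma_m)\le 1$ (both logs negative). It therefore suffices to show
$$
\liminf_{m\to\infty}\frac{\log\gamma_m}{m^2}\ge -\frac12 I^\star,\qquad \limsup_{m\to\infty}\frac{\log\widetilde\sE[(\Gamma_m^{\tt B})^2]}{m^2}\le -I^\star,
$$
with $I^\star\coloneqq\min_{\bx}\{\ell_n^{\tt B}(\bx)^2+\|\bx\|^2\}$ and $I^\star>0$. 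Positivity should follow from $p_n(\bS^0)=\bar p_n$: at $\bx={\bf 0}$ the fictitious system sits at $\bs_{-n}^0$, so $\ell_n^{\tt B}({\bf 0})=\log S_n^0-\log v_n(\bs^0_{-n})>0$ by Proposition~\ref{prop:fictitious}, up to boundary equality which I would rule out or handle separately.

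The first step is to identify $\ell_n^{\tt B}$. Since $s_m^{\tt B}(m\bx)\to (S_i^0 e^{\sum_k\Lambda_{ik}x_k})_i$ uniformly on compacts, we get $\ell_n^{\tt B}(\bx)=\Lambda_{nn}^{-1}\bigl(\log S_n^0+\blambda_n^\top\bx-\log v_n(\bar s(\bx))\bigr)$ with $\bar s(\bx)_i=S_i^0e^{\sum_k\Lambda_{ik}x_k}$. This needs continuity of $v_n$ in $\bs_{-n}$, which I would obtain from uniqueness of the solution to \eqref{eq:market-clearing2} together with monotonicity: $\tilde q$ and $\tilde p_j$ are nondecreasing in $\bs_{-n}$, so $v_n$ is nonincreasing. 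Assumption~\ref{ass:balance} gives $v_n\ge 1$, and also $v_n\le\bar p_n$, so $\log v_n$ is bounded. Consequently $\ell_n^{\tt B}$ is continuous, and $\ell_{n,m}^{\tt B}(m\bx)/m\to\ell_n^{\tt B}(\bx)$ uniformly on compacts with a uniform error $O(1/m^2)$ from the $\sigma^2/(2m^2)$ terms.

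For the lower bound on $\gamma_m$, I use $h\ge c>0$ and write
$$
\gamma_m\ge c\,\sE\bigl[\Phi(-\ell_{n,m}^{\tt B}(\bZ_{-n}))\bigr].
$$
Fix $\bx^\star$ attaining or nearly attaining $I^\star$ and restrict to a ball $\{\bZ_{-n}\in m(\bx^\star+B_\delta)\}$. On it, $\ell_{n,m}^{\tt B}\le m(\ell_n^{\tt B}(\bx^\star)+\epsilon)$, and the standard Gaussian tail and ball estimates give $\log\gamma_m\ge -\tfrac{m^2}{2}\bigl((\ell_n^{\tt B}(\bx^\star))^{+2}+\|\bx^\star\|^2\bigr)-o(m^2)$. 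If $\ell_n^{\tt B}\le 0$ somewhere the rate degenerates, but then $I^\star$ is the smaller value anyway and the argument is consistent.

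The upper bound on the second moment is the main obstacle. After conditioning on the inner sampling and using $h\le C$,
$$
\widetilde\sE\bigl[(\Gamma_m^{\tt B})^2\bigr]\le C^2\,\sE\Bigl[\Phi\bigl(-\ell_{n,m}^{\tt B}(\bZ_{-n})\bigr)^2 e^{\frac12\|\bmu\|^2-\bmu^\top\bZ_{-n}}\Bigr],\qquad \bmu=m\bx^\star.
$$
I would substitute $\bZ_{-n}=m\by$, bound $\Phi(-t)^2\le e^{-(t^+)^2}$, and apply a Varadhan/Laplace upper bound. This yields the exponent $-m^2\inf_{\by}\{(\ell_n^{\tt B}(\by)^+)^2+\|\by\|^2/2+\by^\top\bx^\star-\|\bx^\star\|^2/2\}$. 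Two difficulties arise. First, Varadhan requires a tail/exponential-tightness condition. This is fine because the quadratic $\|\by\|^2/2+\by^\top\bx^\star$ is coercive and $\ell^+\ge0$, and the $O(1)$ perturbations coming from $m$-finite $\ell_{n,m}$ are controlled uniformly, since $\log v_n\in[0,\log\bar p_n]$ makes $\ell_{n,m}^{\tt B}(m\by)/m-\ell_n^{\tt B}(\by)$ uniformly $O(1/m)$ globally, not just on compacts.

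Second, I must show the infimum equals $I^\star$, i.e.\ that $\bx^\star$ solves the min–max problem. Here is how I would argue it. The candidate objective $F(\by)=(\ell^+)^2+\|\by\|^2/2+\by^\top\bx^\star-\|\bx^\star\|^2/2$ satisfies $F(\by)\ge(\ell^+)^2+\|\by\|^2-\tfrac12\|\by-\bx^\star\|^2$, which alone is not enough. So I would use the structure instead. $\ell_n^{\tt B}$ is $\Lambda_{nn}^{-1}(\text{affine}+g(\bx))$ with $g=-\log v_n\circ\bar s$ nondecreasing coordinatewise. If $g$ is convex (plausible, as $v_n$ is piecewise built from $\min$ operations composed with exponentials, which I would try to verify), then $\ell^2+\|\cdot\|^2$ is convex on $\{\ell\ge0\}$, and first-order optimality at $\bx^\star$ gives $\nabla(\ell^2)(\bx^\star)=-2\bx^\star$. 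Convexity then yields $(\ell^+)^2(\by)\ge I^\star-\|\bx^\star\|^2-2\bx^{\star\top}(\by-\bx^\star)$, and hence $F(\by)\ge I^\star+\tfrac12\|\by-\bx^\star\|^2\ge I^\star$. If convexity fails, my fallback is to note that the paper's choice \eqref{eq:mu_B} is justified by the same minimax (Sion) argument as in Section~\ref{subsec:large asset} applied to the surrogate bound, and to establish the needed inequality on the relevant region via the piecewise-smooth structure of $v_n$. Combining the two bounds then proves asymptotic optimality.
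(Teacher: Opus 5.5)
Your architecture matches the paper's. Jensen gives one inequality, and the rest is a first-moment lower bound at rate $I^*/2$ plus a second-moment upper bound at rate $I^*$. The paper gets these from $\bar\Phi(t)\ge t\phi(t)/(t^2+1)$, $\bar\Phi(t)\le\phi(t)/t$ and the Laplace method. Your ball estimate and Varadhan bound with $\bar\Phi(t)^2\le e^{-(t^+)^2}$ are, if anything, more careful on the region where $\ell_{n,m}^{\tt B}\le 0$.

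\textbf{The gap.} It is exactly the step you flag: showing $\inf_{\mathbf y}F(\mathbf y)=I^*$, where $F(\mathbf y)=(\ell_n^{\tt B}(\mathbf y)^+)^2+\tfrac12\|\mathbf y\|^2+\mathbf y^\top\bx^*-\tfrac12\|\bx^*\|^2$. Neither of your routes closes it.
\begin{itemize}
\item The convexity route is impossible. $g=-\log v_n\circ\bar s$ is bounded (you note $1\le v_n\le\bar p_n$), and a convex function on $\bbR^{n-1}$ that is bounded above is constant. Concretely, $g$ increases and then saturates once banks $1,\ldots,n-1$ are solvent in the fictitious system; compare $v_2(s_1)=\bar p_2-\pi_{12}\min\{\bar p_1,s_1+\pi_{21}\bar p_2\}$ in Figure~\ref{fig:default set}. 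This produces concave kinks.
\item The fallback is not an argument. The Sion swap in Section~\ref{subsec:large asset} needs concavity in $\bx$, which fails for the same reason. In any case it concerns the surrogate bound, not the Varadhan rate.
\end{itemize}

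\textbf{The inequality $F\ge I^*$ can fail.} So this step cannot be proved from Assumptions~\ref{ass:inverse_demand}--\ref{ass:balance} alone. First, $F(\bx^*)=I^*$. For a general shift $\bmu=m\mathbf u$, the analogous exponent at $\bx^*$ equals $I^*-\tfrac12\|\mathbf u-\bx^*\|^2$, so $\bmu=m\bx^*$ is the only order-$m$ shift that could possibly work.

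Now take $n=2$ and $\Lambda_{21}=0$. Then $\ell_2^{\tt B}(y)=\Lambda_{22}^{-1}\log\bigl(S_2^0/v_2(S_1^0e^{\Lambda_{11}y})\bigr)$ is nondecreasing and equals $\ell_2^{\tt B}(0)$ on $[y_0,\infty)$, where $y_0<0$ is the level at which bank~1 becomes solvent. If $x^*<y_0$, then $-x^*$ lies in the flat region and $F(-x^*)=\ell_2^{\tt B}(0)^2-(x^*)^2$. This is below $I^*$ as soon as $\ell_2^{\tt B}(0)^2<I^*+(x^*)^2$, while $x^*<y_0$ already forces $I^*<\ell_2^{\tt B}(0)^2$.

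A concrete instance: take $e_1,e_2>0$ negligible, $\bar p_1=\bar p_2=10$, $\pi_{12}=0.5$, $\pi_{21}=0.1$, $S_1^0=9.95$, $S_2^0=5.706$, $\Lambda_{11}=0.2$, $\Lambda_{22}=0.1$. Assumption~\ref{ass:balance} holds and both banks are solvent at $\bS^0$. One finds $y_0\approx-0.5$, $x^*\approx-0.93$, $I^*\approx1.23$, $\ell_2^{\tt B}(0)^2\approx1.74$, so $F(-x^*)\approx0.88$.

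Restrict the second-moment integral to $Z_1\approx 0.93m$, where bank~1 is healthy, bank~2 is hit directly, and the likelihood ratio is huge. This gives $\log\widetilde\sE[(\Gamma_m^{\tt B})^2]\ge-0.88m^2(1+o(1))$, against $2\log\gamma_m\sim-1.23m^2$. The ratio in the definition therefore has $\limsup$ about $0.7<1$. This is the classical two-dominating-point failure of a single exponential tilt, and by the remark above no order-$m$ mean shift repairs it here.

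\textbf{The paper's proof has the same hole.} This is not a fix you missed. Its Laplace step for the second moment evaluates the exponent at $\bx^*$, which is only a critical point of $\mathbf y\mapsto\ell_n^{\tt B}(\mathbf y)^2+\tfrac12\|\mathbf y+\bx^*\|^2-\|\bx^*\|^2$. That point is the global minimizer when $(\ell_n^{\tt B})^2$ is convex, as for the linear surrogate in the large asset regime, but not in general. In both arguments the missing ingredient is a dominating-point hypothesis such as $F(\mathbf y)\ge I^*$ for all $\mathbf y$, or else a different outer sampler (e.g.\ a mixture of tilts). With that hypothesis added, your proof essentially goes through.

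\textbf{A smaller point.} Boundedness of $\log v_n$ gives only an $O(1)$ discrepancy between $\ell_{n,m}^{\tt B}(m\mathbf y)/m$ and $\ell_n^{\tt B}(\mathbf y)$, not $O(1/m)$. A clean fix is to note that $s_m^{\tt B}(m\mathbf y)=\bar s(\mathbf y-\epsilon_m)$ with $\epsilon_m=O(m^{-2})$. Then $\ell_{n,m}^{\tt B}(m\mathbf y)/m=\ell_n^{\tt B}(\mathbf y-\epsilon_m)+O(m^{-2})$ uniformly in $\mathbf y$.
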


\begin{proposition}[Effectiveness of Optimizing $\bmu_{-n,m}^{\tt B}$]\label{prop:superior-sn}
    Suppose that Assumptions~\ref{ass:inverse_demand} and~\ref{ass:balance} hold. Let $\Gamma_m^{\tt 0}$ denote the BLISS estimator \eqref{eq:estimator_B} with $\bmu_{-n,m}^{\tt B}={\bf 0}_{n-1}$. Then, in the small volatility regime, we have
    $$    \lim_{m\to\infty}\frac{\log\sE[(\Gamma_m^{\tt B})^2]}{\log\sE[(\Gamma_m^{\tt 0})^2]}=\frac{\min_\bx\{{\ell_{n}^{\tt B}(\bx)}^2+{\|\bx\|^2}\}}{\min_\bx\{\ell_{n}^{\tt B}(\bx)^2+\|\bx\|^2/2\}}\geq 1.
    $$
    Moreover, the above limit is strictly greater than one if and only if $\blambda_n\neq{\bf 0}$.
\end{proposition}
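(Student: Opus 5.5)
We prove Proposition~\ref{prop:superior-sn} by computing the exact exponential decay rates of both second moments with Laplace/Varadhan-type arguments. Write $J(\bx)\coloneqq\ell_n^{\tt B}(\bx)^2+\|\bx\|^2$ and $J_0(\bx)\coloneqq\ell_n^{\tt B}(\bx)^2+\|\bx\|^2/2$. The target statement is
\begin{equation}
\lim_{m\to\infty}\frac{1}{m^2}\log\sE[(\Gamma_m^{\tt B})^2]=-\min_\bx J(\bx),\qquad \lim_{m\to\infty}\frac{1}{m^2}\log\sE[(\Gamma_m^{\tt 0})^2]=-\min_\bx J_0(\bx).
\end{equation}
The ratio then follows at once. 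The bound $\min J\ge\min J_0$ is immediate, and both minima are strictly positive, since $\ell_n^{\tt B}(\bx)>0$ near the origin by $p_n(\bS^0)=\bar p_n$ and Proposition~\ref{prop:fictitious}, while $\|\bx\|^2$ dominates elsewhere.

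\textbf{Step 1: change variables.} Since $h$ is bounded above and below by positive constants, it contributes only a constant factor and can be dropped. Under the original measure we write both second moments as
\begin{equation}
\sE\Big[\bar\Phi\big(\ell_{n,m}^{\tt B}(\bZ_{-n})\big)^2\exp\Big(\tfrac12\|\bmu\|^2-\bmu^\top\bZ_{-n}\Big)\Big],
\end{equation}
with $\bmu=\bmu^{\tt B}_{-n,m}$ or $\bmu={\bf 0}$. We then substitute $\bZ_{-n}=m\bx$ and $\bmu=m\bx^*$, where $\bx^*=\argmin J$.

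\textbf{Step 2: uniform convergence of the exponent.} We show $\ell_{n,m}^{\tt B}(m\bx)/m\to\ell_n^{\tt B}(\bx)$ locally uniformly, and that the limit is continuous. Here $\ell_n^{\tt B}(\bx)=\Lambda_{nn}^{-1}\big(\log S_n^0-\log v_n(\bar s(\bx))+\blambda_n^\top\bx\big)$ with $\bar s_i(\bx)=S_i^0\exp(\sum_k\Lambda_{ik}x_k)$. This requires continuity of $v_n$, which comes from continuity of the unique fixed point of \eqref{eq:market-clearing2} in $\bs_{-n}$ (Assumption~\ref{ass:inverse_demand}). It also requires $v_n\ge1$ by Assumption~\ref{ass:balance}, so the logarithm is well defined and bounded below. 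Because $v_n\le\bar p_n$ and $v_n\ge1$, we also get the linear bounds $|\ell_n^{\tt B}(\bx)-\blambda_n^\top\bx/\Lambda_{nn}|\le C$.

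\textbf{Step 3: matching bounds.} For the upper bound, use $\bar\Phi(y)\le e^{-y^2/2}$ for $y\ge0$, together with $\bar\Phi\le1$ on the region where $\ell<0$. For the lower bound, use $\bar\Phi(y)\ge c\,e^{-y^2/2}/(1+y)$. The integrand is then $\exp(-m^2[\ell_n^{\tt B}(\bx)^2+\tfrac12\|\bx\|^2-\tfrac12\|\bx^*\|^2+\bx^{*\top}\bx]+o(m^2))$ against the density $\propto e^{-m^2\|\bx\|^2/2}$, giving exponent
\begin{equation}
-m^2\big[\ell_n^{\tt B}(\bx)^2+\|\bx\|^2+\bx^{*\top}\bx-\tfrac12\|\bx^*\|^2\big]=-m^2\big[\ell_n^{\tt B}(\bx)^2+\|\bx+\bx^*/2\|^2\cdots\big].
\end{equation}
Completing the square must be done with care. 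Varadhan's lemma then gives the rate $-\inf_\bx\{\ell^2+\|\bx\|^2+\bx^{*\top}\bx-\tfrac12\|\bx^*\|^2\}$ in the tilted case, and $-\inf J_0$ in the untilted case. The region where $\ell_n^{\tt B}<0$ forces $\|\bx\|$ to be bounded away from zero and contributes at most $e^{-m^2\cdot\text{rate}}$ with a rate larger than the minimum, using the linear bounds from Step 2. The tail condition needed for Varadhan's lemma follows from the quadratic growth in $\bx$.

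\textbf{Step 4: identify the tilted rate — the main obstacle.} We need to show that the tilted rate equals $\min J$. A direct lower bound comes from Jensen's inequality: $\sE[\Gamma^2]\ge\gamma_m^2$, and $\gamma_m$ has rate $\tfrac12\min J$ by the same Laplace argument without squaring $\bar\Phi$. Hence the tilted second-moment rate is at most $\min J$. For the matching upper bound, one needs $\ell^2+\|\bx\|^2+\bx^{*\top}\bx-\tfrac12\|\bx^*\|^2\ge\min J$ for all $\bx$. If $\ell_n^{\tt B}$ is convex this follows from the subgradient optimality of $\bx^*$, since $-\bx^*\in\partial(\ell^2)(\bx^*)$. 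Without convexity, we would need to exploit the structure of $v_n$: the fixed-point payments are nondecreasing and concave-like in $\bs_{-n}$, though this is not automatic with fire sales. This is the weakest link of the plan, and it is also where Theorem~\ref{thm:asympOpt_B} must resolve the same issue, so we invoke it: asymptotic optimality gives the rate of $\sE[(\Gamma_m^{\tt B})^2]$ as $2\times\tfrac12\min J=\min J$ directly.

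\textbf{Step 5: the strict-inequality characterization.} If $\blambda_n={\bf 0}$, then $\ell_n^{\tt B}$ is minimized in the $\|\bx\|$ term at $\bx={\bf 0}$. Indeed, $v_n$ is nonincreasing in $\bs_{-n}$, so $\ell_n^{\tt B}$ is monotone; the exact case is $\blambda_n={\bf 0}$ with a minimizer at ${\bf 0}$. Then $\min J=\min J_0=\ell_n^{\tt B}({\bf 0})^2$. Conversely, if $\blambda_n\ne{\bf 0}$, take a minimizer $\bx_0$ of $J_0$. If $\bx_0\ne{\bf 0}$, then $J(\bx_0)>J_0(\bx_0)$, and a strict gap follows by comparing the minima. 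If $\bx_0={\bf 0}$, a first-order perturbation in the direction $-\blambda_n$ strictly decreases $\ell^2$, since the gradient of $\ell$ includes $\blambda_n/\Lambda_{nn}$ and the $v_n$ term moves in the same direction by monotonicity, while the quadratic term is second order. This contradicts minimality, so $\bx_0\ne{\bf 0}$.
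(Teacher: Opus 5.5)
Your derivation of the limit and of the bound $\ge 1$ follows the paper's route. Both second moments are shown to behave like $e^{-m^2\min J}$ and $e^{-m^2\min J_0}$, and the upper half for $\Gamma_m^{\tt B}$ comes from Theorem~\ref{thm:asympOpt_B}; the paper's proof likewise writes only a lower bound for $\Gamma_m^{\tt B}$ and takes the other side from that theorem. Using Varadhan's lemma instead of Laplace expansions with prefactors, and Jensen's $\sE[(\Gamma_m^{\tt B})^2]\ge\gamma_m^2$ instead of a direct lower bound, is harmless. It is arguably cleaner, since $v_n$ is only continuous.

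A few repairs are needed in this part:
\begin{itemize}
\item Your tilted exponent double-counts $\|\bx\|^2/2$. The correct one is $\ell_n^{\tt B}(\bx)^2+\frac12\|\bx\|^2+\bx^{*\top}\bx-\frac12\|\bx^*\|^2=J(\bx)-\frac12\|\bx-\bx^*\|^2$. So the condition you need is $J(\bx)\ge J(\bx^*)+\frac12\|\bx-\bx^*\|^2$. This holds when $(\ell_n^{\tt B})^2$ is convex, because $J$ is then $2$-strongly convex; the optimality condition is $-2\bx^*\in\partial(\ell^2)(\bx^*)$, not $-\bx^*$.
\item The paper's proof of Theorem~\ref{thm:asympOpt_B} simply applies the Laplace method at $\bx^*$ without checking this condition. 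Citing it is legitimate, but it inherits rather than resolves the non-convexity issue you raised.
\item The set $\{\ell_n^{\tt B}<0\}$ in the untilted and first-moment computations is controlled by continuity along the segment from ${\bf 0}$, not by your linear bounds.
\item $\ell_n^{\tt B}({\bf 0})>0$ needs $S_n^0>v_n(\bS^0_{-n})$. The assumption $p_n(\bS^0)=\bar p_n$ with Proposition~\ref{prop:fictitious} only gives $\ge$; the paper tacitly assumes the same through $\rho_n>0$.
\end{itemize}

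The genuine gap is Step 5, and the problem is not only rigor. Since $J\ge J_0$ with equality only at ${\bf 0}$, and $\min J$ is attained, $\min J=\min J_0$ holds if and only if ${\bf 0}\in\argmin J_0$. So the limit exceeds one exactly when $J_0(\bx)<\ell_n^{\tt B}({\bf 0})^2$ for some $\bx$.

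When $\blambda_n={\bf 0}$ you still have $\ell_n^{\tt B}(\bx)=\Lambda_{nn}^{-1}\big(\log S_n^0-\log v_n(\bar s(\bx))\big)$. Monotonicity of $v_n$ says $\ell_n^{\tt B}$ \emph{decreases} when $\bar s(\bx)$ decreases, which pulls the minimizer away from ${\bf 0}$ rather than pinning it there. Suppose some bank $i<n$ is already selling, but not exhausting, its illiquid assets at $\bS^0_{-n}$ in the fictitious system, and $Q$ is smooth (e.g.\ $e^{-\nu x}$). Then lowering $s_i$ depresses $\tilde q$ and raises $v_n$ at first order. Hence $\ell_n^{\tt B}(\bx)^2$ drops at first order in that direction while $\|\bx\|^2/2$ is second order. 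Then ${\bf 0}\notin\argmin J_0$, and the limit is strictly above one although $\blambda_n={\bf 0}$.

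This is not exotic. In the paper's EBA calibration, $\bar p_i-S_i^0-\sum_j\bar p_{ji}=0.6(\bar p_i-\sum_j\bar p_{ji})-0.4w_i>0$ for every bank, so every bank must already sell illiquid assets at $\bS^0$. That direction therefore cannot be proved without extra structure, such as $v_n\circ\bar s$ being constant. The paper's own argument does not supply it either: its final bound only yields strictness when $\bx^*\ne{\bf 0}$ and never links $\bx^*$ to $\blambda_n$.

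For the converse, the sub-case ``$\bx_0\ne{\bf 0}$ gives a strict gap'' is not valid as written, because ${\bf 0}$ might be a second minimizer of $J_0$. What you need, and what your perturbation argument really targets, is ${\bf 0}\notin\argmin J_0$.

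Moving along $-\blambda_n$ lowers $\ell_n^{\tt B}$ at first order only if $-\log v_n(\bar s(\bx))$ does not increase along that direction. Monotonicity gives this only when $\bar s(\bx)$ decreases componentwise, i.e.\ $\bLambda_{-n}\blambda_n\ge{\bf 0}$, where $\bLambda_{-n}$ is the leading $(n-1)\times(n-1)$ block of $\bLambda$. With negative correlations (the paper's $\bR$ has negative entries), the network term can cancel $\blambda_n/\Lambda_{nn}$.

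Under that sign condition, or when $v_n$ is locally constant at $\bS^0_{-n}$, your argument does show ${\bf 0}\notin\argmin J_0$, and hence a limit strictly above one. In general, what is provable is the characterization ``limit $>1$ iff ${\bf 0}\notin\argmin J_0$''.
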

Proposition~\ref{prop:superior-sn} characterizes the difference in the decay rates of the second moments of $\Gamma_m^{\tt B}$ and $\Gamma_m^{\tt 0}$  for the small volatility regime. Although this characterization differs from that for the large asset regime (Proposition~\ref{prop:superiority}), the primary implication remains the same: the two estimators' performances are comparable for uncorrelated assets, but the superiority of $\Gamma_m^{\tt B}$ over $\Gamma_m^{\tt 0}$ in estimation efficiency grows with correlation.

\section{Computational Performance and Empirical Validation}\label{sec:numerics}
To provide numerical validation of the preceding theoretical analyses and discussions, we begin with a synthetic dataset that illustrates the computational efficiency of the BLISS estimator against the state-of-the-art as network size scales. We subsequently analyze a real-world dataset to evaluate its simulation efficiency relative to the naive Monte Carlo method as the rarity of default events varies. In all implementations, we use MATLAB R2023a on
a Linux workstation equipped with Dual Intel Xeon E5-2697 2.6GHz CPU
 and 128 GB RAM. 

\subsection{Computational Complexity: A Toy Example}\label{subsec:numeric_1}
We consider $n$-bank financial networks with $\bar p_i = 5$, $\bar p_{i0}=4$, $e_i = 0$, $S^0_i = S^0$, $\Lambda_{ii} = \sigma$, and $\Lambda_{ij}=0$ for all $i=1,\ldots,n$ and $j\neq i$, where the values of $S^0$ and $\sigma$ will be specified later. We construct two relative liability matrices $\bPi^c$ and $\bPi^r$, representing complete and ring networks, respectively:
$$
\bPi^c = \left[\begin{array}{cccc}
      0 & \frac{.2}{n-1}& \cdots & \frac{.2}{n-1}  \\
      \frac{.2}{n-1}& 0 & \ddots& \vdots \\
      \vdots& \ddots& \ddots & \frac{.2}{n-1}  \\
      \frac{.2}{n-1}&\cdots & \frac{.2}{n-1}& 0  \\
\end{array}\right]
\quad\text{and}\quad\bPi^r = \left[\begin{array}{ccccc}
     0 & .2& 0& \cdots & 0 \\
     0& 0 & .2& \ddots& \vdots  \\
     \vdots& \vdots& \ddots & \ddots& 0 \\
     0& 0& \cdots& 0 & .2  \\
     .2 & 0& \cdots& 0& 0  \\
\end{array}\right].
$$
Both matrices satisfy $\sum_{j=1}^n\pi_{ij}=(\bar p_i-\bar p_{i0})/\bar p_i = 0.2$ for all $i=1,\ldots,n$. Under these configurations, we evaluate the relative error and running time of our BLISS method against the conditional importance sampling (CIS) method \citep{AhnZ-WSC:23,AhnZheng:23} and the naive Monte Carlo (MC), when network size $(n)$ increases from 4 to 12. Note that the relative error is defined as the standard error divided by the target estimate. We exclude illiquid assets in this example because, as alluded to earlier, the CIS method cannot be used once fire sales are incorporated.
For simplicity, we set $h\equiv1$ and aim to estimate bank $n$'s default probability in this example.

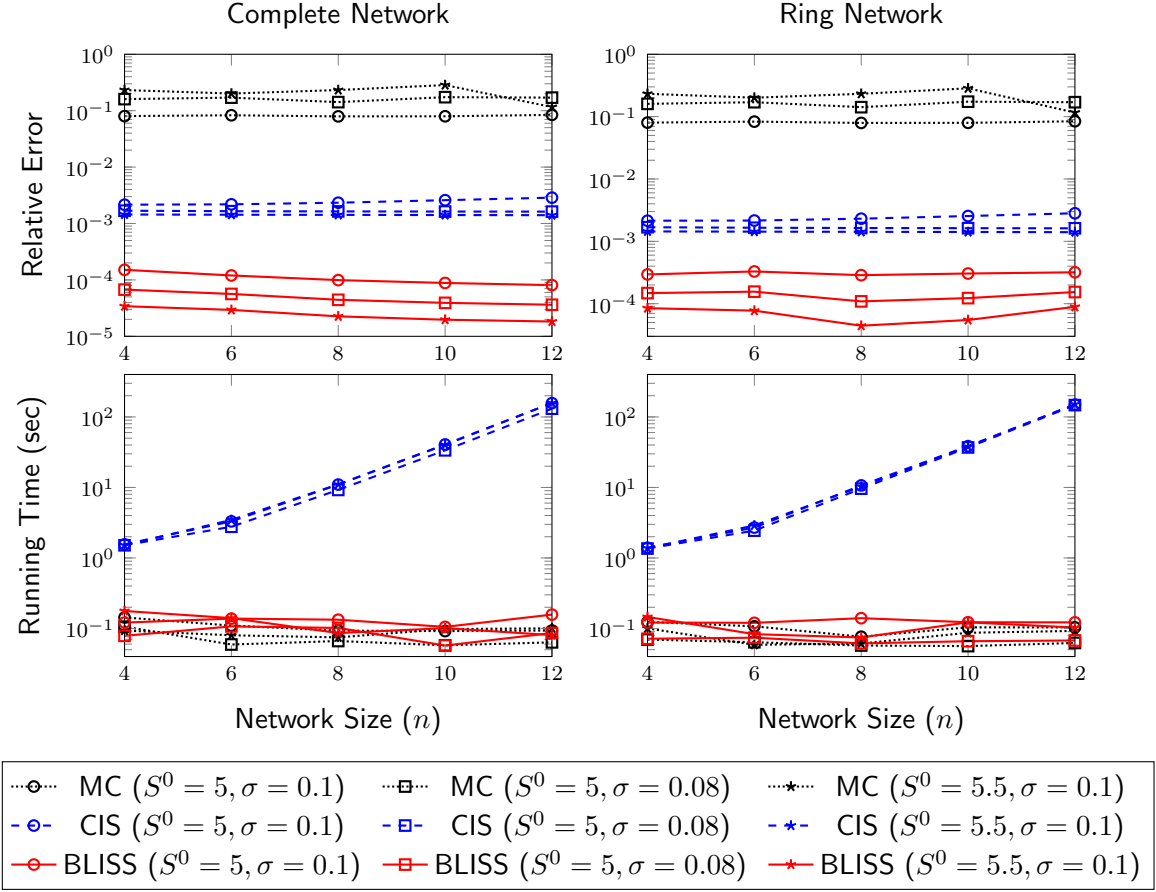
\begin{figure}[!t]
\centering
	\begin{tikzpicture}[font=\sffamily]
	\begin{semilogyaxis}[name=plot1,height=2.1in,width=0.44\textwidth,
    title = {Complete Network},
	ylabel={Relative Error},ytickten={-5,-4,-3,-2,-1,-0},
	ymin=0.00001,	ymax=1,
    xmin=4, xmax=12,
	every tick label/.append style={font=\scriptsize},
	axis on top,
	scaled x ticks = false,
	xticklabel style={/pgf/number format/fixed},
	/pgf/number format/1000 sep={}]
	\addplot+[black, densely dotted, thick, mark = o, mark size = 2pt, mark options=solid]	table[x index=0,y index=3, col sep=comma]{figures/toy_c_S5_sig1.csv};
	\addplot+[black, densely dotted, thick, mark = square, mark size = 2pt, mark options=solid]	table[x index=0,y index=3, col sep=comma]{figures/toy_c_S5_sig08.csv};
	\addplot+[black, densely dotted, thick, mark = star, mark options=solid]	table[x index=0,y index=3, col sep=comma]{figures/toy_c_S55_sig1.csv};
	\addplot+[blue, thick, dashed, mark = o, mark size = 2pt, mark options=solid]	table[x index=0,y index=2, col sep=comma]{figures/toy_c_S5_sig1.csv};
	\addplot+[blue, thick, dashed, mark = square, mark size = 2pt, mark options=solid]	table[x index=0,y index=2, col sep=comma]{figures/toy_c_S5_sig08.csv};
	\addplot+[blue, thick, dashed, mark = star, mark options=solid]	table[x index=0,y index=2, col sep=comma]{figures/toy_c_S55_sig1.csv};
	\addplot+[red, thick, solid, mark = o, mark size = 2pt, mark options=solid]	table[x index=0,y index=1, col sep=comma]{figures/toy_c_S5_sig1.csv};
	\addplot+[red, thick, solid, mark = square, mark size = 2pt, mark options=solid]	table[x index=0,y index=1, col sep=comma]{figures/toy_c_S5_sig08.csv};
	\addplot+[red, thick, solid, mark = star, mark options=solid]	table[x index=0,y index=1, col sep=comma]{figures/toy_c_S55_sig1.csv};
	\end{semilogyaxis}
	\begin{semilogyaxis}[name=plot2,height=2.1in,width=0.44\textwidth,at={($(plot1.east)+(0.5in,0in)$)},anchor=west,
    title = {Ring Network},
    ytickten={-5,-4,-3,-2,-1,-0},
	ymin=0.00003,	ymax=1,
    xmin=4, xmax=12,
	every tick label/.append style={font=\scriptsize},
	axis on top,
	scaled x ticks = false,
	xticklabel style={/pgf/number format/fixed},
	/pgf/number format/1000 sep={}]
	\addplot+[black, densely dotted, thick, mark = o, mark size = 2pt, mark options=solid]	table[x index=0,y index=3, col sep=comma]{figures/toy_r_S5_sig1.csv};
	\addplot+[black, densely dotted, thick, mark = square, mark size = 2pt, mark options=solid]	table[x index=0,y index=3, col sep=comma]{figures/toy_r_S5_sig08.csv};
	\addplot+[black, densely dotted, thick, mark = star, mark options=solid]	table[x index=0,y index=3, col sep=comma]{figures/toy_r_S55_sig1.csv};
	\addplot+[blue, thick, dashed, mark = o, mark size = 2pt, mark options=solid]	table[x index=0,y index=2, col sep=comma]{figures/toy_r_S5_sig1.csv};
	\addplot+[blue, thick, dashed, mark = square, mark size = 2pt, mark options=solid]	table[x index=0,y index=2, col sep=comma]{figures/toy_r_S5_sig08.csv};
	\addplot+[blue, thick, dashed, mark = star, mark options=solid]	table[x index=0,y index=2, col sep=comma]{figures/toy_r_S55_sig1.csv};
	\addplot+[red, thick, solid, mark = o, mark size = 2pt, mark options=solid]	table[x index=0,y index=1, col sep=comma]{figures/toy_r_S5_sig1.csv};
	\addplot+[red, thick, solid, mark = square, mark size = 2pt, mark options=solid]	table[x index=0,y index=1, col sep=comma]{figures/toy_r_S5_sig08.csv};
	\addplot+[red, thick, solid, mark = star, mark options=solid]	table[x index=0,y index=1, col sep=comma]{figures/toy_r_S55_sig1.csv};
	\end{semilogyaxis}
	\begin{semilogyaxis}[name=plot3,height=2.1in,width=0.44\textwidth,at={($(plot1.south)-(0in,0.2in)$)},anchor=north,
	xlabel={Network Size ($n$)},
	ylabel={Running Time (sec)},
	every tick label/.append style={font=\scriptsize},
	axis on top,
	scaled x ticks = false,
	xticklabel style={/pgf/number format/fixed},
	/pgf/number format/1000 sep={},
        ytickten={-2,-1,-0,1,2,3},
	ymin=0.04,	ymax=400,
    xmin=4, xmax=12]
	\addplot+[black, thick, densely dotted, mark = o, mark size = 2pt, mark options=solid]	table[x index=0,y index=6, col sep=comma]{figures/toy_c_S5_sig1.csv};
	\addplot+[black, thick, densely dotted, mark = square, mark size = 2pt, mark options=solid]	table[x index=0,y index=6, col sep=comma]{figures/toy_c_S5_sig08.csv};
	\addplot+[black, thick, densely dotted, mark = star, mark options=solid]	table[x index=0,y index=6, col sep=comma]{figures/toy_c_S55_sig1.csv};
	\addplot+[blue, thick, dashed, mark = o, mark size = 2pt, mark options=solid]	table[x index=0,y index=5, col sep=comma]{figures/toy_c_S5_sig1.csv};
	\addplot+[blue, thick, dashed, mark = square, mark size = 2pt, mark options=solid]	table[x index=0,y index=5, col sep=comma]{figures/toy_c_S5_sig08.csv};
	\addplot+[blue, thick, dashed, mark = star, mark options=solid]	table[x index=0,y index=5, col sep=comma]{figures/toy_c_S55_sig1.csv};
	\addplot+[red, thick, solid, mark = o, mark size = 2pt, mark options=solid]	table[x index=0,y index=4, col sep=comma]{figures/toy_c_S5_sig1.csv};
	\addplot+[red, thick, solid, mark = square, mark size = 2pt, mark options=solid]	table[x index=0,y index=4, col sep=comma]{figures/toy_c_S5_sig08.csv};
	\addplot+[red, thick, solid, mark = star]	table[x index=0,y index=4, col sep=comma]{figures/toy_c_S55_sig1.csv}; 
	\end{semilogyaxis}
	\begin{semilogyaxis}[name=plot4,height=2.1in,width=0.44\textwidth,at={($(plot3.east)+(0.5in,0in)$)},anchor=west,
	xlabel={Network Size ($n$)},
    legend style={at={($(plot3.east)+(-2.6in,-.8in)$)},anchor=south, legend columns = 3, font = \sffamily, {/tikz/every even column/.append style={column sep=0.2cm}}, legend entries={MC (${S^0=5,\sigma=0.1}$), MC (${S^0=5,\sigma=0.08}$), MC (${S^0=5.5,\sigma=0.1}$), CIS  (${S^0=5,\sigma=0.1}$), CIS  (${S^0=5,\sigma=0.08}$), CIS (${S^0=5.5,\sigma=0.1}$), BLISS (${S^0=5,\sigma=0.1}$), BLISS (${S^0=5,\sigma=0.08}$), BLISS (${S^0=5.5,\sigma=0.1}$)}},
	every tick label/.append style={font=\scriptsize},
	axis on top,
	scaled x ticks = false,
	xticklabel style={/pgf/number format/fixed},
	/pgf/number format/1000 sep={},
        ytickten={-2,-1,-0,1,2,3},
	ymin=0.04,	ymax=400,
    xmin=4, xmax=12]
	\addplot+[black, thick, densely dotted, mark = o, mark size = 2pt, mark options=solid]	table[x index=0,y index=6, col sep=comma]{figures/toy_r_S5_sig1.csv};
	\addplot+[black, thick, densely dotted, mark = square, mark size = 2pt, mark options=solid]	table[x index=0,y index=6, col sep=comma]{figures/toy_r_S5_sig08.csv};
	\addplot+[black, thick, densely dotted, mark = star, mark options=solid]	table[x index=0,y index=6, col sep=comma]{figures/toy_r_S55_sig1.csv};
	\addplot+[blue, thick, dashed, mark = o, mark size = 2pt, mark options=solid]	table[x index=0,y index=5, col sep=comma]{figures/toy_r_S5_sig1.csv};
	\addplot+[blue, thick, dashed, mark = square, mark size = 2pt, mark options=solid]	table[x index=0,y index=5, col sep=comma]{figures/toy_r_S5_sig08.csv};
	\addplot+[blue, thick, dashed, mark = star, mark options=solid]	table[x index=0,y index=5, col sep=comma]{figures/toy_r_S55_sig1.csv};
	\addplot+[red, thick, solid, mark = o, mark size = 2pt, mark options=solid]	table[x index=0,y index=4, col sep=comma]{figures/toy_r_S5_sig1.csv};
	\addplot+[red, thick, solid, mark = square, mark size = 2pt, mark options=solid]	table[x index=0,y index=4, col sep=comma]{figures/toy_r_S5_sig08.csv};
	\addplot+[red, thick, solid, mark = star]	table[x index=0,y index=4, col sep=comma]{figures/toy_r_S55_sig1.csv}; 
	\end{semilogyaxis}
	\end{tikzpicture}
    \caption{\label{fig:comparison}Performance comparison between MC, CIS, and BLISS for the toy example in Section~\ref{subsec:numeric_1} in terms of relative error (top panels) and running time (bottom panels) when network size changes. The first and second columns correspond to the cases of the complete and ring networks, respectively.}
\end{figure}

 Figure~\ref{fig:comparison} presents the associated numerical results for varying values of $S^0$ and $\sigma$, where we use 10,000 simulation trials to implement each of the three methods. The left column corresponds to the results with the complete network case, while the right column depicts those with the ring network configuration. The top and bottom rows compare the relative errors and running times, respectively, of the three methods across three different  $(S^0,\sigma)$ scenarios: $(5,0.1)$, $(5,0.08)$, and $(5.5,0.1)$. In these three scenarios, the probability estimates are of order $10^{-2},10^{-3},$, and $10^{-4}$, respectively, consistently across all values of $n$.\footnote{In this example, for fixed values of $(S^0,\sigma)$, the probability estimates remain largely unaffected by network size. This is because bank $n$'s interbank assets do not change with $n$, and the parameter setups are homogeneous across all banks.} 
 
 The figure demonstrates that the BLISS method completely dominates the state-of-the-art CIS method in terms of both relative error and running time. Specifically, BLISS achieves relative errors an order of magnitude smaller than those of CIS, and this advantage becomes more pronounced as target events become rarer (i.e., as probability estimates decrease). Furthermore, BLISS maintains constant running times regardless of network size, whereas CIS experiences exponential growth in running time as the network expands. Although the MC method is comparable to BLISS in speed, it suffers from substantially higher relative errors. Finally, the above observations turn out to be consistent across different network structures.

\subsection{Simulation Efficiency: Evidence from European Banking Network}\label{subsec:numeric_2}
In this subsection, we draw on the European Banking Authority
(EBA) stress test dataset \citep{EBA:18}, which covers major European banks in 2017, to assess the simulation efficiency of the BLISS method. We first describe our data collection and model calibration process and then illustrate the numerical experiments and their outcomes.

\subsubsection{Data Selection and Calibration}\label{subsubsec:calibration}
In the EBA stress test dataset, we use its Internal Ratings-Based (IRB) total exposure value and Tier 1 capital, both sourced from the dataset, as proxies for initial total assets and initial net worth $(w_i)$, respectively. Each bank~$i$'s total liabilities $(\bar p_i)$ are defined as the difference between the above two values.
We further define its total interbank assets, $(\sum_{j\neq i}\bar p_{ji})$, as the IRB exposures to other institutions reported in the dataset. While the dataset covers 48 banks, we exclude 12 institutions with no interbank exposures, resulting in a final sample of 36 banks $(n=36)$. The raw data on total assets, net worth, and interbank assets for these 36 banks (in millions of euros) are provided in Table~\ref{tab:eba}; the same dataset is used in \cite{bernard2022}. We use the numbers in the first column of the table for the subscripts of variables; e.g., HSBC's net worth is denoted by $w_1 = 125,976$.
Additionally, the final column of Table~\ref{tab:eba} presents the return volatilities of the corresponding equities in 2017, computed from daily stock price data provided by MarketWatch. For the ten banks not listed on stock exchanges, equity volatilities are set equal to the 2017 volatility of the EURO STOXX Banks Index and marked with *. We note that throughout 2017, European risk-free rates remained negligible, consistent with our framework. We therefore maintain the assumption of a zero risk-free rate.

    \begin{table}[!t] \centering\small
        \begin{tabular}{clrrrr}  
            \toprule
         No. & Bank name & Total asset & Net worth & Interbank asset & Equity vol. \\
            \midrule
1  & HSBC Holdings Plc                               & 1,322,909 & 125,976 & 117,004 & 16.46\% \\
2  & Groupe Crédit Agricole                          & 1,047,925 & 84,292  & 97,114  & 23.34\% \\
3  & BNP Paribas                                     & 1,012,707 & 84,417  & 63,604  & 22.17\% \\
4  & ING Groep N.V.                                  & 780,776   & 50,325  & 76,469  & 17.21\% \\
5  & Deutsche Bank AG                                & 758,140   & 57,631  & 58,015  & 31.54\% \\
6  & Groupe BPCE                                     & 668,255   & 59,490  & 32,956  & *19.41\% \\
7  & Société Générale S.A.                           & 642,940   & 49,514  & 53,400  & 25.53\% \\
8  & Lloyds Banking Group Plc                        & 590,827   & 40,948  & 8,817   & 17.49\% \\
9  & Banco Santander S.A.                            & 565,109   & 77,283  & 36,878  & 22.70\% \\
10 & Barclays Plc                                    & 562,002   & 60,765  & 49,797  & 21.87\% \\
11 & Coöperatieve Rabobank U.A.                      & 547,353   & 37,204  & 14,461  & *19.41\% \\
12 & The Royal Bank of Scotland Group Plc            & 490,122   & 44,577  & 23,685  & 23.29\% \\
13 & Nordea Bank 
& 437,347   & 28,008  & 40,127  & 18.00\% \\
14 & Group Crédit Mutuel                             & 430,308   & 45,578  & 44,606  & *19.41\% \\
15 & UniCredit S.p.A.                                & 395,077   & 54,703  & 39,890  & 34.86\% \\
16 & ABN AMRO Group N.V.                             & 367,487   & 19,618  & 14,942  & 19.82\% \\
17 & Commerzbank AG                                  & 314,214   & 25,985  & 42,564  & 28.94\% \\
18 & Intesa Sanpaolo S.p.A.                          & 309,144   & 43,466  & 36,125  & 21.10\% \\
19 & Banco Bilbao Vizcaya Argentaria S.A.            & 276,960   & 46,980  & 75,226  & 22.20\% \\
20 & Danske Bank                                     & 273,199   & 20,302  & 14,926  & 16.13\% \\
21 & Svenska Handelsbanken 
& 253,639   & 12,954  & 7,339   & 16.53\% \\
22 & KBC Group NV                                    & 226,455   & 16,552  & 17,128  & 18.27\% \\
23 & Skandinaviska Enskilda Banken 
& 209,082   & 13,452  & 14,944  & 14.99\% \\
24 & Landesbank Baden-Württemberg                    & 206,824   & 12,795  & 57,434  & *19.41\% \\
25 & Swedbank 
& 202,830   & 11,356  & 6,522   & 15.90\% \\
26 & DZ BANK AG                                      & 202,301   & 19,923  & 35,800  & *19.41\% \\
27 & Bayerische Landesbank                           & 193,192   & 9,393   & 22,731  & *19.41\% \\
28 & Erste Group Bank AG                             & 147,479   & 15,368  & 10,756  & 21.08\% \\
29 & Belfius Banque SA                               & 127,440   & 8,141   & 35,597  & *19.41\% \\
30 & Landesbank Hessen-Thüringen 
& 122,115   & 8,180   & 15,767  & *19.41\% \\
31 & Banco de Sabadell S.A.                          & 108,282   & 11,111  & 1,559   & 28.60\% \\
32 & OP Financial Group                              & 91,467    & 9,973   & 7,299   & *19.41\% \\
33 & Norddeutsche Landesbank 
& 71,764    & 6,229   & 16,037  & *19.41\% \\
34 & Bank of Ireland Group plc                       & 68,264    & 7,617   & 4,537   & 29.86\% \\
35 & Raiffeisen Bank International AG                & 61,499    & 9,839   & 5,416   & 31.87\% \\
36 & Allied Irish Banks Group plc                    & 48,157    & 11,028  & 10,064  & 56.28\% \\
            \bottomrule
        \end{tabular}
        \caption{Balance sheet characteristics of 36 major European banks from the 2018 EBA stress test. Total assets, net worth, and interbank assets are in millions of euros. Equity volatility is annualized. Banks marked with * are not publicly traded. We proxy their equity volatility using the volatility of the EURO STOXX Banks index. We use this data to calibrate the model as described in Section~\ref{subsubsec:calibration}.  \label{tab:eba}}
\label{tab:eba}
\end{table} 

Using the raw data in Table~\ref{tab:eba}, we construct external liabilities $(\bar p_{i0})$ by imposing a common ratio $(\alpha)$ of external liabilities to total liabilities across all banks; that is, $\bar p_{i0}=\alpha\bar p_i$ for all $i$. This ratio is calibrated such that the aggregate interbank liabilities match the aggregate interbank assets within the network, i.e., $\sum_{i=1}^n(\bar p_i - \bar p_{i0})=\sum_{i=1}^n\sum_{j=1}^n\bar p_{ji}$, or equivalently, 
$\alpha = 1-(\sum_{i=1}^n\sum_{j=1}^n\bar p_{ji})/(\sum_{i=1}^n\bar p_i)$. Moreover, the initial value of external liquid assets $(S_i^0)$ is set to a fixed portion $(\beta)$ of its nominal non-interbank assets: $S_i^0 = \beta(w_i+\bar p_i-\sum_{j=1}^n\bar p_{ji})$. The remainder of the non-interbank assets defines the nominal value of illiquid assets: $e_i =(1-\beta)(w_i+\bar p_i-\sum_{j=1}^n\bar p_{ji})$. We use the inverse demand function given by $Q(x)=\exp(-\nu x)$ as in \cite{Cifuentes:05}, \cite{Chen:16}, and \cite{bernard2022}, with $\nu=2.5\times 10^{-8}$ arbitrarily chosen. We also set $\beta = 40\%$ in our experiments. This choice is informed by T02.03.1 in \cite{ECB:18}, which shows that liquid assets accounted for approximately 40\% of the total assets of major European banks consistently throughout 2017. One can easily check that Assumptions~\ref{ass:inverse_demand} and~\ref{ass:balance} are satisfied under these specifications.

Moreover, we calculate the volatility $\sigma_i$ of each bank $i$ based on the Merton model~\citep{Merton:74}; specifically, we solve a fixed-point equation:
$\sigma_i = (w_i/A_i)\sigma_i^{\tt E}/\Delta(\sigma_i)$,
where $w_i$, $A_i$, and $\sigma_i^{\tt E}$ denote bank $i$'s net worth, total assets, and equity volatility, respectively, reported in Table~\ref{tab:eba}. The term $\Delta(\sigma_i)$ represents the Black–Scholes delta of a call option on the bank’s total asset value, with the strike price set to its total liabilities, a zero risk‑free rate, and a one‑year maturity, and depends on the asset volatility $\sigma_i$. While the solution to the above fixed-point equation represents the volatility of the total assets, we use it as a proxy for the volatility of the external liquid assets. With this, we define the matrix $\bLambda$ as $\bLambda = \texttt{diag}(\sigma_1,\ldots,\sigma_{36})\cdot\bR$, where $\bR$ is the matrix in Table~\ref{tab:correlation} of Appendix~\ref{app:numeric} for the correlated case and the identity matrix for the uncorrelated case. In both cases, the diagonal elements of the matrix $\bLambda\bLambda^\top$ correspond to $\sigma_1^2,\ldots,\sigma_{36}^2$.

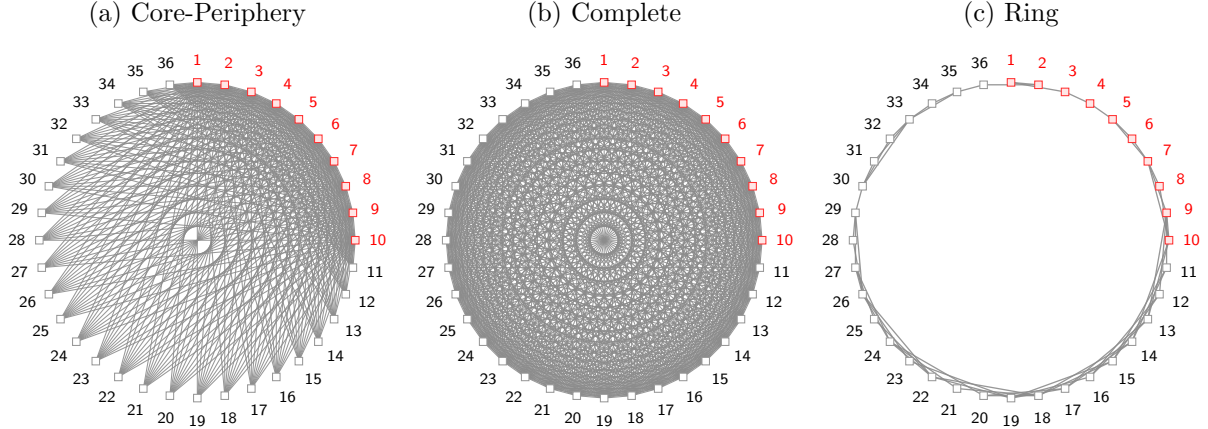
\begin{figure}[!t]
\centering
\begin{tikzpicture}
\node at (0,3cm) {\small (a) Core-Periphery};
    \def\n{36}             
    \def\highlight{10}     
    \def\radius{2.1cm}       
    \def\nodeSize{1mm}     
    \def\labelOffset{3mm}  

    \foreach \i in {1,...,\n} {
        \pgfmathsetmacro{\angle}{90 - (\i - 1) * (360 / \n)}
        \coordinate (n\i) at (\angle:\radius);
        \coordinate (t\i) at (\angle:\radius + \labelOffset);
    }


    \foreach \i in {1,...,\n} {
        \foreach \j in {\i,...,\n} {
            \ifnum \i=\j \else
                \ifnum \i > \highlight
                    \ifnum \j > \highlight
                    \else
                        \draw[gray!90, opacity=0.9, thin] (n\i) -- (n\j);
                    \fi
                \else
                    \draw[gray!90, opacity=0.9, thin] (n\i) -- (n\j);
                \fi
            \fi
        }
    }

    \foreach \i in {1,...,\n} {
        \ifnum \i > \highlight
            \node[
                rectangle, draw=gray!80, fill=white,
                inner sep=0pt, minimum size=\nodeSize
            ] at (n\i) {};
        \node[font=\sffamily\tiny, text=black] at (t\i) {\i};
        \else
            \node[
                rectangle, draw=red!80, fill=red!10,
                inner sep=0pt, minimum size=\nodeSize
            ] at (n\i) {};
        \node[font=\sffamily\tiny, text=red] at (t\i) {\i};
        \fi
    }

\end{tikzpicture}
\begin{tikzpicture}
\node at (0,3cm) {\small (b) Complete};
    \def\n{36}             
    \def\highlight{10}     
    \def\radius{2.1cm}       
    \def\nodeSize{1mm}     
    \def\labelOffset{3mm}  

    \foreach \i in {1,...,\n} {
        \pgfmathsetmacro{\angle}{90 - (\i - 1) * (360 / \n)}
        \coordinate (n\i) at (\angle:\radius);
        \coordinate (t\i) at (\angle:\radius + \labelOffset);
    }

    \foreach \i in {11,...,\n} {
        \foreach \j in {\i,...,\n} {
            \ifnum \i=\j \else
                \draw[gray!90, opacity=0.9, thin] (n\i) -- (n\j);
            \fi
        }
    }

    \foreach \i in {1,...,\n} {
        \foreach \j in {\i,...,\n} {
            \ifnum \i=\j \else
                \ifnum \i > \highlight
                    \ifnum \j > \highlight
                    \else
                        \draw[gray!90, opacity=0.9, thin] (n\i) -- (n\j);
                    \fi
                \else
                    \draw[gray!90, opacity=0.9, thin] (n\i) -- (n\j);
                \fi
            \fi
        }
    }

    \foreach \i in {1,...,\n} {
        \ifnum \i > \highlight
            \node[
                rectangle, draw=gray!80, fill=white,
                inner sep=0pt, minimum size=\nodeSize
            ] at (n\i) {};
        \node[font=\sffamily\tiny, text=black] at (t\i) {\i};
        \else
            \node[
                rectangle, draw=red!80, fill=red!10,
                inner sep=0pt, minimum size=\nodeSize
            ] at (n\i) {};
        \node[font=\sffamily\tiny, text=red] at (t\i) {\i};
        \fi
    }

\end{tikzpicture}
\begin{tikzpicture}
\node at (0,3cm) {\small (c) Ring};
    \def\n{36}             
    \def\highlight{10}     
    \def\radius{2.1cm}       
    \def\nodeSize{1mm}     
    \def\labelOffset{3mm}  

    \foreach \i in {1,...,\n} {
        \pgfmathsetmacro{\angle}{90 - (\i - 1) * (360 / \n)}
        \coordinate (n\i) at (\angle:\radius);
        \coordinate (t\i) at (\angle:\radius + \labelOffset);
    }

    \foreach \i/\j/\lw/\col in {
        1/2/0.500pt/gray!90,
        1/3/0.500pt/gray!90,
        2/36/0.500pt/gray!90,
        3/4/0.500pt/gray!90,
        4/5/0.500pt/gray!90,
        5/6/0.500pt/gray!90,
        5/7/0.500pt/gray!90,
        6/7/0.500pt/gray!90,
        7/8/0.500pt/gray!90,
        7/9/0.500pt/gray!90,
        7/10/0.500pt/gray!90,
        8/10/0.500pt/gray!90,
        9/10/0.500pt/gray!90,
        9/11/0.500pt/gray!90,
        9/12/0.500pt/gray!90,
        9/13/0.500pt/gray!90,
        10/13/0.500pt/gray!90,
        10/14/0.500pt/gray!90,
        11/14/0.500pt/gray!90,
        11/15/0.500pt/gray!90,
        12/15/0.500pt/gray!90,
        12/16/0.500pt/gray!90,
        13/16/0.500pt/gray!90,
        13/17/0.500pt/gray!90,
        14/17/0.500pt/gray!90,
        14/18/0.500pt/gray!90,
        15/18/0.500pt/gray!90,
        15/19/0.500pt/gray!90,
        16/19/0.500pt/gray!90,
        17/19/0.500pt/gray!90,
        17/20/0.500pt/gray!90,
        18/20/0.500pt/gray!90,
        18/21/0.500pt/gray!90,
        18/22/0.500pt/gray!90,
        19/22/0.500pt/gray!90,
        19/23/0.500pt/gray!90,
        20/24/0.500pt/gray!90,
        21/24/0.500pt/gray!90,
        22/24/0.500pt/gray!90,
        22/25/0.500pt/gray!90,
        22/26/0.500pt/gray!90,
        23/26/0.500pt/gray!90,
        24/26/0.500pt/gray!90,
        24/27/0.500pt/gray!90,
        25/27/0.500pt/gray!90,
        26/27/0.500pt/gray!90,
        26/28/0.500pt/gray!90,
        26/29/0.500pt/gray!90,
        27/29/0.500pt/gray!90,
        28/29/0.500pt/gray!90,
        29/30/0.500pt/gray!90,
        30/31/0.500pt/gray!90,
        30/32/0.500pt/gray!90,
        30/33/0.500pt/gray!90,
        31/33/0.500pt/gray!90,
        32/33/0.500pt/gray!90,
        33/34/0.500pt/gray!90,
        33/35/0.500pt/gray!90,
        34/35/0.500pt/gray!90,
        35/36/0.5pt/gray!90
    }{
        \draw[\col, opacity=0.9, line width=\lw] (n\i) -- (n\j);
    }

    \foreach \i in {1,...,\n} {
        \ifnum \i > \highlight
            \node[
                rectangle, draw=gray!80, fill=white,
                inner sep=0pt, minimum size=\nodeSize
            ] at (n\i) {};
            \node[font=\sffamily\tiny, text=black] at (t\i) {\i};
        \else
            \node[
                rectangle, draw=red!80, fill=red!10,
                inner sep=0pt, minimum size=\nodeSize
            ] at (n\i) {};
            \node[font=\sffamily\tiny, text=red] at (t\i) {\i};
        \fi
    }
\end{tikzpicture}
    \caption{The recovered interbank network structures from the 2018 EBA stress test data. For each structure, nodes represent individual banks, and edges stand for their interbank exposures. The red-colored nodes correspond to the core banks (banks 1 to 10).}
    \label{fig:topology}
\end{figure}
Since bilateral interbank exposure data are unavailable, we reconstruct three network topologies---core-periphery, complete, and ring---used in \cite{Chen:16}; see Figure~\ref{fig:topology} for their graphical illustration. This reconstruction is based on an entropy-minimization method in \cite{Upper:04}, defining the core as the ten largest banks by total assets (Banks 1 to 10).
The core-periphery network is a relatively sparse, non-extreme structure widely observed in real-world financial systems~\citep{Craig:14}. By contrast, the complete and ring structures represent two extremes of connectivity: the complete network is maximally dense, with all 36 banks interconnected (1,260 edges), while the ring network is minimally connected and sparse, containing only 74 edges.


\subsubsection{Numerical Experiments and Results}
In this subsection, we compute the bond price defined in~\eqref{eq:decomp} in the setup described above. We select Allied Irish Banks Group plc as the target institution. As one of the largest banks in the Republic of Ireland, it ranks 36th in the EBA dataset. This choice is motivated by the fact that it exhibits the highest volatility among the banks under consideration, making the assessment of its bond price particularly meaningful.  We consider scenarios for both correlated and uncorrelated assets.  We use standard Monte Carlo (MC) as a benchmark to evaluate the performance of our BLISS methodology. Furthermore, to highlight the importance of outer-layer importance sampling, we compare our approach against a version of BLISS that excludes this step, which we refer to as Inner-Layer Importance Sampling (ILIS). For all numerical experiments in this subsection, we use $10^6$ simulation trials to implement MC, ILIS, and BLISS.

\begin{figure}[!t]
\centering
	\begin{tikzpicture}[font=\sffamily]
	\begin{semilogyaxis}[name=plot2,height=2.1in,width=0.44\textwidth,
    title = {Correlated Assets},
	ylabel={Standard Error},
	every tick label/.append style={font=\scriptsize},
	axis on top,
	scaled x ticks = false,
	xticklabel style={/pgf/number format/fixed},
	/pgf/number format/1000 sep={}]
	\addplot+[black, densely dotted, thick, mark = o, mark size = 2pt, mark options=solid]	table[x index=0,y index=2, col sep=comma]{figures/real_p_la_price_cor.csv};
	\addplot+[blue, thick, dashed, mark = square, mark size = 2pt, mark options=solid]	table[x index=0,y index=7, col sep=comma]{figures/real_p_la_price_cor.csv};
	\addplot+[red, thick, solid, mark = star, mark options=solid]	table[x index=0,y index=12, col sep=comma]{figures/real_p_la_price_cor.csv};
	\end{semilogyaxis}
	\begin{semilogyaxis}[name=plot4,height=2.1in,width=0.44\textwidth,at={($(plot2.south)-(0in,0.2in)$)},anchor=north,
	xlabel={Asset Multiplier},
	ylabel={Efficiency Ratio},
	every tick label/.append style={font=\scriptsize},
	axis on top,
	scaled x ticks = false,
	xticklabel style={/pgf/number format/fixed},
	/pgf/number format/1000 sep={}]
	\addplot+[black, thick, densely dotted, mark = o, mark size = 2pt, mark options=solid]	table[x index=0,y index=5, col sep=comma]{figures/real_p_la_price_cor.csv};
	\addplot+[blue, thick, dashed, mark = square, mark size = 2pt, mark options=solid]	table[x index=0,y index=10, col sep=comma]{figures/real_p_la_price_cor.csv};
	\addplot+[red, thick, solid, mark = star]	table[x index=0,y index=15, col sep=comma]{figures/real_p_la_price_cor.csv}; 
	\end{semilogyaxis}
	\begin{semilogyaxis}[name=plot6,height=2.1in,width=0.44\textwidth,at={($(plot2.east)+(0.5in,0in)$)},anchor=west,
    title = {Uncorrelated Assets},
	every tick label/.append style={font=\scriptsize},
	axis on top,
	scaled x ticks = false,
	xticklabel style={/pgf/number format/fixed},
	/pgf/number format/1000 sep={},
    legend pos=outer north east, legend entries={MC, ILIS, BLISS}]
	\addplot+[black, densely dotted, thick, mark = o, mark size = 2pt, mark options=solid]	table[x index=0,y index=2, col sep=comma]{figures/real_p_la_price_uncor.csv};
	\addplot+[blue, thick, dashed, mark = square, mark size = 2pt, mark options=solid]	table[x index=0,y index=7, col sep=comma]{figures/real_p_la_price_uncor.csv};
	\addplot+[red, thick, solid, mark = star, mark options=solid]	table[x index=0,y index=12, col sep=comma]{figures/real_p_la_price_uncor.csv};
	\end{semilogyaxis}
	\begin{semilogyaxis}[name=plot8,height=2.1in,width=0.44\textwidth,at={($(plot6.south)-(0in,0.2in)$)},anchor=north,
	xlabel={Asset Multiplier},
	every tick label/.append style={font=\scriptsize},
	axis on top,
	scaled x ticks = false,
	xticklabel style={/pgf/number format/fixed},
	/pgf/number format/1000 sep={}]
	\addplot+[black, thick, densely dotted, mark = o, mark size = 2pt, mark options=solid]	table[x index=0,y index=5, col sep=comma]{figures/real_p_la_price_uncor.csv};
	\addplot+[blue, thick, dashed, mark = square, mark size = 2pt, mark options=solid]	table[x index=0,y index=10, col sep=comma]{figures/real_p_la_price_uncor.csv};
	\addplot+[red, thick, solid, mark = star]	table[x index=0,y index=15, col sep=comma]{figures/real_p_la_price_uncor.csv}; 
	\end{semilogyaxis}
	\end{tikzpicture}
    \caption{\label{fig:laP} Performance comparison between MC, ILIS, and BLISS for the real-world example in Section~\ref{subsec:numeric_2} with the core-periphery network when asset value changes. The top panels compare standard errors of the three methods, and the bottom panels exhibit their efficiency ratios. The left panels correspond to the case where liquid assets are correlated across different banks, while the right panels describe the case of uncorrelated assets.}
\end{figure}
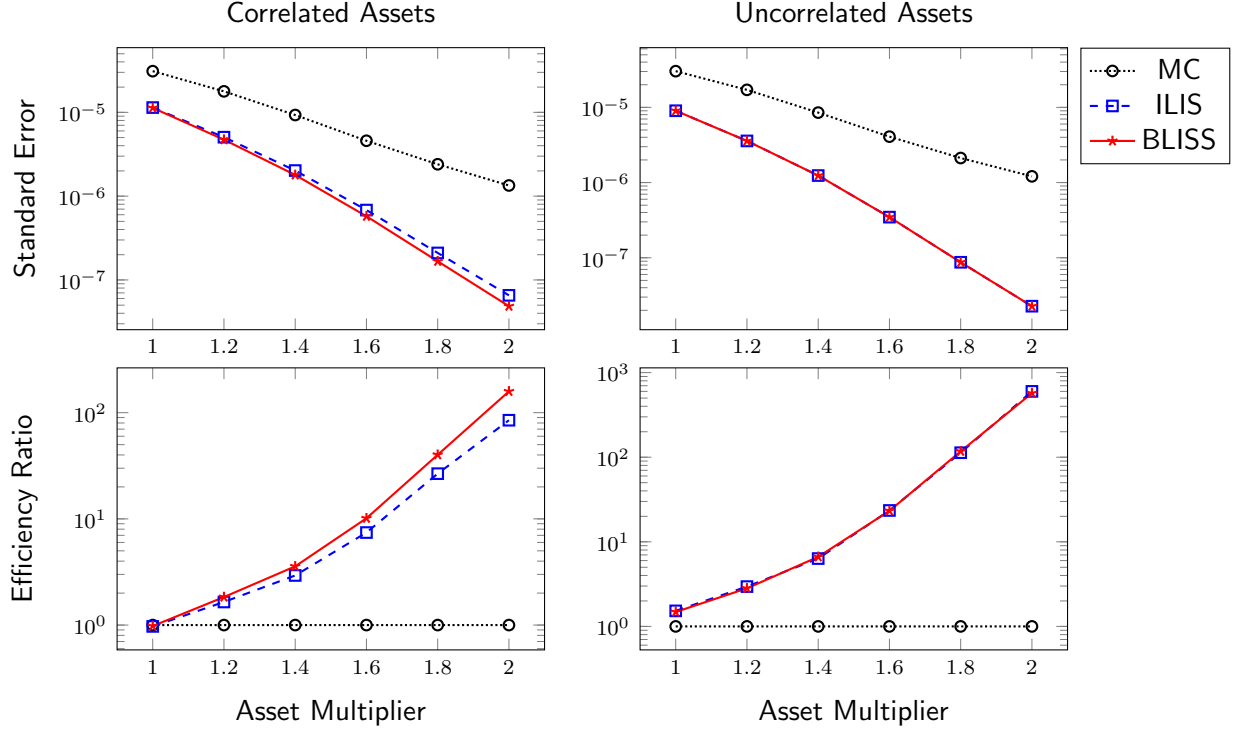

\begin{figure}[!t]
\centering
	\begin{tikzpicture}[font=\sffamily\small]
	\begin{semilogyaxis}[name=plot2,height=2.1in,width=0.44\textwidth,title={Correlated Assets},
	ylabel={Standard Error},
    x dir=reverse,
	every tick label/.append style={font=\scriptsize},
	axis on top,
	xtick=data,
	xticklabel style={/pgf/number format/fixed},
	/pgf/number format/1000 sep={}]
	\addplot+[black, densely dotted, thick, mark = o, mark size = 2pt, mark options=solid, skip coords between index={6}{10}]	table[x index=0,y index=2, col sep=comma]{figures/real_p_sv_price_cor.csv};
	\addplot+[blue, thick, dashed, mark = square, mark size = 2pt, mark options=solid, skip coords between index={6}{10}]	table[x index=0,y index=7, col sep=comma]{figures/real_p_sv_price_cor.csv};
	\addplot+[red, thick, solid, mark = star, mark options=solid, skip coords between index={6}{10}]	table[x index=0,y index=12, col sep=comma]{figures/real_p_sv_price_cor.csv};
	\end{semilogyaxis}
	\begin{semilogyaxis}[name=plot4,height=2.1in,width=0.44\textwidth,at={($(plot2.south)-(0in,0.2in)$)},anchor=north,
	xlabel={Volatility Multiplier},
	ylabel={Efficiency Ratio},
	every tick label/.append style={font=\scriptsize},
	axis on top,
	xtick=data,
	xticklabel style={/pgf/number format/fixed},
	/pgf/number format/1000 sep={},
    x dir = reverse]
	\addplot+[black, thick, densely dotted, mark = o, mark size = 2pt, mark options=solid, skip coords between index={6}{10}]	table[x index=0,y index=5, col sep=comma]{figures/real_p_sv_price_cor.csv};
	\addplot+[blue, thick, dashed, mark = square, mark size = 2pt, mark options=solid, skip coords between index={6}{10}]	table[x index=0,y index=10, col sep=comma]{figures/real_p_sv_price_cor.csv};
	\addplot+[red, thick, solid, mark = star, skip coords between index={6}{10}]	table[x index=0,y index=15, col sep=comma]{figures/real_p_sv_price_cor.csv}; 
	\end{semilogyaxis}
	\begin{semilogyaxis}[name=plot6,height=2.1in,width=0.44\textwidth,at={($(plot2.east)+(0.5in,0in)$)},anchor=west, title={Uncorrelated Assets},
    legend pos=outer north east, legend entries={MC, ILIS, BLISS},
    x dir=reverse,
	every tick label/.append style={font=\scriptsize},
	axis on top,
	xtick=data,
	xticklabel style={/pgf/number format/fixed},
	/pgf/number format/1000 sep={}]
	\addplot+[black, densely dotted, thick, mark = o, mark size = 2pt, mark options=solid, skip coords between index={6}{10}]	table[x index=0,y index=2, col sep=comma]{figures/real_p_sv_price_uncor.csv};
	\addplot+[blue, thick, dashed, mark = square, mark size = 2pt, mark options=solid, skip coords between index={6}{10}]	table[x index=0,y index=7, col sep=comma]{figures/real_p_sv_price_uncor.csv};
	\addplot+[red, thick, solid, mark = star, mark options=solid, skip coords between index={6}{10}]	table[x index=0,y index=12, col sep=comma]{figures/real_p_sv_price_uncor.csv};
	\end{semilogyaxis}
	\begin{semilogyaxis}[name=plot8,height=2.1in,width=0.44\textwidth,at={($(plot6.south)-(0in,0.2in)$)},anchor=north,
	xlabel={Volatility Multiplier},
	every tick label/.append style={font=\scriptsize},
	axis on top,
	xtick=data,
	xticklabel style={/pgf/number format/fixed},
	/pgf/number format/1000 sep={},
    x dir = reverse]
	\addplot+[black, thick, densely dotted, mark = o, mark size = 2pt, mark options=solid, skip coords between index={6}{10}]	table[x index=0,y index=5, col sep=comma]{figures/real_p_sv_price_uncor.csv};
	\addplot+[blue, thick, dashed, mark = square, mark size = 2pt, mark options=solid, skip coords between index={6}{10}]	table[x index=0,y index=10, col sep=comma]{figures/real_p_sv_price_uncor.csv};
	\addplot+[red, thick, solid, mark = star, skip coords between index={6}{10}]	table[x index=0,y index=15, col sep=comma]{figures/real_p_sv_price_uncor.csv}; 
	\end{semilogyaxis}
	\end{tikzpicture}
    \caption{\label{fig:svP} Performance comparison between MC, ILIS, and BLISS for the real-world example in Section~\ref{subsec:numeric_2} with the core-periphery network when volatility changes. The top panels compare standard errors of the three methods, and the bottom panels exhibit their efficiency ratios. The left panels correspond to the case where liquid assets are correlated across different banks, while the right panels describe the case of uncorrelated assets.}
\end{figure}

Figures~\ref{fig:laP} and~\ref{fig:svP} illustrate the performance of the above-mentioned three methods (MC, ILIS, and BLISS) for the core-periphery network under varying asset values and volatilities, respectively.\footnote{The figures for the complete and ring networks are nearly identical to those for core-periphery networks and are thus deferred to Appendix~\ref{app:numeric}.} The asset multiplier and volatility multiplier represent scaling factors applied to the initial price $S^0$ and matrix $\bLambda$; these figures correspond to large asset and small volatility regimes, respectively. We report standard errors and efficiency ratios. The efficiency ratio is defined as the ratio of the product of variance and runtime for the crude Monte Carlo method to that of the target method. It is widely regarded as a comprehensive measure of efficiency, accounting for differences in both variance and runtime \citep{GlassermanKS:08,AhnZheng:23}. 
In Table~\ref{tab:placeholder}, we additionally provide yield estimates for each of the asset and volatility multipliers we consider, from which we confirm that higher asset multipliers and lower volatility multipliers both result in rarer default events and, consequently, smaller spreads.

In both figures, BLISS demonstrates superior performance in terms of both standard errors and efficiency ratios, which remains robust across all cases we consider. As predicted by our theoretical results, the performance gap widens significantly in rare-event regimes, i.e., where the price approaches one. Conversely, crude Monte Carlo performs poorly in all cases. A comparison between ILIS and BLISS reveals that when assets are uncorrelated, the two methods perform almost identically. This is because the outer-layer importance sampling in BLISS is effective only under correlation (specifically, $\bmu_{-n,m}^{\tt A}$ and $\bmu_{-n,m}^{\tt B}$ become zero vectors in the absence of correlation). However, when assets are correlated, ILIS provides negligible variance reduction compared to MC but incurs a higher computational cost, resulting in poorer efficiency ratios. In contrast, BLISS significantly outperforms MC in these scenarios, underscoring the critical importance of the outer-layer sampling strategy. Furthermore, it is commonly observed that the performance gap between ILIS and BLISS gets larger as default events become rarer, which corresponds to our theoretical findings in Propositions~\ref{prop:superiority} and~\ref{prop:superior-sn}.

\begin{table}[!t]
    \centering\footnotesize
\begin{minipage}{.495\textwidth}\centering
\pgfplotstableread[col sep=comma]{figures/real_p_la_price_cor.csv}\datatable\pgfplotstableread[col sep=comma]{figures/real_p_la_price_uncor.csv}\tableB
\pgfplotstablecreatecol[copy column from table={\tableB}{11}]{Bcol0}\datatable
\pgfplotstabletypeset[
  columns={0,11,Bcol0}, 
  columns/0/.style={string type, column type=c},
  columns/11/.style={column type=c},
every head row/.style={output empty row,
before row={\toprule \multirow{2}{*}{\begin{tabular}{c}
     Asset\\Mult.
\end{tabular}} &   Yield& Yield        \\ 
 & (correlated) & (uncorrelated)\\ \midrule
        }},
every last row/.style={after row=\bottomrule}
]{\datatable}
\end{minipage}
\hfill%
\begin{minipage}{.495\textwidth}\centering
\pgfplotstableread[col sep=comma]{figures/real_p_SV_price_cor.csv}\datatab
\pgfplotstableread[col sep=comma]{figures/real_p_SV_price_uncor.csv}\tableA
\pgfplotstablecreatecol[copy column from table={\tableA}{11}]{Acol0}\datatab
\pgfplotstablesort[
    sort key={0},
    sort cmp={float >}
]\datatabl{\datatab}
\pgfplotstabletypeset[
  columns={0,11,Acol0}, 
  columns/0/.style={string type, column type=c},
  columns/11/.style={column type=c},
every head row/.style={output empty row,
before row={\toprule \multirow{2}{*}{\begin{tabular}{c}
     Vol.\\Mult.
\end{tabular}} &   Yield& Yield        \\ 
 & (correlated) & (uncorrelated)\\ \midrule
        }},
every last row/.style={after row=\bottomrule}
]{\datatabl}
\end{minipage}

    \caption{Yield estimates (in basis points) for different asset and volatility multipliers. The left and right tables show the estimates under the experimental settings of Figures~\ref{fig:laP} and~\ref{fig:svP}, respectively.}
    \label{tab:placeholder}
\end{table}

\section{Conclusion}\label{sec:conclusion}
In this paper, we have developed an efficient simulation framework for pricing corporate bonds of institutions, which account for network-induced credit risk. The valuation of corporate securities within these complex, interconnected systems is challenging because of the lack of closed-form solutions and the inefficiency of standard simulation methods in capturing rare default events.

To overcome these challenges, we proposed a novel computational framework---Bi-Level Importance Sampling with Splitting (BLISS)---to estimate bond prices in a network setting with simultaneous clearing of interbank liabilities which additionally accounts for fire sales. Specifically, we introduced a judiciously designed fictitious system that helps detach the target institution from the intertwined system, enabling us to reformulate its default condition in a tractable way. This allowed us to decompose the valuation problem into a tractable two-layer sampling process and apply exponential tilting for the outer-layer sampling and optimal importance sampling for the inner-layer sampling to achieve significant variance reduction.
Our theoretical analysis demonstrated that (1) BLISS's computational complexity grows linearly with the network size and (2) it is asymptotically optimal in large-asset and low-volatility regimes, underscoring its robustness in handling rare events. Numerical experiments clearly confirmed the effectiveness of our methodology in high-dimensional rare-event settings.

Several promising directions for future research emerge from this work. First, the BLISS framework could be extended to value more complex securities, including credit derivatives, contingent convertible bonds, or collateralized debt obligations, where network effects play an increasingly important role. Second, incorporating dynamic network structures, where interbank exposures evolve over time, would enhance the realism of the model and enable term-structure pricing of corporate bonds. Third, the methodology could be adapted to stress testing and regulatory capital assessment, where financial institutions must evaluate tail-risk scenarios across large portfolios efficiently.

\appendix

\section{Proofs of the Main Theoretical Results}\label{app:proofs}
\begin{proof}[Proof of Proposition~\ref{prop:fictitious}] 
For any $\bs,\bs'\in\bbR_+^n$ satisfying $\bs\leq\bs'$, let
$$
\left\{~\begin{aligned}
    &q^1 = Q\left(\sum_{i=1}^{n}\lt\{\frac{\big(\bar p_i -  s_i'- \sum_j \pi_{ji} p_j(\bs)\big)^+}{q(\bs)} \wedge e_i \rt\}\right),		\\
    &\bp^1 = \bar \bp \wedge \left(\bs'+ q(\bs) \be +  \bPi^\top\bp(\bs)\right).
    \end{aligned}\right.
 $$
and
$$
\left\{~\begin{aligned}
    &q^{k+1} = Q\left(\sum_{i=1}^{n}\lt\{\frac{\big(\bar p_i -  s_i'- \sum_j \pi_{ji} p_j^k\big)^+}{q^k} \wedge e_i \rt\}\right),		\\
    &\bp^{k+1} = \bar \bp \wedge \left(\bs'+ q^k \be +  \bPi^\top\bp^k\right)
    \end{aligned}\right.
$$
for all $k\geq1$. Then, by Assumption~\ref{ass:inverse_demand}\ref{as:Q-decreasing}, it is easy to check that $q(\bs)\leq q^1\leq q^2\leq \cdots$ and $\bp(\bs)\leq\bp^1\leq\bp^2\leq\cdots$. Furthermore, we have $q^k\to q(\bs')$ and $\bp^k\to\bp(\bs')$ as $k\to\infty$. Thus, we have
\begin{equation}\label{eq:qp-inc}
q(\bs)\leq q(\bs')~~\text{and}~~\bp(\bs)\leq\bp(\bs')~~\text{for any}~\bs,\bs'\in\bbR_+^n~\text{satisfying}~\bs\leq\bs'.
\end{equation}

Fix $\tilde s_1,\ldots,\tilde s_{n-1}\in\bbR_+$. 
Define a real-valued function $\zeta(\cdot)$ by
$$
\begin{aligned}
    \zeta(s_n) = s_n +q(\tilde s_1,\ldots,\tilde s_{n-1},s_n)e_n + \sum_{j=1}^{n-1} \pi_{jn} p_j(\tilde s_1,\ldots,\tilde s_{n-1},s_n)-\bar p_n,
\end{aligned}
$$
which is strictly increasing by \eqref{eq:qp-inc}. Observe that $\zeta(s_n)<0$ for all $s_n$ small enough since we have $\bar p_n>Q(0)e_n+\sum_{j=1}^{n-1} \pi_{jn} \bar p_j$, whereas $\zeta(s_n)>0$ for all $s_n$ large enough. Hence, it is straightforward that there exists a unique value $s_n^*$ satisfying $\zeta(s_n^*)=0$.
Then, we have
$$
\frac{\bar p_n - s_n^*- \sum_j \pi_{jn} p_j(\tilde s_1,\ldots,\tilde s_{n-1},s_n^*)}{q(\tilde s_1,\ldots,\tilde s_{n-1},s_n^*)} = e_n~~\text{and}~~p_n(\tilde s_1,\ldots,\tilde s_{n-1},s_n^*)=\bar p_n.
$$
This suggests that $q(\tilde s_1,\ldots,\tilde s_{n-1},s_n^*)=\tilde q(\tilde s_1,\ldots,\tilde s_{n-1})$ and $p_i(\tilde s_1,\ldots,\tilde s_{n-1},s_n^*)=\tilde p_i(\tilde s_1,\ldots,\tilde s_{n-1})$ for $i=1,2,\ldots,n-1$ by the uniqueness of $q(\cdot)$, $\tilde q(\cdot)$, $\bp(\cdot)$, and $\tilde\bp(\cdot)$. Therefore, $s_n^*=v_n(\tilde s_1,\ldots,\tilde s_{n-1})$. 
Accordingly, we have
\begin{equation}\label{eq:set-equiv}
\begin{aligned}
    p_n(\tilde s_1,\ldots,\tilde s_{n-1}, s_n)<\bar p_n
    ~~&\Leftrightarrow~~\zeta(s_n)<0\\
    &\Leftrightarrow~~s_n<s_n^*=v_n(\tilde s_1,\ldots,\tilde s_{n-1}).
\end{aligned}
\end{equation}
This completes the proof since $\tilde s_1,\ldots,\tilde s_{n-1}$ are chosen arbitrarily.
\end{proof}

\begin{proof}[Proof of Theorem~\ref{thm:asympOpt_A}]
We prove this result in three steps. 

\textbf{Step 1.} 
    Let $I(Y) = \int_{-\infty}^\infty  \bar\Phi(Y+Ax)^2 \phi(x) \, dx,$
where $A \in \mathbb{R}$ is a constant. We claim that 
\begin{equation}\label{eq:IY}
    I(Y) \sim \frac{(1 + 2A^2)^{3/2}}{2\pi Y^2} \exp\left( -\frac{Y^2}{1 + 2A^2} \right)~~\text{as}~Y\to\infty.
\end{equation}
By Mill's ratio, we observe that
\begin{equation} \label{eq:mills_sq}
    \bar\Phi(u)^2 = \frac{1}{2\pi u^2} e^{-u^2} (1 + \delta(u)),
\end{equation}
where $\delta(u) = O(u^{-2})$ as $u \to \infty$.
By substituting the leading order term of Mill's ratio, we define $\imath(\cdot)$ that approximates the integrand of $I$:
\begin{equation}\label{eq:imath}
    \imath(x) = \frac{\exp(\Psi(x))}{(2\pi)^{3/2} (Y+Ax)^2},
\end{equation}
where $\Psi(x) \coloneqq -(Y+Ax)^2 - {x^2}/{2} = -Y^2 - 2AYx - (A^2 + 1/2)x^2$.
Then $x^* \coloneqq -{2AY}/({1+2A^2})$ is the unique solution to the first-order equation $\Psi'(x) = 0$. 

By the Laplace method~\citep{Wong:01}, the prefactor term $g(x) \coloneqq {1}/{(Y+Ax)^2}$ in the above integral is approximated by $g(x^*)={(1+2A^2)^2}/{Y^2}$ as $Y \to \infty$. Also, a simple calculation shows that $\Psi(x) = -Y/(1+2A^2) - {(1+2A^2)}(x-x^*)^2/{2}$. Hence, by combining these results with \eqref{eq:imath}, we have
\begin{equation}
\begin{aligned}
    I(Y) &\sim \frac{(1+2A^2)^2}{2\pi Y^2} \exp\left( -\frac{Y^2}{1+2A^2} \right) \int_{-\infty}^\infty \frac{1}{\sqrt{2\pi}}e^{-\frac{(1+2A^2)}{2}(x-x^*)^2} \, dx
\end{aligned}
\end{equation}
as $Y\to\infty$. Finally, \eqref{eq:IY} holds since the Gaussian integral evaluates to:
\begin{equation}
    \int_{-\infty}^\infty\frac{1}{\sqrt{2\pi}}e^{-\frac{(1+2A^2)}{2}(x-x^*)^2} \, dx = \sqrt{\frac{1}{1+2A^2}}.
\end{equation}
One can similarly show that 
\begin{equation}\label{eq:JY}
    \int_{-\infty}^\infty  \bar\Phi(Y+Ax) \phi(x) \, dx\sim\frac{\sqrt{1+A^2}}{\sqrt{2\pi}Y}\exp\lt(-\frac{Y^2}{2+2A^2}\rt)~~\text{as}~Y\to\infty.
\end{equation}

\textbf{Step 2.} We define $\tau_n\coloneqq\sigma_n^2/2+\log(\bar p_n-Q(e_n)e_n-\sum_{j=1}^{n-1} \pi_{jn} \bar p_j)>0$, where the inequality holds by Assumptions~\ref{ass:inverse_demand} and~\ref{ass:balance}. Then, recalling that $\kappa_n\coloneqq\sigma_n^2/2+\log(v_n({\bf 0}))<\infty$, it is easy to see that $\tau_n\leq {\sigma_n^2}/{2}+\log\lt(v_n(s_m^{\tt A}(\bx))\rt)\leq \kappa_n$ for all $\bx$ and $m$. This implies that for any fixed $\bx$, $\ell_{n,m}^{\tt A}(\bx)$ increases to infinity as $m$ grows. 
We let $\bmu_m = \bmu_{-n,m}^{\tt A}$ for notational convenience.
Then, based on the above observation, we find the following relationship for the second moment of the BLISS estimator:
\begin{align}
\widetilde\sE\big[(\Gamma_m^{\tt A})^2\big]
&\leq h_u^2\sE\bigg[\bar\Phi\big(\ell_{n,m}^{\tt A}(\bZ_{-n})\big)^2 \exp\left(-\bmu_m^\top\bZ_{-n}+\frac12\|\bmu_m\|^2\right)\bigg]\\
&= h_u^2\int_{\bbR^{n-1}}\bar\Phi({\ell_{n,m}^{\tt A}(\bx)})^2e^{-\bmu_m^\top\bx+(1/2)\|\bmu_m\|^2}\phi_{n-1}(\bx)\rd\bx\\
&= h_u^2e^{\|\bmu_m\|^2}\int_{\bbR^{n-1}}\bar\Phi({\ell_{n,m}^{\tt A}(\bx)})^2\phi_{n-1}(\bx+\bmu_m)\rd\bx\\
&= h_u^2e^{\|\bmu_m\|^2}\int_{\bbR^{n-1}}\bar\Phi({\ell_{n,m}^{\tt A}(\bz-\bmu_m)})^2\phi_{n-1}(\bz)\rd\bz\\
&\leq  h_u^2e^{\|\bmu_m\|^2}\int_{\bbR}\bar\Phi\big(\Lambda_{nn}^{-1}(\log S_{n,m}^0 -\kappa_n-\blambda_n^\top\bmu_m+\|\blambda_n\| y)\big)^2\phi(y)\rd y,\label{eq:ubd_last}
\end{align}
where $h_u$ denotes the uniform upper bound of $h(\cdot)$.

Next, we have the following identity:
\begin{equation}\label{eq:lammu}
\begin{aligned}
\blambda_n^\top\bmu_m &= -(\log S_{n,m}^0-\kappa_n) \blambda_n^\top(\blambda_n\blambda_n^\top+\Lambda_{nn}^2\bI)^{-1}\blambda_n\\
&=-\frac{\log S_{n,m}^0-\kappa_n}{\Lambda_{nn}^2} \blambda_n^\top\lt(\bI-\frac{1}{\sigma_n^2}\blambda_n\blambda_n^\top \rt)\blambda_n\\
&=-\frac{\log S_{n,m}^0-\kappa_n}{\Lambda_{nn}^2} \lt(\blambda_n^\top\blambda_n-\frac{(\blambda_n^\top\blambda_n)^2}{\sigma_n^2} \rt)\\
&=- (\log S_{n,m}^0-\kappa_n)\frac{\blambda_n^\top\blambda_n}{\sigma_n^2}\\
&=- (\log S_{n,m}^0-\kappa_n)(1-\Lambda_{nn}^2/\sigma_n^2),
\end{aligned}
\end{equation}
where the second equality holds by the Sherman-Morrison formula and $\sigma_n^2=\blambda_n^\top\blambda_n+\Lambda_{nn}^2$, and similarly, it is easy to check that
$
\|\bmu_m\|^2 = {(\log S_{n,m}^0-\kappa_n)^2}(1-{\Lambda_{nn}^2}/{\sigma_n^2})/{\sigma_n^2}.
$
Therefore, we have
$$
\begin{aligned}
&\text{Right-hand side of \eqref{eq:ubd_last}}\\
&= h_u^2\exp\lt\{\frac{(\log S_{n,m}^0-\kappa_n)^2}{\sigma_n^2}\lt(1-\frac{\Lambda_{nn}^2}{\sigma_n^2}\rt)\rt\}\int_{\bbR}\bar\Phi\lt(\frac{\log S_{n,m}^0-\kappa_n}{\Lambda_{nn}}\lt(2-\frac{\Lambda_{nn}^2}{\sigma_n^2}\rt)+\frac{\|\blambda_n\|}{\Lambda_{nn}} y\rt)^2\phi(y)\rd y\\
&\sim  \frac{h_u^2 (1+A^2)^2}{2\pi Y_m^2\sqrt{1+2A^2}} \exp\left( -\frac{Y_m^2}{1+A^2} \right)~~\text{as}~m\to\infty,
\end{aligned}
$$
where $Y_m = \Lambda_{nn}^{-1}({\log S_{n,m}^0-\kappa_n})$, $A={\|\blambda_n\|}/{\Lambda_{nn}}$, and the asymptotic equivalence holds by \eqref{eq:IY}. 

\textbf{Step 3.} Analogous to Step 2, we obtain the following asymptotic relationship for the first moment of the BLISS estimator as $m$ grows large:
\begin{align}
\widetilde\sE[\Gamma_m^{\tt A}]
&\geq h_l\sE\Big[\bar\Phi\big(\ell_{n,m}^{\tt A}(\bZ_{-n,m}^{\tt A})\big)\Big]\\
&= h_l\int_{\bbR^{n-1}}\bar\Phi\big(\ell_{n,m}^{\tt A}(\bx)\big)\phi_{n-1}(\bx)\rd\bx\\
&\geq h_l\int_{\bbR}\bar\Phi\big(\Lambda_{nn}^{-1}(\log S_{n,m}^0 -\tau_n+\|\blambda_n\| y)\big)\phi(y)\rd y\\
&\sim h_l\frac{\sqrt{1+A^2}}{\sqrt{2\pi}\widetilde Y_m}\exp\lt(-\frac{\widetilde Y_m^2}{2+2A^2}\rt),
\end{align}
where $\widetilde Y_m = \Lambda_{nn}^{-1}({\log S_{n,m}^0-\tau_n})$, $A={\|\blambda_n\|}/{\Lambda_{nn}}$, $h_l>0$ denotes the uniform lower bound of $h(\cdot)$, and the asymptotic equivalence holds by \eqref{eq:IY}. 

Accordingly, we conclude that
$$
\begin{aligned}
&\liminf_{m\to\infty}\frac{\log\widetilde\sE\big[(\Gamma_m^{\tt A})^2\big]}{2\log\widetilde\sE[\Gamma_m^{\tt A}]} 
\geq \lim_{m\to\infty}\frac{Y_m^2}{\widetilde Y_m^2} = 1.
\end{aligned}
$$
Since $\widetilde\sE\big[(\Gamma_m^{\tt A})^2\big]\geq \widetilde\sE[\Gamma_m^{\tt A}]^2$, the desired result follows.
\end{proof}

\begin{proof}[Proof of Proposition~\ref{prop:superiority}]
Define $Y_m$, $\widetilde Y_m$ and $A$ as in the proof of Theorem~\ref{thm:asympOpt_A}.
According to the proof of Theorem~\ref{thm:asympOpt_A}, we have
$$
\widetilde\sE\big[(\Gamma_m^{\tt A})^2\big]\lesssim\frac{h_u^2 (1+A^2)^2}{2\pi Y_m^2\sqrt{1+2A^2}} \exp\left( -\frac{Y_m^2}{1+A^2} \right)~~\text{as}~m\to\infty.
$$
Using a similar argument, we also observe that as $m\to\infty$,
$$
\widetilde\sE\big[(\Gamma_m^{\tt A})^2\big]\gtrsim\frac{h_l^2 (1+A^2)^2}{2\pi \widetilde Y_m^2\sqrt{1+2A^2}} \exp\left( -\frac{\widetilde Y_m^2}{1+A^2} \right),
$$
which results in $\log\widetilde\sE\big[(\Gamma_m^{\tt A})^2\big]\sim -Y_m^2/(1+A^2)$ since $Y_m\sim\widetilde Y_m$.

Similarly, we obtain
$$
\frac{h_l^2(1 + 2A^2)^{3/2}}{2\pi\widetilde Y_m^2} \exp\left( -\frac{\widetilde Y_m^2}{1 + 2A^2} \right)\lesssim\sE\big[(\Gamma_m^{\tt O})^2\big]\lesssim\frac{h_u^2(1 + 2A^2)^{3/2}}{2\pi Y_m^2} \exp\left( -\frac{Y_m^2}{1 + 2A^2} \right),
$$
and thus, $\sE\big[(\Gamma_m^{\tt O})^2\big]\sim -Y_m^2/(1+2A^2)$ as $m\to\infty$.
Consequently, we get
$$
\begin{aligned}
&\lim_{m\to\infty}\frac{\log\widetilde\sE\big[(\Gamma_m^{\tt A})^2\big]}{\log\sE\big[(\Gamma_m^{\tt O})^2\big]} 
 = \frac{1+2A^2}{1+A^2}=2-\frac{\Lambda_{nn}^2}{\sigma_n^2}.
\end{aligned}
$$
Hence, the proof is complete.
\end{proof}

\begin{proof}[Proof of Theorem~\ref{thm:asympOpt_B}]
We first obtain the following relationship of the first moment of the BLISS estimator for all $m$ sufficiently large:
\begin{align}
\widetilde\sE[\Gamma_m^{\tt B}]
&\geq h_l\sE\Big[\bar\Phi\big(\ell_{n,m}^{\tt B}(\bZ_{-n,m}^{\tt B})\big)\Big]\\
&= h_l\int_{\bbR^{n-1}}\bar\Phi\big(\ell_{n,m}^{\tt B}(\bz)\big)\phi_{n-1}(\bz)\rd\bz\\
&= h_lm^{n-1}\int_{\bbR^{n-1}}\bar\Phi\big(\ell_{n,m}^{\tt B}(m\bx)\big)\phi_{n-1}(m\bx)\rd\bx\\
&\geq h_lm^{n-1}\int_{\bbR^{n-1}}\frac{\ell_{n,m}^{\tt B}(m\bx)}{\ell_{n,m}^{\tt B}(m\bx)^2+1}\phi\big(\ell_{n,m}^{\tt B}(m\bx)\big)\phi_{n-1}(m\bx)\rd\bx
\end{align}
where $h_l>0$ denotes the uniform lower bound of $h(\cdot)$, and the last inequality stems from the fact that $\bar\Phi(t)\geq t\phi(t)/(t^2+1)$ for all $t$. By a straightforward application of the Laplace method~\citep{Wong:01}, the following holds for some constants $B>0$:
$$
\int_{\bbR^{n-1}}\frac{\ell_{n,m}^{\tt B}(m\bx)}{\ell_{n,m}^{\tt B}(m\bx)^2+1}\phi\big(\ell_{n,m}^{\tt B}(m\bx)\big)\phi_{n-1}(m\bx)\rd\bx
\sim \frac{B\ell_{n,m}^{\tt B}(m\bx^*)e^{-{\ell_{n,m}^{\tt B}(m\bx^*)}^2/2-\|m\bx^*\|^2/2}}{(\ell_{n,m}^{\tt B}(m\bx^*)^2+1)m^{(n-1)/2}}~\text{as}~m\to\infty,
$$
where $\bx^*=\argmin_\bx\{{\ell_{n}^{\tt B}(\bx)}^2+\|\bx\|^2\}$ and $\ell_{n}^{\tt B}(\bx)= \lim_{m\to\infty} (1/m)\ell_{n,m}^{\tt B}(m\bx)$. 

By writing $\bmu_m=\bmu_{-n,m}^{\tt B}$ for brevity, we next find an upper bound of the second moment of the estimator for all $m$ large enough:
\begin{align*}
\widetilde\sE\big[(\Gamma_m^{\tt B})^2\big]
&\leq h_u^2\sE\bigg[\bar\Phi\big(\ell_{n,m}^{\tt B}(\bZ_{-n})\big)^2 \exp\left(-\bmu_m^\top\bZ_{-n}+\frac12\|\bmu_m\|^2\right)\bigg]\\
&= h_u^2\int_{\bbR^{n-1}}\bar\Phi({\ell_{n,m}^{\tt B}(\bz)})^2e^{-\bmu_m^\top\bz+(1/2)\|\bmu_m\|^2}\phi_{n-1}(\bz)\rd\bz\\
&= h_u^2e^{\|\bmu_m\|^2}\int_{\bbR^{n-1}}\bar\Phi({\ell_{n,m}^{\tt B}(\bz)})^2\phi_{n-1}(\bz+\bmu_m)\rd\bz\\
&= h_u^2e^{\|\bmu_m\|^2}m^{n-1}\int_{\bbR^{n-1}}\bar\Phi({\ell_{n,m}^{\tt B}(m\bx)})^2\phi_{n-1}(m\bx+\bmu_m)\rd\bx\\
&\leq h_u^2e^{\|\bmu_m\|^2}m^{n-1}\int_{\bbR^{n-1}}\frac{\phi({\ell_{n,m}^{\tt B}(m\bx)})^2}{{\ell_{n,m}^{\tt B}(m\bx)}^2}\phi_{n-1}(m\bx+\bmu_m)\rd\bx
\end{align*}
where $h_u$ represents   the uniform upper bound of $h(\cdot)$ and the second inequality holds because $\bar\Phi(t)\leq\phi(t)/t$ for all $t>0$.
Since $\bmu_m=m\bx^*$,  the Laplace method~\citep{Wong:01} implies that
$$
e^{\|\bmu_m\|^2}\int_{\bbR^{n-1}}\frac{\phi({\ell_{n,m}^{\tt B}(m\bx)})^2}{{\ell_{n,m}^{\tt B}(m\bx)}^2}\phi_{n-1}(m\bx+\bmu_m)\rd\bx\sim \frac{Be^{-{\ell_{n,m}^{\tt B}(m\bx^*)}^2-\|m\bx^*\|^2}}{\ell_{n,m}^{\tt B}(m\bx^*)^2\,m^{(n-1)/2}}~\text{as}~m\to\infty.
$$

Consequently, we have
$$
\begin{aligned}
&\liminf_{m\to\infty}\frac{\log\widetilde\sE\big[(\Gamma_m^{\tt B})^2\big]}{2\log\widetilde\sE[\Gamma_m^{\tt B}]} 
\geq \lim_{m\to\infty}\frac{\log({m^{(n-1)/2}e^{-{\ell_{n,m}^{\tt B}(m\bx^*)}^2-\|m\bx^*\|^2}})}{\log({m^{n-1}\ell_{n,m}^{\tt B}(m\bx^*)^2e^{-{\ell_{n,m}^{\tt B}(m\bx^*)}^2-\|m\bx^*\|^2}})} = 1.
\end{aligned}
$$
Since $\widetilde\sE\big[(\Gamma_m^{\tt B})^2\big]\geq \widetilde\sE[\Gamma_m^{\tt B}]^2$, the result follows.
\end{proof}

\begin{proof}[Proof of Proposition~\ref{prop:superior-sn}]
Following the same approach as in the proof of Theorem~\ref{thm:asympOpt_B}, we observe that 
\begin{align*}
\log\widetilde\sE\big[(\Gamma_m^{\tt O})^2\big]
&\leq \log\lt( h_u^2m^{n-1}\int_{\bbR^{n-1}}\frac{\phi({\ell_{n,m}^{\tt B}(m\bx)})^2}{{\ell_{n,m}^{\tt B}(m\bx)}^2}\phi_{n-1}(m\bx)\rd\bx\rt)
\sim -m^2\lt({\ell_{n}^{\tt B}(\bar\bx)}^2+\frac{\|\bar\bx\|^2}{2}\rt);\\
\log\widetilde\sE\big[(\Gamma_m^{\tt O})^2\big]
&\geq \log\lt( h_l^2m^{n-1}\int_{\bbR^{n-1}}\frac{\ell_{n,m}^{\tt B}(m\bx)^2\phi({\ell_{n,m}^{\tt B}(m\bx)})^2}{({\ell_{n,m}^{\tt B}(m\bx)}^2+1)^2}\phi_{n-1}(m\bx)\rd\bx\rt)
\sim -m^2\lt({\ell_{n}^{\tt B}(\bar\bx)}^2+\frac{\|\bar\bx\|^2}{2}\rt)
\end{align*}
as $m\to\infty$, where $\bar\bx=\argmin_\bx\{{\ell_{n}^{\tt B}(\bx)}^2+\|\bx\|^2/2\}$. Similarly, we obtain
\begin{align*}
\log\widetilde\sE\big[(\Gamma_m^{\tt B})^2\big]
&\geq \log\lt( h_l^2e^{\|m\bx^*\|^2}m^{n-1}\int_{\bbR^{n-1}}\frac{\ell_{n,m}^{\tt B}(m\bx)^2\phi({\ell_{n,m}^{\tt B}(m\bx)})^2}{({\ell_{n,m}^{\tt B}(m\bx)}^2+1)^2}\phi_{n-1}(m\bx+m\bx^*)\rd\bx\rt)\\
&\sim -{\ell_{n,m}^{\tt B}(m\bx^*)}^2-{\|m\bx^*\|^2}\\
&\sim -m^2\lt({\ell_{n}^{\tt B}(\bx^*)}^2+{\|\bx^*\|^2}\rt)
\end{align*}
as $m\to\infty$. Therefore, by letting $\rho_n=\log(S_n^0/v_n(\bS_{-n}^0))>0$, we get
\begin{align}
\lim_{m\to\infty}\frac{\log\sE[(\Gamma_m^{\tt B})^2]}{\log\sE[(\Gamma_m^{\tt 0})^2]}
&= \frac{{\ell_{n}^{\tt B}(\bx^*)}^2+{\|\bx^*\|^2}}{{\ell_{n}^{\tt B}(\bar\bx)}^2+{\|\bar\bx\|^2}/{2}}\\
&\geq 1+\frac{\|\bx^*\|^2}{2{\ell_{n}^{\tt B}(\bx^*)}^2+\|\bx^*\|^2} \\
&\geq 1+\frac{\Lambda_{nn}^2\|\bx^*\|^2}{2(\rho_n+\blambda_n^\top\bx^*)^2+\Lambda_{nn}^2\|\bx^*\|^2}.\label{eq:909843}
\end{align}
We further see that the second term in \eqref{eq:909843} is strictly positive  if and only if $\blambda_n={\bf0}$. 
This completes the proof.
\end{proof}

\section{Supplementary Numerical Results}\label{app:numeric}
\begin{table}[H]
    \centering
    \begin{sideways}
        \scriptsize
        \setlength{\tabcolsep}{2pt} 
        
        \resizebox{0.9\textheight}{!}{%
$\lt[\begin{tabular}{rrrrrrrrrrrrrrrrrrrrrrrrrrrrrrrrrrrr}
1.00 & .00 & .00 & .00 & .00 & .00 & .00 & .00 & .00 & .00 & .00 & .00 & .00 & .00 & .00 & .00 & .00 & .00 & .00 & .00 & .00 & .00 & .00 & .00 & .00 & .00 & .00 & .00 & .00 & .00 & .00 & .00 & .00 & .00 & .00 & .00 \\
.07 & 1.00 & .00 & .00 & .00 & .00 & .00 & .00 & .00 & .00 & .00 & .00 & .00 & .00 & .00 & .00 & .00 & .00 & .00 & .00 & .00 & .00 & .00 & .00 & .00 & .00 & .00 & .00 & .00 & .00 & .00 & .00 & .00 & .00 & .00 & .00 \\
.03 & \multicolumn{1}{c}{$\text{-}.01$} & 1.00 & .00 & .00 & .00 & .00 & .00 & .00 & .00 & .00 & .00 & .00 & .00 & .00 & .00 & .00 & .00 & .00 & .00 & .00 & .00 & .00 & .00 & .00 & .00 & .00 & .00 & .00 & .00 & .00 & .00 & .00 & .00 & .00 & .00 \\
.02 & .07 & .03 & 1.00 & .00 & .00 & .00 & .00 & .00 & .00 & .00 & .00 & .00 & .00 & .00 & .00 & .00 & .00 & .00 & .00 & .00 & .00 & .00 & .00 & .00 & .00 & .00 & .00 & .00 & .00 & .00 & .00 & .00 & .00 & .00 & .00 \\
.03 & .01 & \multicolumn{1}{c}{$\text{-}.01$} & .03 & 1.00 & .00 & .00 & .00 & .00 & .00 & .00 & .00 & .00 & .00 & .00 & .00 & .00 & .00 & .00 & .00 & .00 & .00 & .00 & .00 & .00 & .00 & .00 & .00 & .00 & .00 & .00 & .00 & .00 & .00 & .00 & .00 \\
.04 & .03 & .02 & .06 & \multicolumn{1}{c}{$\text{-}.02$} & 1.00 & .00 & .00 & .00 & .00 & .00 & .00 & .00 & .00 & .00 & .00 & .00 & .00 & .00 & .00 & .00 & .00 & .00 & .00 & .00 & .00 & .00 & .00 & .00 & .00 & .00 & .00 & .00 & .00 & .00 & .00 \\
.08 & .03 & .01 & \multicolumn{1}{c}{$\text{-}.02$} & .04 & .05 & .99 & .00 & .00 & .00 & .00 & .00 & .00 & .00 & .00 & .00 & .00 & .00 & .00 & .00 & .00 & .00 & .00 & .00 & .00 & .00 & .00 & .00 & .00 & .00 & .00 & .00 & .00 & .00 & .00 & .00 \\
.02 & .01 & .05 & .01 & .00 & .02 & .01 & 1.00 & .00 & .00 & .00 & .00 & .00 & .00 & .00 & .00 & .00 & .00 & .00 & .00 & .00 & .00 & .00 & .00 & .00 & .00 & .00 & .00 & .00 & .00 & .00 & .00 & .00 & .00 & .00 & .00 \\
.08 & .07 & \multicolumn{1}{c}{$\text{-}.02$} & .02 & .02 & .02 & .01 & .01 & .99 & .00 & .00 & .00 & .00 & .00 & .00 & .00 & .00 & .00 & .00 & .00 & .00 & .00 & .00 & .00 & .00 & .00 & .00 & .00 & .00 & .00 & .00 & .00 & .00 & .00 & .00 & .00 \\
.01 & .01 & .06 & .07 & .01 & .05 & .00 & .05 & .03 & .99 & .00 & .00 & .00 & .00 & .00 & .00 & .00 & .00 & .00 & .00 & .00 & .00 & .00 & .00 & .00 & .00 & .00 & .00 & .00 & .00 & .00 & .00 & .00 & .00 & .00 & .00 \\
.01 & \multicolumn{1}{c}{$\text{-}.01$} & .04 & .00 & .00 & \multicolumn{1}{c}{$\text{-}.01$} & .06 & .03 & .02 & .00 & 1.00 & .00 & .00 & .00 & .00 & .00 & .00 & .00 & .00 & .00 & .00 & .00 & .00 & .00 & .00 & .00 & .00 & .00 & .00 & .00 & .00 & .00 & .00 & .00 & .00 & .00 \\
.05 & .04 & .05 & .09 & .06 & .06 & .05 & .04 & .04 & .04 & .04 & .98 & .00 & .00 & .00 & .00 & .00 & .00 & .00 & .00 & .00 & .00 & .00 & .00 & .00 & .00 & .00 & .00 & .00 & .00 & .00 & .00 & .00 & .00 & .00 & .00 \\
.06 & .03 & .01 & .08 & .02 & .01 & .01 & \multicolumn{1}{c}{$\text{-}.01$} & .01 & \multicolumn{1}{c}{$\text{-}.01$} & .01 & .03 & .99 & .00 & .00 & .00 & .00 & .00 & .00 & .00 & .00 & .00 & .00 & .00 & .00 & .00 & .00 & .00 & .00 & .00 & .00 & .00 & .00 & .00 & .00 & .00 \\
.07 & .10 & .03 & .00 & \multicolumn{1}{c}{$\text{-}.01$} & .03 & .04 & .03 & .08 & .04 & .03 & .04 & .00 & .98 & .00 & .00 & .00 & .00 & .00 & .00 & .00 & .00 & .00 & .00 & .00 & .00 & .00 & .00 & .00 & .00 & .00 & .00 & .00 & .00 & .00 & .00 \\
.04 & .03 & \multicolumn{1}{c}{$\text{-}.01$} & .04 & .01 & .04 & .05 & .07 & .04 & .01 & .03 & .06 & .05 & .05 & .99 & .00 & .00 & .00 & .00 & .00 & .00 & .00 & .00 & .00 & .00 & .00 & .00 & .00 & .00 & .00 & .00 & .00 & .00 & .00 & .00 & .00 \\
.05 & .05 & .01 & \multicolumn{1}{c}{$\text{-}.03$} & \multicolumn{1}{c}{$\text{-}.03$} & .04 & .02 & .03 & .02 & .01 & \multicolumn{1}{c}{$\text{-}.01$} & \multicolumn{1}{c}{$\text{-}.02$} & \multicolumn{1}{c}{$\text{-}.01$} & .01 & \multicolumn{1}{c}{$\text{-}.01$} & .99 & .00 & .00 & .00 & .00 & .00 & .00 & .00 & .00 & .00 & .00 & .00 & .00 & .00 & .00 & .00 & .00 & .00 & .00 & .00 & .00 \\
.06 & .06 & .03 & \multicolumn{1}{c}{$\text{-}.04$} & .04 & .05 & .04 & .06 & .05 & .04 & \multicolumn{1}{c}{$\text{-}.01$} & \multicolumn{1}{c}{$\text{-}.03$} & \multicolumn{1}{c}{$\text{-}.01$} & .06 & .01 & .02 & .99 & .00 & .00 & .00 & .00 & .00 & .00 & .00 & .00 & .00 & .00 & .00 & .00 & .00 & .00 & .00 & .00 & .00 & .00 & .00 \\
.06 & .05 & .03 & .03 & .03 & .04 & \multicolumn{1}{c}{$\text{-}.02$} & .07 & .07 & .04 & \multicolumn{1}{c}{$\text{-}.03$} & .01 & .01 & .02 & .05 & .05 & .07 & .98 & .00 & .00 & .00 & .00 & .00 & .00 & .00 & .00 & .00 & .00 & .00 & .00 & .00 & .00 & .00 & .00 & .00 & .00 \\
.06 & .05 & \multicolumn{1}{c}{$\text{-}.01$} & .02 & .05 & .00 & .05 & .01 & .06 & \multicolumn{1}{c}{$\text{-}.01$} & .00 & .02 & \multicolumn{1}{c}{$\text{-}.03$} & .01 & .01 & \multicolumn{1}{c}{$\text{-}.02$} & .04 & .00 & .99 & .00 & .00 & .00 & .00 & .00 & .00 & .00 & .00 & .00 & .00 & .00 & .00 & .00 & .00 & .00 & .00 & .00 \\
.03 & .03 & .05 & .04 & .09 & .00 & .01 & .08 & .04 & .06 & \multicolumn{1}{c}{$\text{-}.01$} & .05 & \multicolumn{1}{c}{$\text{-}.02$} & .05 & .05 & \multicolumn{1}{c}{$\text{-}.01$} & .04 & .05 & .05 & .98 & .00 & .00 & .00 & .00 & .00 & .00 & .00 & .00 & .00 & .00 & .00 & .00 & .00 & .00 & .00 & .00 \\
.05 & .05 & .03 & .03 & .08 & .02 & .07 & .09 & .07 & .03 & .02 & .05 & .04 & .06 & .04 & .01 & .05 & .01 & .06 & .04 & .97 & .00 & .00 & .00 & .00 & .00 & .00 & .00 & .00 & .00 & .00 & .00 & .00 & .00 & .00 & .00 \\
.06 & .02 & .07 & .01 & .05 & .05 & .07 & .05 & .01 & .01 & .02 & .06 & \multicolumn{1}{c}{$\text{-}.01$} & .03 & .04 & .04 & .01 & .05 & \multicolumn{1}{c}{$\text{-}.02$} & .01 & .04 & .98 & .00 & .00 & .00 & .00 & .00 & .00 & .00 & .00 & .00 & .00 & .00 & .00 & .00 & .00 \\
.07 & .04 & .03 & .03 & .01 & .03 & .05 & .06 & .08 & .04 & .05 & .07 & .04 & .03 & .04 & .01 & .03 & .03 & .01 & .02 & .03 & .03 & .98 & .00 & .00 & .00 & .00 & .00 & .00 & .00 & .00 & .00 & .00 & .00 & .00 & .00 \\
.02 & .05 & .00 & .02 & .06 & .03 & \multicolumn{1}{c}{$\text{-}.02$} & .02 & .05 & .06 & .00 & .06 & \multicolumn{1}{c}{$\text{-}.03$} & .04 & .01 & \multicolumn{1}{c}{$\text{-}.01$} & .05 & .08 & \multicolumn{1}{c}{$\text{-}.02$} & .06 & \multicolumn{1}{c}{$\text{-}.01$} & .03 & .02 & .98 & .00 & .00 & .00 & .00 & .00 & .00 & .00 & .00 & .00 & .00 & .00 & .00 \\
.06 & .05 & .06 & .09 & .07 & .02 & .04 & .03 & .05 & .07 & .05 & .07 & .04 & .04 & .00 & .01 & .00 & .00 & .02 & .04 & .06 & .02 & .02 & \multicolumn{1}{c}{$\text{-}.01$} & .97 & .00 & .00 & .00 & .00 & .00 & .00 & .00 & .00 & .00 & .00 & .00 \\
.03 & .05 & .04 & .07 & .05 & .02 & .07 & .00 & .00 & .04 & .03 & .05 & .02 & .03 & .05 & \multicolumn{1}{c}{$\text{-}.03$} & .00 & .00 & .04 & .03 & .01 & .03 & .02 & .00 & .05 & .98 & .00 & .00 & .00 & .00 & .00 & .00 & .00 & .00 & .00 & .00 \\
.04 & .06 & .07 & .06 & .00 & .03 & .01 & .06 & .06 & .02 & .07 & .03 & .01 & .06 & .02 & \multicolumn{1}{c}{$\text{-}.02$} & .03 & .02 & .04 & .03 & .02 & \multicolumn{1}{c}{$\text{-}.02$} & .03 & \multicolumn{1}{c}{$\text{-}.01$} & .03 & .03 & .98 & .00 & .00 & .00 & .00 & .00 & .00 & .00 & .00 & .00 \\
.05 & .06 & .01 & .02 & .01 & .03 & .06 & .01 & .02 & \multicolumn{1}{c}{$\text{-}.01$} & .00 & .01 & .05 & .05 & .01 & .03 & .02 & \multicolumn{1}{c}{$\text{-}.04$} & .03 & .05 & .05 & \multicolumn{1}{c}{$\text{-}.01$} & .04 & \multicolumn{1}{c}{$\text{-}.03$} & .05 & .03 & .02 & .98 & .00 & .00 & .00 & .00 & .00 & .00 & .00 & .00 \\
.05 & .06 & .04 & .00 & .03 & .04 & .05 & .02 & .09 & .05 & .00 & .04 & .01 & .09 & .03 & .02 & .04 & .01 & .00 & .02 & .01 & .04 & .05 & .02 & .02 & .03 & .04 & .04 & .98 & .00 & .00 & .00 & .00 & .00 & .00 & .00 \\
.09 & .06 & .08 & .05 & .04 & .05 & .06 & .03 & .02 & .04 & .01 & .05 & .03 & .06 & .03 & .04 & .02 & .04 & .00 & .03 & .03 & .04 & .02 & .00 & .03 & .04 & .07 & .01 & .03 & .97 & .00 & .00 & .00 & .00 & .00 & .00 \\
\multicolumn{1}{c}{$\text{-}.01$} & .02 & .04 & .08 & .04 & .06 & .03 & .05 & .01 & .03 & .04 & .07 & .03 & .02 & .05 & \multicolumn{1}{c}{$\text{-}.03$} & .00 & .03 & \multicolumn{1}{c}{$\text{-}.01$} & .04 & .03 & .04 & .05 & .01 & .03 & .03 & .01 & .02 & .02 & \multicolumn{1}{c}{$\text{-}.01$} & .98 & .00 & .00 & .00 & .00 & .00 \\
\multicolumn{1}{c}{$\text{-}.04$} & \multicolumn{1}{c}{$\text{-}.01$} & .05 & .06 & \multicolumn{1}{c}{$\text{-}.01$} & .07 & .03 & .01 & .02 & .02 & .04 & .03 & .01 & \multicolumn{1}{c}{$\text{-}.01$} & .05 & .00 & \multicolumn{1}{c}{$\text{-}.03$} & .02 & .00 & .00 & .01 & .02 & .04 & \multicolumn{1}{c}{$\text{-}.03$} & .04 & .04 & .03 & .03 & .01 & .01 & .08 & .98 & .00 & .00 & .00 & .00 \\
\multicolumn{1}{c}{$\text{-}.02$} & .07 & .02 & .02 & \multicolumn{1}{c}{$\text{-}.02$} & \multicolumn{1}{c}{$\text{-}.01$} & \multicolumn{1}{c}{$\text{-}.03$} & .08 & .03 & .05 & .06 & .00 & \multicolumn{1}{c}{$\text{-}.03$} & .08 & .00 & .03 & .03 & .01 & \multicolumn{1}{c}{$\text{-}.02$} & .04 & .03 & .02 & .01 & .03 & .02 & \multicolumn{1}{c}{$\text{-}.03$} & .02 & .00 & .01 & \multicolumn{1}{c}{$\text{-}.02$} & .02 & \multicolumn{1}{c}{$\text{-}.02$} & .98 & .00 & .00 & .00 \\
.04 & .04 & .03 & .03 & .00 & .02 & .02 & \multicolumn{1}{c}{$\text{-}.02$} & .02 & .04 & .04 & .01 & .02 & .05 & \multicolumn{1}{c}{$\text{-}.01$} & \multicolumn{1}{c}{$\text{-}.01$} & .00 & .01 & \multicolumn{1}{c}{$\text{-}.01$} & .03 & .01 & .02 & .01 & .04 & .04 & \multicolumn{1}{c}{$\text{-}.05$} & .01 & .03 & .00 & .05 & .02 & .03 & .04 & .99 & .00 & .00 \\
.02 & .02 & .06 & .00 & .01 & .04 & .03 & .04 & .07 & .07 & .02 & .02 & \multicolumn{1}{c}{$\text{-}.03$} & .07 & .03 & .03 & .03 & .03 & .01 & .06 & .01 & .00 & .05 & .03 & .01 & .02 & .07 & .05 & .08 & .07 & .01 & .05 & .01 & .02 & .97 & .00 \\
.08 & .00 & .00 & .02 & .02 & .02 & .06 & .02 & .05 & .04 & .04 & .02 & .00 & .04 & .06 & \multicolumn{1}{c}{$\text{-}.01$} & .01 & .01 & .06 & .04 & .04 & .03 & .04 & .01 & .01 & .01 & .01 & .02 & .00 & .02 & .04 & .03 & \multicolumn{1}{c}{$\text{-}.01$} & .05 & .01 & .98 \\
\end{tabular}\rt]$
        }
    \end{sideways}
    \caption{The matrix $\bR$ for the case of correlated assets in Section~\ref{subsec:numeric_2}\label{tab:correlation}}
\end{table}

\begin{figure}[!t]
\centering    
\begin{minipage}{\textwidth}\centering
\begin{tikzpicture}[font=\sffamily\small]
	\begin{semilogyaxis}[name=plot2,height=2.1in,width=0.44\textwidth,
    title = {Correlated Assets},
	ylabel={Standard Error},
	every tick label/.append style={font=\scriptsize},
	axis on top,
	scaled x ticks = false,
	xticklabel style={/pgf/number format/fixed},
	/pgf/number format/1000 sep={}]
	\addplot+[black, densely dotted, thick, mark = o, mark size = 2pt, mark options=solid]	table[x index=0,y index=2, col sep=comma]{figures/real_c_la_price_cor.csv};
	\addplot+[blue, thick, dashed, mark = square, mark size = 2pt, mark options=solid]	table[x index=0,y index=7, col sep=comma]{figures/real_c_la_price_cor.csv};
	\addplot+[red, thick, solid, mark = star, mark options=solid]	table[x index=0,y index=12, col sep=comma]{figures/real_c_la_price_cor.csv};
	\end{semilogyaxis}
	\begin{semilogyaxis}[name=plot4,height=2.1in,width=0.44\textwidth,at={($(plot2.south)-(0in,0.2in)$)},anchor=north,
	xlabel={Asset Multiplier},
	ylabel={Efficiency Ratio},
	every tick label/.append style={font=\scriptsize},
	axis on top,
	scaled x ticks = false,
	xticklabel style={/pgf/number format/fixed},
	/pgf/number format/1000 sep={}]
	\addplot+[black, thick, densely dotted, mark = o, mark size = 2pt, mark options=solid]	table[x index=0,y index=5, col sep=comma]{figures/real_c_la_price_cor.csv};
	\addplot+[blue, thick, dashed, mark = square, mark size = 2pt, mark options=solid]	table[x index=0,y index=10, col sep=comma]{figures/real_c_la_price_cor.csv};
	\addplot+[red, thick, solid, mark = star]	table[x index=0,y index=15, col sep=comma]{figures/real_c_la_price_cor.csv}; 
	\end{semilogyaxis}
	\begin{semilogyaxis}[name=plot6,height=2.1in,width=0.44\textwidth,at={($(plot1.east)+(0.5in,0in)$)},anchor=west,
    title = {Uncorrelated Assets},
	every tick label/.append style={font=\scriptsize},
	axis on top,
	scaled x ticks = false,
	xticklabel style={/pgf/number format/fixed},
	/pgf/number format/1000 sep={},
    legend pos=outer north east, legend entries={MC, ILIS, BLISS}]
	\addplot+[black, densely dotted, thick, mark = o, mark size = 2pt, mark options=solid]	table[x index=0,y index=2, col sep=comma]{figures/real_c_la_price_uncor.csv};
	\addplot+[blue, thick, dashed, mark = square, mark size = 2pt, mark options=solid]	table[x index=0,y index=7, col sep=comma]{figures/real_c_la_price_uncor.csv};
	\addplot+[red, thick, solid, mark = star, mark options=solid]	table[x index=0,y index=12, col sep=comma]{figures/real_c_la_price_uncor.csv};
	\end{semilogyaxis}
	\begin{semilogyaxis}[name=plot8,height=2.1in,width=0.44\textwidth,at={($(plot6.south)-(0in,0.2in)$)},anchor=north,
	xlabel={Asset Multiplier},
	every tick label/.append style={font=\scriptsize},
	axis on top,
	scaled x ticks = false,
	xticklabel style={/pgf/number format/fixed},
	/pgf/number format/1000 sep={}]
	\addplot+[black, thick, densely dotted, mark = o, mark size = 2pt, mark options=solid]	table[x index=0,y index=5, col sep=comma]{figures/real_c_la_price_uncor.csv};
	\addplot+[blue, thick, dashed, mark = square, mark size = 2pt, mark options=solid]	table[x index=0,y index=10, col sep=comma]{figures/real_c_la_price_uncor.csv};
	\addplot+[red, thick, solid, mark = star]	table[x index=0,y index=15, col sep=comma]{figures/real_c_la_price_uncor.csv}; 
	\end{semilogyaxis}
	\end{tikzpicture}
    \caption{\label{fig:laC} Performance comparison between MC, ILIS, and BLISS for the real-world example in Section~\ref{subsec:numeric_2} with the complete network when asset value changes}
\end{minipage}

\vspace{1.7em} 

\begin{minipage}{\textwidth}
\centering
	\begin{tikzpicture}[font=\sffamily\small]
	\begin{semilogyaxis}[name=plot2,height=2.1in,width=0.44\textwidth,title={Correlated Assets},
	ylabel={Standard Error},
    x dir=reverse,
	every tick label/.append style={font=\scriptsize},
	axis on top,
	xtick=data,
	xticklabel style={/pgf/number format/fixed},
	/pgf/number format/1000 sep={}]
	\addplot+[black, densely dotted, thick, mark = o, mark size = 2pt, mark options=solid, skip coords between index={6}{10}]	table[x index=0,y index=2, col sep=comma]{figures/real_c_sv_price_cor.csv};
	\addplot+[blue, thick, dashed, mark = square, mark size = 2pt, mark options=solid, skip coords between index={6}{10}]	table[x index=0,y index=7, col sep=comma]{figures/real_c_sv_price_cor.csv};
	\addplot+[red, thick, solid, mark = star, mark options=solid, skip coords between index={6}{10}]	table[x index=0,y index=12, col sep=comma]{figures/real_c_sv_price_cor.csv};
	\end{semilogyaxis}
	\begin{semilogyaxis}[name=plot4,height=2.1in,width=0.44\textwidth,at={($(plot2.south)-(0in,0.2in)$)},anchor=north,
	xlabel={Volatility Multiplier},
	ylabel={Efficiency Ratio},
	every tick label/.append style={font=\scriptsize},
	axis on top,
	xtick=data,
	xticklabel style={/pgf/number format/fixed},
	/pgf/number format/1000 sep={},
    x dir = reverse]
	\addplot+[black, thick, densely dotted, mark = o, mark size = 2pt, mark options=solid, skip coords between index={6}{10}]	table[x index=0,y index=5, col sep=comma]{figures/real_c_sv_price_cor.csv};
	\addplot+[blue, thick, dashed, mark = square, mark size = 2pt, mark options=solid, skip coords between index={6}{10}]	table[x index=0,y index=10, col sep=comma]{figures/real_c_sv_price_cor.csv};
	\addplot+[red, thick, solid, mark = star, skip coords between index={6}{10}]	table[x index=0,y index=15, col sep=comma]{figures/real_c_sv_price_cor.csv}; 
	\end{semilogyaxis}
	\begin{semilogyaxis}[name=plot6,height=2.1in,width=0.44\textwidth,at={($(plot2.east)+(0.5in,0in)$)},anchor=west, title={Uncorrelated Assets},
    legend pos=outer north east, legend entries={MC, ILIS, BLISS},
    x dir=reverse,
	every tick label/.append style={font=\scriptsize},
	axis on top,
	xtick=data,
	xticklabel style={/pgf/number format/fixed},
	/pgf/number format/1000 sep={}]
	\addplot+[black, densely dotted, thick, mark = o, mark size = 2pt, mark options=solid, skip coords between index={6}{10}]	table[x index=0,y index=2, col sep=comma]{figures/real_c_sv_price_uncor.csv};
	\addplot+[blue, thick, dashed, mark = square, mark size = 2pt, mark options=solid, skip coords between index={6}{10}]	table[x index=0,y index=7, col sep=comma]{figures/real_c_sv_price_uncor.csv};
	\addplot+[red, thick, solid, mark = star, mark options=solid, skip coords between index={6}{10}]	table[x index=0,y index=12, col sep=comma]{figures/real_c_sv_price_uncor.csv};
	\end{semilogyaxis}
	\begin{semilogyaxis}[name=plot8,height=2.1in,width=0.44\textwidth,at={($(plot6.south)-(0in,0.2in)$)},anchor=north,
	xlabel={Volatility Multiplier},
	every tick label/.append style={font=\scriptsize},
	axis on top,
	xtick=data,
	xticklabel style={/pgf/number format/fixed},
	/pgf/number format/1000 sep={},
    x dir = reverse]
	\addplot+[black, thick, densely dotted, mark = o, mark size = 2pt, mark options=solid, skip coords between index={6}{10}]	table[x index=0,y index=5, col sep=comma]{figures/real_c_sv_price_uncor.csv};
	\addplot+[blue, thick, dashed, mark = square, mark size = 2pt, mark options=solid, skip coords between index={6}{10}]	table[x index=0,y index=10, col sep=comma]{figures/real_c_sv_price_uncor.csv};
	\addplot+[red, thick, solid, mark = star]	table[x index=0,y index=15, col sep=comma, skip coords between index={6}{10}]{figures/real_c_sv_price_uncor.csv}; 
	\end{semilogyaxis}
	\end{tikzpicture}    \caption{\label{fig:svC} Performance comparison between MC, ILIS, and BLISS for the real-world example in Section~\ref{subsec:numeric_2} with the complete network when volatility changes}
    \end{minipage}
\end{figure}

\begin{figure}[!t]
\centering\begin{minipage}{\textwidth}
\centering
	\begin{tikzpicture}[font=\sffamily\small]
	\begin{semilogyaxis}[name=plot2,height=2.1in,width=0.44\textwidth,
    title = {Correlated Assets},
	ylabel={Standard Error},
	every tick label/.append style={font=\scriptsize},
	axis on top,
	xtick=data,
	xticklabel style={/pgf/number format/fixed},
	/pgf/number format/1000 sep={}]
	\addplot+[black, densely dotted, thick, mark = o, mark size = 2pt, mark options=solid]	table[x index=0,y index=2, col sep=comma]{figures/real_r_la_price_cor.csv};
	\addplot+[blue, thick, dashed, mark = square, mark size = 2pt, mark options=solid]	table[x index=0,y index=7, col sep=comma]{figures/real_r_la_price_cor.csv};
	\addplot+[red, thick, solid, mark = star, mark options=solid]	table[x index=0,y index=12, col sep=comma]{figures/real_r_la_price_cor.csv};
	\end{semilogyaxis}
	\begin{semilogyaxis}[name=plot4,height=2.1in,width=0.44\textwidth,at={($(plot2.south)-(0in,0.2in)$)},anchor=north,
	xlabel={Asset Multiplier},
	ylabel={Efficiency Ratio},
	every tick label/.append style={font=\scriptsize},
	axis on top,
	xtick=data,
	xticklabel style={/pgf/number format/fixed},
	/pgf/number format/1000 sep={}]
	\addplot+[black, thick, densely dotted, mark = o, mark size = 2pt, mark options=solid]	table[x index=0,y index=5, col sep=comma]{figures/real_r_la_price_cor.csv};
	\addplot+[blue, thick, dashed, mark = square, mark size = 2pt, mark options=solid]	table[x index=0,y index=10, col sep=comma]{figures/real_r_la_price_cor.csv};
	\addplot+[red, thick, solid, mark = star]	table[x index=0,y index=15, col sep=comma]{figures/real_r_la_price_cor.csv}; 
	\end{semilogyaxis}
	\begin{semilogyaxis}[name=plot6,height=2.1in,width=0.44\textwidth,at={($(plot1.east)+(0.5in,0in)$)},anchor=west,
    title = {Uncorrelated Assets},
	every tick label/.append style={font=\scriptsize},
	axis on top,
	xtick=data,
	xticklabel style={/pgf/number format/fixed},
	/pgf/number format/1000 sep={},
    legend pos=outer north east, legend entries={MC, ILIS, BLISS}]
	\addplot+[black, densely dotted, thick, mark = o, mark size = 2pt, mark options=solid]	table[x index=0,y index=2, col sep=comma]{figures/real_r_la_price_uncor.csv};
	\addplot+[blue, thick, dashed, mark = square, mark size = 2pt, mark options=solid]	table[x index=0,y index=7, col sep=comma]{figures/real_r_la_price_uncor.csv};
	\addplot+[red, thick, solid, mark = star, mark options=solid]	table[x index=0,y index=12, col sep=comma]{figures/real_r_la_price_uncor.csv};
	\end{semilogyaxis}
	\begin{semilogyaxis}[name=plot8,height=2.1in,width=0.44\textwidth,at={($(plot6.south)-(0in,0.2in)$)},anchor=north,
	xlabel={Asset Multiplier},
	every tick label/.append style={font=\scriptsize},
	axis on top,
	xtick=data,
	xticklabel style={/pgf/number format/fixed},
	/pgf/number format/1000 sep={}]
	\addplot+[black, thick, densely dotted, mark = o, mark size = 2pt, mark options=solid]	table[x index=0,y index=5, col sep=comma]{figures/real_r_la_price_uncor.csv};
	\addplot+[blue, thick, dashed, mark = square, mark size = 2pt, mark options=solid]	table[x index=0,y index=10, col sep=comma]{figures/real_r_la_price_uncor.csv};
	\addplot+[red, thick, solid, mark = star]	table[x index=0,y index=15, col sep=comma]{figures/real_r_la_price_uncor.csv}; 
	\end{semilogyaxis}
	\end{tikzpicture}
    \caption{\label{fig:laR} Performance comparison between MC, ILIS, and BLISS for the real-world example in Section~\ref{subsec:numeric_2} with the ring network when asset value changes}
\end{minipage}

\vspace{1.7em} 

\begin{minipage}{\textwidth}
\centering
	\begin{tikzpicture}[font=\sffamily\small]
	\begin{semilogyaxis}[name=plot2,height=2.1in,width=0.44\textwidth,title={Correlated Assets},
	ylabel={Standard Error},
    x dir=reverse,
	every tick label/.append style={font=\scriptsize},
	axis on top,
	xtick=data,
	xticklabel style={/pgf/number format/fixed},
	/pgf/number format/1000 sep={}]
	\addplot+[black, densely dotted, thick, mark = o, mark size = 2pt, mark options=solid, skip coords between index={6}{10}]	table[x index=0,y index=2, col sep=comma]{figures/real_r_sv_price_cor.csv};
	\addplot+[blue, thick, dashed, mark = square, mark size = 2pt, mark options=solid, skip coords between index={6}{10}]	table[x index=0,y index=7, col sep=comma]{figures/real_r_sv_price_cor.csv};
	\addplot+[red, thick, solid, mark = star, mark options=solid, skip coords between index={6}{10}]	table[x index=0,y index=12, col sep=comma]{figures/real_r_sv_price_cor.csv};
	\end{semilogyaxis}
	\begin{semilogyaxis}[name=plot4,height=2.1in,width=0.44\textwidth,at={($(plot2.south)-(0in,0.2in)$)},anchor=north,
	xlabel={Volatility Multiplier},
	ylabel={Efficiency Ratio},
	every tick label/.append style={font=\scriptsize},
	axis on top,
	xtick=data,
	xticklabel style={/pgf/number format/fixed},
	/pgf/number format/1000 sep={},
    x dir = reverse]
	\addplot+[black, thick, densely dotted, mark = o, mark size = 2pt, mark options=solid, skip coords between index={6}{10}]	table[x index=0,y index=5, col sep=comma]{figures/real_r_sv_price_cor.csv};
	\addplot+[blue, thick, dashed, mark = square, mark size = 2pt, mark options=solid, skip coords between index={6}{10}]	table[x index=0,y index=10, col sep=comma]{figures/real_r_sv_price_cor.csv};
	\addplot+[red, thick, solid, mark = star, skip coords between index={6}{10}]	table[x index=0,y index=15, col sep=comma]{figures/real_r_sv_price_cor.csv}; 
	\end{semilogyaxis}
	\begin{semilogyaxis}[name=plot6,height=2.1in,width=0.44\textwidth,at={($(plot2.east)+(0.5in,0in)$)},anchor=west, title={Uncorrelated Assets},
    legend pos=outer north east, legend entries={MC, ILIS, BLISS},
    x dir=reverse,
	every tick label/.append style={font=\scriptsize},
	axis on top,
	xtick=data,
	xticklabel style={/pgf/number format/fixed},
	/pgf/number format/1000 sep={}]
	\addplot+[black, densely dotted, thick, mark = o, mark size = 2pt, mark options=solid, skip coords between index={6}{10}]	table[x index=0,y index=2, col sep=comma]{figures/real_r_sv_price_uncor.csv};
	\addplot+[blue, thick, dashed, mark = square, mark size = 2pt, mark options=solid, skip coords between index={6}{10}]	table[x index=0,y index=7, col sep=comma]{figures/real_r_sv_price_uncor.csv};
	\addplot+[red, thick, solid, mark = star, mark options=solid, skip coords between index={6}{10}]	table[x index=0,y index=12, col sep=comma]{figures/real_r_sv_price_uncor.csv};
	\end{semilogyaxis}
	\begin{semilogyaxis}[name=plot8,height=2.1in,width=0.44\textwidth,at={($(plot6.south)-(0in,0.2in)$)},anchor=north,
	xlabel={Volatility Multiplier},
	every tick label/.append style={font=\scriptsize},
	axis on top,
	xtick=data,
	xticklabel style={/pgf/number format/fixed},
	/pgf/number format/1000 sep={},
    x dir = reverse]
	\addplot+[black, thick, densely dotted, mark = o, mark size = 2pt, mark options=solid, skip coords between index={6}{10}]	table[x index=0,y index=5, col sep=comma]{figures/real_r_sv_price_uncor.csv};
	\addplot+[blue, thick, dashed, mark = square, mark size = 2pt, mark options=solid, skip coords between index={6}{10}]	table[x index=0,y index=10, col sep=comma]{figures/real_r_sv_price_uncor.csv};
	\addplot+[red, thick, solid, mark = star, skip coords between index={6}{10}]	table[x index=0,y index=15, col sep=comma]{figures/real_r_sv_price_uncor.csv}; 
	\end{semilogyaxis}
	\end{tikzpicture}
    \caption{\label{fig:svR} Performance comparison between MC, ILIS, and BLISS for the real-world example in Section~\ref{subsec:numeric_2} with the ring network when volatility changes}
\end{minipage}
\end{figure}

\section{Additional Theoretical Result}\label{subsec:high vol}
To reflect crisis scenarios characterized by soaring asset volatility, we further evaluate the performance of the BLISS estimator under a large volatility regime. In particular, we scale the external asset volatility by replacing $\bLambda$ with $m\bLambda$, where larger values of $m$ represent periods of greater market stress. Consequently, we replace the function $\ell_n(\cdot)$ in \eqref{eq:lnzn} with
$$
\begin{aligned}
\ell_{n,m}^{\tt C}(\bx)&\coloneqq \frac{1}{m\Lambda_{nn}}\lt(\log S_{n}^0-\frac{m^2\sigma_n^2}{2}-\log\big(v_n(s_m^{\tt C}(\bx))\big)+m\sum_{k=1}^{n-1}\Lambda_{nk}x_k 
\rt)
\end{aligned}
$$
where $s_m^{\tt C}(\bx)$ is an $(n-1)$-dimensional vector whose $i$-th element is $S_i^0\exp(-m^2\sigma_i^2/2+m\sum_{k=1}^i\Lambda_{ik}x_k)$. 

In this regime, external asset values decrease with $m$, and thus, this regime can also be viewed as a small asset regime. 
Consequently, default is no longer a rare event, making the last term in \eqref{eq:decomp} easy to estimate even without the BLISS estimator. Conversely, the complementary probability $\sP(p_n(\bS)=\bar p_n)$, represented by the first two terms in \eqref{eq:decomp}, now becomes a rare-event probability. We therefore focus on estimating this new rare-event probability using a variant of the BLISS estimator, constructed as follows:
\begin{equation}\label{eq:estimator_C}
\Gamma_m^{\tt C}\coloneqq\Phi\big(\ell_{n,m}^{\tt C}(\bZ_{-n,m}^{\tt C})\big)\cL_{\tt out}(\bZ_{-n,m}^{\tt C}),
\end{equation}
where the samples of $\bZ_{-n,m}^{\tt C}\coloneqq(Z_{1,m}^{\tt C},\ldots,Z_{n-1,m}^{\tt C})\sim\cN(\bmu_{-n,m}^{\tt C},\bI_{n-1})$ are drawn via outer-layer importance sampling.
This construction is based on Proposition~\ref{prop:fictitious}, which shows that, conditional on $\bZ_{-n,m}^{\tt C}$, bank $n$'s solvency event $\{p_n(\bS)=\bar p_n\}$ is equivalent to the event $\{Z_n>-\ell_{n,m}^{\tt C}(\bZ_{-n,m}^{\tt C})\}$. The probability of this equivalent event is given by $\Phi\big(\ell_{n,m}^{\tt C}(\bZ_{-n,m}^{\tt C})\big)$ and forms the first term in the estimator \eqref{eq:estimator_C}. Here, the shifted mean vector $\bmu_{-n,m}^{\tt C}$ for the outer-layer importance sampling is chosen in a way similar to the argument in Section~\ref{subsec:large asset} as
\begin{equation}\label{eq:mu_C}
\bmu_{-n,m}^{\tt C}=-\frac{m\sigma^2_n}{2}(\blambda_n\blambda_n^\top+\Lambda_{nn}^2\bI)^{-1}\blambda_n.
\end{equation}
The asymptotic optimality of this new BLISS estimator is established in the following result:
\begin{corollary}[Asymptotic Optimality of $\Gamma_m^{\tt C}$]\label{thm:asympOpt_C}
Suppose that Assumptions~\ref{ass:inverse_demand} and~\ref{ass:balance} hold. Then, the BLISS estimator $\Gamma_m^{\tt C}$ for estimating the solvency probability $\sP(p_n(\bS)=\bar p_n)$ is asymptotically optimal in the large volatility regime.
\end{corollary}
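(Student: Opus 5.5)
The proof will mirror the three-step argument for Theorem~\ref{thm:asympOpt_A}, with the tail function $\bar\Phi$ replaced by $\Phi$ and the scaling now entering through $m$ in both the drift and the volatility. First, bound $\ell_{n,m}^{\tt C}$ deterministically. By the monotonicity \eqref{eq:qp-inc} established in the proof of Proposition~\ref{prop:fictitious}, $v_n(\cdot)$ is nonincreasing. By Assumptions~\ref{ass:inverse_demand} and~\ref{ass:balance}, it lies between $\bar v\coloneqq v_n({\bf 0})$ and $\underline v\coloneqq \bar p_n-Q(e_n)e_n-\sum_{j<n}\pi_{jn}\bar p_j\geq 1$. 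Hence
$$
\frac{-m^2\sigma_n^2/2+c_1+m\blambda_n^\top\bx}{m\Lambda_{nn}}\le \ell_{n,m}^{\tt C}(\bx)\le \frac{-m^2\sigma_n^2/2+c_2+m\blambda_n^\top\bx}{m\Lambda_{nn}}
$$
for constants $c_1=\log S_n^0-\log\bar v$ and $c_2=\log S_n^0-\log \underline v$. Writing $\ell=(-mY'+\blambda_n^\top\bx)/\Lambda_{nn}$ with $Y'=\sigma_n^2/2+O(m^{-2})$, the relevant quantity is $\Phi(\ell)=\bar\Phi(-\ell)$. This has exactly the form treated in Step~1 of the proof of Theorem~\ref{thm:asympOpt_A}, with $Y=mY'/\Lambda_{nn}\to\infty$.

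Second, handle the first moment. Since $\Gamma_m^{\tt C}$ is unbiased, $\widetilde\sE[\Gamma_m^{\tt C}]=\sE[\Phi(\ell_{n,m}^{\tt C}(\bZ_{-n}))]$. Using the lower bound on $\ell$ and projecting $\blambda_n^\top\bZ_{-n}=\|\blambda_n\|Y$ with $Y\sim\cN(0,1)$, I obtain
$$
\widetilde\sE[\Gamma_m^{\tt C}]\ge\int\bar\Phi(\widetilde Y_m+Ay)\phi(y)\rd y,
$$
where $\widetilde Y_m=(m\sigma_n^2/2-c_1/m)/\Lambda_{nn}$ and $A=\|\blambda_n\|/\Lambda_{nn}$. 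By \eqref{eq:JY}, the logarithm of this bound is $\sim-\widetilde Y_m^2/(2+2A^2)$.

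Third, handle the second moment by the change-of-variables chain leading to \eqref{eq:ubd_last}, with $\bmu_m=\bmu_{-n,m}^{\tt C}$. This gives
$$
\widetilde\sE[(\Gamma_m^{\tt C})^2]\le e^{\|\bmu_m\|^2}\int\bar\Phi\big(Y_m-\blambda_n^\top\bmu_m/\Lambda_{nn}+Ay\big)^2\phi(y)\rd y,
$$
where $Y_m=(m\sigma_n^2/2-c_2/m)/\Lambda_{nn}$. The Sherman--Morrison computation \eqref{eq:lammu}, applied with $\log S^0_{n,m}-\kappa_n$ replaced by $m\sigma_n^2/2$, yields
$$
-\blambda_n^\top\bmu_m=\frac{m\sigma_n^2}{2}\Big(1-\frac{\Lambda_{nn}^2}{\sigma_n^2}\Big),\qquad \|\bmu_m\|^2=\Big(\frac{m\sigma_n^2}{2}\Big)^2\Big(1-\frac{\Lambda_{nn}^2}{\sigma_n^2}\Big)\Big/\sigma_n^2 .
$$
Applying \eqref{eq:IY} then gives $\log\widetilde\sE[(\Gamma_m^{\tt C})^2]\lesssim -Y_m^2/(1+A^2)$, exactly as in Step~2, since the algebra is identical to that case once $c_2/m\to0$ is absorbed. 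Because $Y_m/\widetilde Y_m\to1$, it follows that
$$
\liminf_m\frac{\log\widetilde\sE[(\Gamma_m^{\tt C})^2]}{2\log\widetilde\sE[\Gamma_m^{\tt C}]}\ge1 .
$$
Jensen's inequality supplies the reverse bound, which completes the argument.

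The main obstacle is the bookkeeping in the third step. The mean shift is calibrated to the leading term $m\sigma_n^2/2$ rather than to the exact $Y_m$, so I must check that the $O(1/m)$ discrepancies contributed by $c_1,c_2$, and the mismatch between the upper and lower bounds on $v_n$, alter $Y_m^2$ only by $O(1)$. These discrepancies then vanish in the ratio of logarithms. I would also verify that the bounded-function factor is absent here, since $\Gamma^{\tt C}_m$ contains no $h$, and that Mill's-ratio asymptotics apply because the argument $-\ell$ tends to $+\infty$ uniformly on the region dominating the Laplace integral.
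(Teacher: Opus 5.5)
Your plan is the route the paper intends; it omits this proof as ``the same approach as that of Theorem~\ref{thm:asympOpt_A}''. Your first step and your first-moment bound are fine. The second-moment step, however, contains a sign error, and the conclusion depends on it.

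Because $\Phi(\ell)=\bar\Phi(-\ell)$ and $-\ell_{n,m}^{\tt C}(\bx)=\Lambda_{nn}^{-1}\big(m\sigma_n^2/2+O(1/m)-\blambda_n^\top\bx\big)$, substituting $\bz-\bmu_m$ produces $+\blambda_n^\top\bmu_m$. The correct bound is therefore
\[
\widetilde\sE\big[(\Gamma_m^{\tt C})^2\big]\le e^{\|\bmu_m\|^2}\int_{\bbR}\bar\Phi\big(Y_m+\blambda_n^\top\bmu_m/\Lambda_{nn}+Ay\big)^2\phi(y)\,\rd y .
\]
The minus sign in \eqref{eq:ubd_last} came from $\bar\Phi(\ell^{\tt A}_{n,m})$, whose argument carries $+\blambda_n^\top\bx$, so it does not transfer.

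Now note that $(\blambda_n\blambda_n^\top+\Lambda_{nn}^2\bI)^{-1}\blambda_n=\blambda_n/\sigma_n^2$. Hence \eqref{eq:mu_C} is simply $\bmu_m=-\tfrac m2\blambda_n$, and $\blambda_n^\top\bmu_m=-\tfrac m2\|\blambda_n\|^2$. The argument of $\bar\Phi$ is then $\tfrac{m\Lambda_{nn}}{2}+O(1/m)+Ay$. By \eqref{eq:IY}, the exponent is $\tfrac{m^2}{4}\big(\|\blambda_n\|^2-\Lambda_{nn}^4/(\Lambda_{nn}^2+2\|\blambda_n\|^2)\big)$, not $-\tfrac{m^2\sigma_n^2}{4}$.

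This is not slack in the bound. Repeating the computation with $c_1$ in place of $c_2$ gives the same exponent from below. So, with the shift as printed,
\[
\frac{\log\widetilde\sE[(\Gamma_m^{\tt C})^2]}{2\log\widetilde\sE[\Gamma_m^{\tt C}]}\to\frac{\Lambda_{nn}^2-2\|\blambda_n\|^2}{\Lambda_{nn}^2+2\|\blambda_n\|^2}<1\quad\text{whenever }\blambda_n\neq{\bf 0}.
\]
The printed shift lowers $\blambda_n^\top\bZ_{-n}$, and hence $S_n$. It tilts toward default, which is the wrong direction for estimating a solvency probability.

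The repair is in the sign of the mean, not in the method. Redo the minimax surrogate of Section~\ref{subsec:large asset} for the solvency event, whose exponent is $-(m\sigma_n^2/2-\blambda_n^\top\bx)^2/\Lambda_{nn}^2-\|\bx\|^2$. This gives $\bmu_m=+\tfrac{m\sigma_n^2}{2}(\blambda_n\blambda_n^\top+\Lambda_{nn}^2\bI)^{-1}\blambda_n=\tfrac m2\blambda_n$.

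With this choice, $\blambda_n^\top\bmu_m=+\tfrac m2\|\blambda_n\|^2$ enters with the correct plus sign. The argument becomes $\tfrac{m(2\sigma_n^2-\Lambda_{nn}^2)}{2\Lambda_{nn}}+O(1/m)+Ay$, and the rest of your algebra goes through unchanged. The second-moment exponent is $-\tfrac{m^2\sigma_n^2}{4}$, which is twice the first-moment exponent $-\tfrac{m^2\sigma_n^2}{8}$.

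So your argument proves the corollary for the sign-reversed shift. You should say so explicitly and flag \eqref{eq:mu_C}. As written, you reach the optimal exponent only because your sign slip exactly cancels the wrong direction of the printed shift.
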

\begin{proof}
The proof follows the same approach as that of Theorem~\ref{thm:asympOpt_A} and is therefore omitted.
\end{proof}

\bibliography{reference}
\end{document}